\numberwithin{equation}{section}
\newtheorem{theorem}{Theorem}[section]
\newtheorem{proposition}{Proposition}[section]
\newtheorem{corollary}{Corollary}[section]
\theoremstyle{remark}
\newtheorem{rem}{Remark}[section]}
\newtheorem{identity}{Identity}
\newenvironment{identitybis}[1]
  {%
   \addtocounter{identity}{-1}%
   \begin{identity}}
  {\end{identity}}
\newcommand{\tr}{\operatorname{tr}}
\newcommand{\bra}[1]{\langle\,#1\,|}
\newcommand{\ket}[1]{|\,#1\,\rangle}
\newcommand{\moy}[1]{\langle\,#1\,\rangle}
\def\sul{\sum\limits}
\def\pl{\prod\limits}
\def\la{\lambda}
\def\a{\alpha}
\def\b{\beta}
\def\s{\sigma}
\begin{document}

\begin{flushright}
LPENSL-TH-06/15
\end{flushright}

\bigskip

\bigskip

\begin{center}
\textbf{{\Large  On determinant representations of scalar products and form factors in the SoV approach:  the XXX case}}

\vspace{45pt}

\begin{large}

{\bf N.~Kitanine}\footnote[1]{Universit\'{e} de Bourgogne, Institut de Math\'{e}matiques de Bourgogne, UMR 5584 du CNRS, France; Nicolai.Kitanine@u-bourgogne.fr},~~
{\bf J.~M.~Maillet}\footnote[2]{Laboratoire de Physique, UMR 5672
du CNRS, ENS Lyon,  France;
 maillet@ens-lyon.fr},~~
{\bf G. Niccoli}\footnote[3]{Laboratoire de Physique, UMR 5672 du
CNRS, ENS Lyon,  France; giuliano.niccoli@ens-lyon.fr},

{\bf V.~Terras}\footnote[4]{Universit\'e Paris Sud, LPTMS, UMR 8626 du CNRS, 
France; veronique.terras@lptms.u-psud.fr}
\end{large}

\vspace{45pt}

\today

\end{center}

\vspace{45pt}

\begin{abstract}
In the present article we study the form factors of quantum integrable lattice models solvable by the separation of variables (SoV) method. It was recently shown that these models admit universal determinant representations for the scalar products of the so-called separate states (a class which includes in particular all the eigenstates of the transfer matrix). These results permit to obtain simple expressions for the matrix elements of local operators (form factors). However, these representations have been obtained up to now only for the completely inhomogeneous versions of the lattice models considered. In this article we give a simple algebraic procedure to rewrite the scalar products (and hence the form factors) for the SoV related models as Izergin or Slavnov type determinants. This new form leads to simple expressions for the form factors in the homogeneous and thermodynamic limits. To make the presentation of our method clear, we have chosen to explain it first for the simple case of the $XXX$ Heisenberg chain with anti-periodic boundary conditions. We would nevertheless like to stress that the approach presented in this article applies as well to a wide range of models solved in the SoV framework.
\end{abstract}

 \newpage

\section{Introduction}

Quantum integrable systems are ubiquitous in modern theoretical physics appearing both in statistical mechanics and field theory with applications ranging from condensed matter to string theory \cite{Bet31,McCW73L,Bax82L,Gau83L,FadT87L,Mat93L,Sut04L,Fad82,Fad96,Bei12}. They provide unique possibility to obtain non-perturbative and exact results for strongly correlated systems that cannot be obtained by other methods. Besides the computation of spectrum, scattering matrices and partition functions, one of the main challenges in this domain concerns the exact computation of the form factors and correlation functions that connect to measurable physical quantities in these systems. 

The quantum inverse scattering method \cite{SklF78,FadST79,Skl79,KulS82} together with its associated Yang-Baxter \cite{Yan67,Bax82L,Bax71b,Bax71a} and quantum group structures \cite{Jim85,Jim86,KulR81,Dri85,Dri87} provide a powerful framework to tackle such problems. A central object in this approach is played by the so-called quantum monodromy matrix $T(\lambda)$ depending on a continuous complex parameter $\lambda$ whose matrix elements are operators acting on the quantum space of states of the systems of interest $\mathbb{V}$ and satisfy quadratic commutation relations governed by an $R$ matrix solving the Yang-Baxter cubic equation. The main point of the method is the existence of an abelian sub-algebra of operators acting on $\mathbb{V}$,  including the Hamiltonian of the system, generated by the transfer matrix $\mathcal{T}(\lambda)$ constructed algebraically from $T(\lambda)$ and leading, through an expansion in $\lambda$, to a complete (in the sense of characterization of eigenstates) set of conserved operators and (dynamical) symmetries responsible for the integrability of the model at hand. In this framework, the solution of the original spectral problem for the Hamiltonian is embedded in the  $\lambda$-independent resolution of the spectral problem for the transfer matrix $\mathcal{T}(\lambda)$ as a linear operator on $\mathbb{V}$. At this point, within these algebraic settings, several methods can be used to solve this spectral problem. The first methods historically, Bethe ansatz \cite{Bet31} and algebraic Bethe ansatz \cite{SklF78,FadST79,Skl79,KulS82}, have been extensively used with great success in many paradigmatic integrable systems like Heisenberg spin chains or several lattice discretization of integrable field theories like sine-Gordon. It appeared however that for systems lacking an obvious reference state, namely a simple eigenstate of the transfer matrix $\mathcal{T}(\lambda)$ from which the complete space of states $\mathbb{V}$ can be generated by successive actions of matrix elements of the monodromy matrix $T(\lambda)$ on it, a different, somehow more generic, approach should be designed. The quantum separation of variable (SoV) method initiated by Sklyanin \cite{Skl85,Skl95,Skl90,Skl92} provides such a powerful tool for computing the spectrum and the eigenstates of quantum integrable systems, especially for systems lacking such a reference state. Sklyanin pioneering works on SoV have been presented for fundamental examples of integrable quantum lattice models, e.g. the spin 1/2 XXX chains and the Toda chains, after them several contributions have brought to further develop of the SoV method and nowadays several classes of integrable quantum models are proven to admit a description in the framework of the Sklyanin's SoV method \cite{NieWF09,FraSW08,BabBS96,BabBS97,Smi98,Bab04,vonGIPS06,vonGIPST07,vonGIPST08,vonGIPS09,DerKM03b,DerKM03,DerKM01,BytT06,NicT10,Nic10a,Nic11,GroN12,Nic12,Nic13,Nic13a,Nic13b,FalN14,FalKN14,NicT15}.
One very important feature of the SoV method is that it provides not only the equations determining the spectrum of the transfer matrix but also the proof of its completeness and the construction of the corresponding eigenstates; this is in contrast to the Bethe ansatz or algebraic Bethe ansatz approach where the proof of completeness is in general a non trivial task \cite{TarV95,MukTV09}. 

For all these models, one is of course interested to go beyond the knowledge of the spectrum properties so as to reach their dynamical behaviour  through the computation of their form factors (matrix elements of local operators in the eigenstates basis of the transfer matrix $\mathcal{T}(\lambda)$) and correlation functions. Several progresses in this direction have been achieved in the recent years, in particular for solvable lattice models \cite{Smi92L,BogIK93L,JimM95L,KitMT99,KitMT00,KitMST02a,KitMST05b,BooJMST07,KitKMNST07,KitKMNST08,BooJMST09}. As a matter of fact, the computation of form factors and correlation functions needs an explicit representation of the local operators in the eigenstate basis of the transfer matrix $\mathcal{T}(\lambda)$ together with manageable formulas for the computation of the resulting scalar products of states. While the first problem can be solved within the quantum inverse scattering method by essentially expressing local operators in terms of the quantum monodromy matrix entries, hence solving effectively the so-called quantum inverse scattering problem \cite{KitMT99,MaiT00}, the answer to the second question was up to now strongly dependent on the method used to describe the spectrum and eigenstates of the system. 

In the context of algebraic Bethe ansatz the computation of form factors and correlation functions \cite{IzeKMT99,KitMT99,MaiT00,KitMST02a,KitMST05a} was performed in settings  sufficiently explicit   to compute their critical behaviour  \cite{KitKMST09b,KitKMST09c,KitKMST11b,KitKMST11a,KitKMT14} and make explicit contact (at operator level) with conformal field theories \cite{KozM15}.  All this program was finally pushed forward to time and temperature dependent correlation functions \cite{KozMS10u,KozMS11a,KozT11,KitKMST12,DugGK13,DugGK14}, not mentioning clear contact with structure factors accessible in particular through neutron scattering experiments on magnetic crystals \cite{CauHM05,CauM05}. An essential technical feature in all these results was the appearance of rather sophisticated  determinant formulas, in particular for representations of certain partition functions \cite{Ize87},  scalar products of states and form factors \cite{Sla89,KitMT99}, in a form suitable for their analysis in the thermodynamical limit one is usually interested in \cite{KitKMST09a,KitKMST09b,KitKMST09c,KitKMST10u,KitKMST11a,KitKMST11b,KitKMST12}. 

In the SoV approach, this programme is up to now slightly less developed due to peculiar technical features. To explain this point in more detail, let us recall that in the SoV approach one needs first to identify a diagonalizable  operator, say $S(\lambda)$, computed from the monodromy matrix $T(\lambda)$ entries, having simple spectrum, in such a way that the multidimensional spectral problem for the transfer matrix $\mathcal{T}(\lambda)$ separate in the $S(\lambda)$ eigenstate basis into multiple one dimensional spectral problems, hence leading to its solution. In the example of one dimensional lattice models, the space of states $\mathbb{V}$ is realized as a tensor product of local quantum space of states $V_n$ at sites $n$ of the lattice, $n=1, \dots N$, $\mathbb{V} = \otimes_{n=1}^{N} V_n$. In this case, in order to find a suitable operator $S(\lambda)$ with simple spectrum, one is often led to introduce inhomogeneity parameters $\xi_n$ attached to each site $n$ of the lattice and to let them be in generic positions in the complex plane. This has no effect on the integrability properties of the model and this is somehow a standard method used in  integrable systems; namely to solve a given model by embedding it in a larger class of models depending on continuous parameters, here the $\xi_n$, and use these new variables to obtain simpler resolution process.  All computations are then performed for this inhomogeneous integrable model, the homogeneous limit being taken only at the very end on the quantities one is interested in. It has been first shown in \cite{GroMN12} that within such an SoV framework, the scalar product of separate states and even the form factors of local operators admit simple determinant representations written in particular in terms of Baxter $Q$ functions. The key role in these representations is played by a Vandermonde determinant and its various (straightforward) dressing, making the computation of scalar products in the SoV framework much simpler and transparent compared to the one obtained in the algebraic Bethe ansatz settings. Such representations have quite universal character and properties as it has been shown in many following examples \cite{GroMN14,LevNT15,Nic12,Nic13,Nic13a,Nic13b}. Moreover it can be anticipated that such features will generalize to even more complicated systems associated to higher rank quantum groups, making the SoV approach even more appealing. The power and applicability of all this method  however requires that the determinant representations for the quantities like the form factors or correlation functions are simple or explicit enough in terms of the inhomogeneity parameters $\xi_n$ such that the required homogeneous limits are non ambiguous. In particular one would like to avoid  implicit zero over zero expressions in the homogeneous limit that could just spoil the simplicity and benefits obtained within the SoV method. However, it has been observed already in \cite{GroMN12,GroMN14}, that the homogeneous limit of the determinants for scalar products or form factors of local operators obtained by suitable dressings of the Vandermonde determinant do not have obvious homogeneous limit although there is no doubt that such a limit should, by construction, be well defined. 

It is the purpose of the present article to show that the rather universal determinant representations of scalar products of separate states obtained in the framework of SoV can be generically recast in a different form allowing for an obvious homogeneous limit, hence opening the way to use extensively this approach to tackle dynamical properties of quantum integrable systems. More precisely, it will be shown that the dressed Vandermonde determinants obtained in the SoV approach for scalar products are equal to certain Izergin-type determinants, making the above discussed homogeneous limits trivial. These representations are shown in their turn to be equal to generalized Slavnov-type determinants if one of the two separate states is defined by a solution to equations of Bethe ansatz type. All these equalities between determinants are obtained through purely algebraic identities, hence making their appearance quite universal. For simplicity and also for making the correspondence between SoV results and algebraic Bethe ansatz ones explicit, we made the choice to explain the essential features of our results in a very elementary example : the $XXX$ anti-periodic chain, in this case these formulae are reminiscent of those obtained in \cite{Kos120,Kos12} for the $XXX$ periodic chain. 
Recently several authors studying in particular the properties of the Slavnov formula for the XXX model discovered a relation of this type of determinant with Vandermonde type representations \cite{KosM12,FodW12}. There are also some more complicated relations observed for the XXZ chain \cite{Garb14}.
We would like to stress however that the method presented in this paper seems quite universal leading to similar formulas for more complicated cases, like for $XXZ$ or $XYZ$ Heisenberg chains, even in the presence of generic integrable boundaries,  that will be given in a separate publication. 

The article is organized as follows. In section 2 we recall the basics of the SoV method for the anti-periodic $XXX$ model. In section 3 we describe the determinant formulas for scalar products of separate states that include all eigenstates of the transfer matrix. We then derive the main identities and equivalent representations of these scalar products in terms of Izergin-type and generalized Slavnov-type determinants when one of the state is an eigenstate of the transfer matrix. Similar formulas are then obtained for form factors of local operators in section 4. In  section 5 we show the equivalence of this SoV description with the one obtained from algebraic Bethe ansatz, making use of the fact that by an explicit and simple change of basis the anti-periodic $XXX$ model can be recast in a form allowing its resolution by algebraic Bethe ansatz.

\section{The antiperiodic XXX Heisenberg chain in the SoV framework}

In this section we introduce the XXX Heisenberg chain with antiperiodic boundary conditions and recall the solution of the inhomogeneous version of this model by means of Sklyanin's Separation of Variables approach \cite{Skl85,Skl90,Skl92,Skl95}. Within the framework of this approach, the eigenvalues and eigenstates of the transfer matrix are completely characterized by the solutions of a system of discrete equations involving the inhomogeneity parameters of the model, which can be understood as discrete versions of Baxter's famous $T$-$Q$ equation \cite{Bax82L}.
We show here that this SoV discrete characterization of the transfer matrix spectrum and eigenstates can be reformulated in terms of polynomial solutions of the continuous (i.e., functional) $T$-$Q$ equation.
This enables us to obtain a complete description of the spectrum in terms of the solutions of a system of Bethe-type  equations, and to rewrite the corresponding eigenvectors in an ABA-type form, which  we expect to be more convenient for the consideration of both the homogeneous and thermodynamic limits of the model.

\subsection{The antiperiodic XXX Heisenberg chain}

In this paper we consider the XXX Heisenberg chain of spin $1/2$,
\begin{equation}\label{Ham}
H=\sum_{n=1}^N \left(\sigma_n^x\sigma_{n+1}^x+\sigma_n^y\sigma_{n+1}^y+\sigma_n^z\sigma_{n+1}^z-1\right), \qquad
\end{equation}
with antiperiodic boundary conditions,
\begin{equation}
   \sigma_{N+1}^a= \sigma_1^x\,\sigma_{1}^a\,\sigma_1^x,\qquad a=x,y,z.
\end{equation}
The Hamiltonian \eqref{Ham} acts on a $2^N$-dimensional  quantum space $\mathbb{V}=\otimes_{n=1}^N V_n$, with $V_n\simeq\mathbb{C}^2$.
Here and in the following, $\sigma_n^a$, $a=x,y,z$, stand for the Pauli matrices at site $n$ i.e., acting on the local quantum spin space $V_n$.

The $R$-matrix of the model corresponds to the rational solution of the Yang-Baxter equation,
\begin{equation}\label{mat-R}
R(\lambda )= \left( 
\begin{array}{cccc}
\lambda +\eta & 0 & 0 & 0 \\ 
0 & \lambda & \eta & 0 \\ 
0 & \eta & \lambda & 0 \\ 
0 & 0 & 0 & \lambda +\eta
\end{array}
\right) ,
\end{equation}
where $\lambda$ is the so-called spectral parameter and $\eta$ is an arbitrary non-zero complex parameter.
The monodromy matrix of the XXX spin-1/2 chain 
is defined as the following ordered product of $R$-matrices,
\begin{equation}\label{mon}
T_0(\la)=R_{0N}(\la-\xi_N)\dots R_{01}(\la-\xi_1)
=\begin{pmatrix} A(\la)&B(\la)\\
C(\la)&D(\la)\end{pmatrix}_{\! [0]}.
\end{equation}
This is an operator which acts on the tensor product of a two-dimensional auxiliary space $V_0\simeq\mathbb{C}^2$ with the $2^N$-dimensional quantum space $\mathbb{V}$ of the model.
We have used here the standard notation which may label by indices (at least when it is not clear by the context) the space(s) on which the corresponding operator acts in a non-trivial way.
Note that we have introduced in \eqref{mon} a set of inhomogeneity parameters $\xi_j$, $1\le j \le N$, so that $T_0(\lambda)$ \eqref{mon} corresponds in fact to the monodromy matrix of an inhomogeneous generalization of the XXX spin chain. These inhomogeneity parameters are crucial for the SoV study of the model.

The transfer matrix of the model with antiperiodic boundary conditions can be defined as follows
\begin{equation}\label{transfer}
\mathcal{T}(\la)= \tr_0 \left(\sigma_0^x\, T_0(\la)\right)=B(\la)+C(\la).
\end{equation}
This equation defines a one-parameter family of commuting operators. It is important to underline that the transfer matrix is a polynomial in $\la$ of degree $N-1$.
It is also easy to observe\footnote{See the proof of Theorem \ref{th-ffz+} for an explicit derivation of these symmetries.} that the transfer matrix satisfies the following simple symmetries,
\begin{equation}
\label{x-symmetries}
\left[S^x,\mathcal{T}(\la)\right]=0,\qquad \left[\Gamma^x,\mathcal{T}(\la)\right]=0,
\end{equation}
where 
\begin{equation}
\label{x-symmetries1}
S^x=\sum_{n=1}^N \sigma^x_n, \qquad \Gamma^x=\mathop{\otimes}\limits_{n=1}^N\sigma^x_n=(-i)^N\exp\left(\frac{i\pi}2 S^x\right).
\end{equation}
In the homogeneous limit ($\xi _{n}\rightarrow 0$), the
Hamiltonian \eqref{Ham} of the XXX spin quantum chain with antiperiodic boundary conditions is obtained as a logarithmic derivative of the antiperiodic transfer matrix \eqref{transfer}:
\begin{equation}
\left. H=\mathcal{T}(\lambda )^{-1}\frac{d}{d\lambda }\mathcal{T}%
(\lambda )\right\vert _{\lambda =0}.  \label{ARXFham}
\end{equation}
Finally, let us recall the expression for the quantum determinant of the monodromy matrix, which is a central element of the Yang-Baxter algebra,
\begin{equation}\label{qdet}
\det_q T_0(\la)=B(\la)C(\la-\eta)- A(\la)D(\la-\eta)=-a(\la)\, d(\la-\eta),
\end{equation}
with
\begin{equation}\label{a-d}
a(\la)=\prod_{n=1}^N(\la-\xi_n+\eta),\qquad d(\la)=\prod_{n=1}^N(\la-\xi_n).
\end{equation}

\subsection{Separation of variables for the XXX spin chain}

The functional version of the separation of variables solution of the spin-1/2 XXX spin chain leading to the diagonalization of the transfer matrix \eqref{transfer} comes back to the early works of Sklyanin \cite{Skl90,Skl92}. Following \cite{Nic13}, here we briefly recall the construction of the SoV basis of the space of states as well as the characterization of the transfer matrix spectrum and eigenstates which follow from this approach.

Starting from this point we will always suppose that the inhomogeneity parameters $\xi_a$, $1\le a \le N$, are generic, or at least that they satisfy the condition
\begin{equation}
\xi_a\neq \xi_b\pm h\eta \quad \text{for}\quad h\in\{0,1\},\qquad \forall a\neq b,
\label{generic}
\end{equation}
which ensures the validity of the SoV approach (see \cite{Skl90,Skl92}).

The SoV diagonalization of the transfer matrix \eqref{transfer} relies on the construction of a basis of the space of states, that we shall call SoV basis, which for the antiperiodic case can be chosen as the one diagonalizing the action of the operator $D(\lambda)$ of the monodromy matrix. In this basis, the action of the operators $B(\lambda)$ and $C(\lambda)$, and hence of the antiperiodic transfer matrix \eqref{transfer}, happens to be quasi-local. In the case of the XXX spin-1/2 chain, the construction of such a basis can be explicitly realized in the quantum space $\mathbb{V}$ (respectively its dual space) by a multiple action of operators $B(\xi_j)$ on the reference state (respectively   of operators $C(\xi_j)$ on the dual reference state).
In fact, as noticed in \cite{Ter99}, this SoV basis is nothing else that the so-called $F$-basis introduced in \cite{MaiS00}.

Concretely, let us define, for each $N$-tuple $\mathbf{h}\equiv(h_1,\ldots,h_N)$, a state $\ket{\mathbf{h}}$ in the quantum space $\mathbb{V}$ and a state $\bra{\mathbf{h}}$ in the dual quantum space as
\begin{align}
   & \ket{ \mathbf{h} } 
   = \frac 1{V(\{\xi\})}\prod_{n=1}^{N}\left( \frac{B(\xi _{n})}{a(\xi _{n})}\right) ^{\! h_{n}} \ket{0},
   \label{r-h-state} \\
   &\bra{\mathbf{h} }=\frac 1{V(\{\xi\})}\bra{ 0}\prod_{n=1}^{N}\left( \frac{C(\xi _{n})}{d(\xi _{n}-\eta )}\right) ^{\! h_{n}}, \label{l-h-state}
\end{align}
where $\ket{0}$ and $\bra{0}$ stand respectively for the ferromagnetic reference state with all the spins up (right reference state) and for its dual state (left reference state):
\begin{equation}\label{ref-state}
  \ket{0}= \mathop{\otimes}\limits_{n=1}^N
  \left(\begin{array}{c} 1\\ 0\end{array}\right)_{\! [n]},
  \qquad
  \bra{0}= \mathop{\otimes}\limits_{n=1}^N (1,0)_{[n]}.
\end{equation} 
For convenience, the normalization factor in \eqref{r-h-state} and \eqref{l-h-state}  is chosen to be  a Vandermonde determinant of the inhomogeneity parameters of the model:
\begin{equation}\label{VdMxi}
V(\{\xi\})=\prod_{1\leq b<a\leq N}(\xi _{a}-\xi_{b}).
\end{equation}

The condition \eqref{generic} on the inhomogeneity parameters of the model ensures that the operator $D(\lambda)$ is diagonalizable and has simple spectrum.
It can easily be shown that  the $2^N$ states $ \ket{ \mathbf{h} } $ constitute a complete set of eigenstates for this operator:
\begin{equation}\label{eigen-Dr}
D(\la) \ket{ \mathbf{h} }= d_{\mathbf{h}}(\lambda) \ket{\mathbf{h} },
\end{equation}
and similarly,
\begin{equation}\label{eigen-Dl}
 \bra{ \mathbf{h} } D(\lambda )=d_{\mathbf{h}}(\lambda) \bra{ \mathbf{h}}, 
\end{equation}
where
\begin{equation}
d_{\mathbf{h}}(\lambda)=\prod_{n=1}^N(\la-\xi_n+h_n\eta) .
\end{equation}

Moreover, the basis of the quantum space built from the vectors \eqref{r-h-state} and the basis of the dual quantum space built from the vectors \eqref{l-h-state} are orthogonal. More precisely, with the chosen normalization, the scalar product of a state $\ket{\mathbf{h}}$ of the form \eqref{r-h-state} and of a state $\bra{\mathbf{k}}$ of the form \eqref{l-h-state} is given by
\begin{equation}\label{sp-kh}
    \moy{\mathbf{k}\, |\, \mathbf{h} }
    = \frac{ \delta _{\mathbf{k},\mathbf{h}} }{V(\{\xi\})\, V(\{\xi-h\eta\})},
\end{equation}
where $V(\{\xi\})$ stands for the Vandermonde determinant \eqref{VdMxi} associated with the set of inhomogeneity parameters $\xi_j$, $1\le j\le N$, whereas $V(\{\xi-h\eta\})$ stands for the Vandermonde determinant associated with the set of shifted inhomogeneity parameters $\xi_j-h_j\eta$, $1\le j\le N$.
Hence we have the following decomposition  of the identity $\mathbb{I}$ on the quantum space $\mathbb{V}$:
\begin{equation}\label{decomp-Id}
\mathbb{I} =  V(\{\xi\})\sum_{\mathbf{h}\in\{0,1\}^N}   V(\{\xi-h\eta\})\, \ket{\mathbf{h}}\bra{  \mathbf{h}}.  
\end{equation}

From the SoV basis \eqref{r-h-state} or \eqref{l-h-state} we define as in \cite{Nic13} what we call {\em separate states}. These are states which admit a particular factorized form when expressed in these basis, so that the computation of their scalar product will follow quite straightforwardly from \eqref{sp-kh}. More precisely, these are states which can be written in the form
\begin{align}
   \bra{ \alpha } &=\sum_{\mathbf{h}\in\{0,1\}^N}\prod_{a=1}^{N}\alpha(\xi _{a}-h_a\eta)\ 
                                V(\{\xi-h\eta\})\, \bra{\mathbf{h}}, 
                                \label{left_sep-state} \\ 
   \ket{\beta } & =\sum_{\mathbf{h}\in\{0,1\}^N}\prod_{a=1}^{N}\beta(\xi _{a}-h_a\eta)\ 
                                V(\{\xi+h\eta\})\, \ket{\mathbf{h}} ,\label{right_sep-state}
\end{align}
for any function $\alpha$ or $\beta$. 
It is important to mention here that such states are defined, up to a global normalization, only by the $N$ ratios $\a(\xi_j-\eta)/\a(\xi_j)$ (or $\b(\xi_j-\eta)/\b(\xi_j)$), $1\le j\le N$. This means in particular that many different functions $\alpha$ may lead to the same separate state. 

For the effective computation of the scalar product between states of the form \eqref{left_sep-state} and states of the form \eqref{right_sep-state}, it is convenient to rewrite  \eqref{right_sep-state} with the same Vandermonde determinant as in  \eqref{left_sep-state} by means of the property
\begin{equation}
\label{Vandermonde_trick}
(-1)^N\prod_{n=1}^{N}\left(  \prod_{m=1}^{N}\frac{\xi_{n}-\xi_{m}+\eta}{\xi_{n}-\xi_{m}-\eta}\right)^{h_{n}}V(\{\xi-h\eta\})=V(\{\xi+h\eta\}),
\end{equation}
which can be proven by direct computation.
Then the right separate state  \eqref{right_sep-state} can alternatively be written as
\begin{equation}
   \ket{\beta }  =\sum_{\mathbf{h}\in\{0,1\}^N}\prod_{a=1}^{N}\bar\beta(\xi _{a}-h_a\eta)\ 
                                V(\{\xi-h\eta\})\, \ket{\mathbf{h}} ,\label{right_sep_bar}
\end{equation}
where the function $\bar{\beta}$ is such that
\begin{equation}
\label{betabar}
\bar{\beta}(\xi_n)=\beta(\xi_n),\qquad 
\bar{\beta}(\xi_n-\eta)=-\frac{a(\xi_n)}{d(\xi_n-\eta)}\,\beta(\xi_n-\eta).
\end{equation}

The eigenstates of the antiperiodic transfer matrix \eqref{transfer} happen to be particular cases of separate states, associated with some function $Q(\lambda)$ with ratios $Q(\xi_j-\eta)/Q(\xi_j)$, $1\le j\le N$, fixed by the corresponding eigenvalue of the transfer matrix. More precisely, we can formulate the following theorems, which were proved in \cite{Skl92} and in \cite{Nic13} (for a more general XXZ case).

\begin{theorem}
\label{SoV-spectrum}
Let the inhomogeneity parameters satisfy the condition \eqref{generic}.
Then the antiperiodic transfer matrix $\mathcal{T}(\la)$ \eqref{transfer} has a simple spectrum, and a function $\tau(\la)$ is an eigenvalue of $\mathcal{T}(\la)$ if and only if 
\begin{enumerate}
\item it is a polynomial of degree $N-1$,
\item it satisfies the following set of quadratic discrete equations:
\begin{equation}\label{dis-sys}
\tau(\xi_n)\, \tau(\xi_n-\eta)+a(\xi_n)\, d(\xi_n-\eta)=0,\qquad \forall n\in\{1,\dots,N\}.
\end{equation}
\end{enumerate} 
\end{theorem}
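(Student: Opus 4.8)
The plan is to run Sklyanin's separation-of-variables argument, converting the operator spectral problem for $\mathcal{T}(\la)$ \eqref{transfer} into a family of one-site problems in the SoV basis \eqref{r-h-state}--\eqref{l-h-state}. First I would dispose of the polynomial structure, which is essentially free: since $\mathcal{T}(\la)=B(\la)+C(\la)$ and the monodromy entries $B,C$ are polynomials of degree $N-1$ in $\la$, any common eigenvector $\ket{\psi}$ of the commuting family $\{\mathcal{T}(\la)\}$ has eigenvalue $\tau(\la)=\bra{e}\mathcal{T}(\la)\ket{\psi}/\braket{e}{\psi}$ (for any fixed $\bra{e}$ with $\braket{e}{\psi}\neq 0$), manifestly a polynomial of degree $\le N-1$, the leading coefficient fixing the degree to exactly $N-1$. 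This settles condition~1 and the ``only if'' half of the polynomiality claim.

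The core is the reduction to \eqref{dis-sys}. In the SoV basis $D(\la)$ is diagonal with eigenvalue $d_{\mathbf h}(\la)=\prod_n(\la-\xi_n+h_n\eta)$ \eqref{eigen-Dr}, so the separated variable attached to site $n$ sits at $\xi_n$ when $h_n=0$ and at $\xi_n-\eta$ when $h_n=1$. By construction $B(\xi_n)$ raises $h_n$ with coefficient $a(\xi_n)$ and annihilates the states with $h_n=1$; dually, $C(\xi_n)$ lowers $h_n$ and annihilates those with $h_n=0$, so that $\mathcal{T}(\xi_n)$ flips only the $n$-th variable. Expanding $\ket{\psi}=\sum_{\mathbf h}\psi_{\mathbf h}\ket{\mathbf h}$ and imposing the eigenvalue equation at the two separated points, $\mathcal{T}(\xi_n)\ket{\psi}=\tau(\xi_n)\ket{\psi}$ and $\mathcal{T}(\xi_n-\eta)\ket{\psi}=\tau(\xi_n-\eta)\ket{\psi}$, yields for fixed values of the remaining $h_m$ a two-term recursion between the components $h_n=0$ and $h_n=1$, in which both values $\tau(\xi_n)$ and $\tau(\xi_n-\eta)$ enter precisely because the two components are anchored at the two points. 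Eliminating the wavefunction ratio between these relations produces the scalar constraint \eqref{dis-sys}, whose inhomogeneous term $a(\xi_n)d(\xi_n-\eta)$ is exactly the quantum determinant \eqref{qdet} evaluated at $\la=\xi_n$.

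Conversely, I would prove sufficiency by explicit construction. Given a degree-$(N-1)$ polynomial $\tau$ solving \eqref{dis-sys}, the per-site quadratic relation guarantees that the two-term recursion closes with a nonzero solution at each site; assembling these factors into the factorized separate-state form \eqref{right_sep-state} gives a candidate $\ket{\beta}$, and one checks $\mathcal{T}(\la)\ket{\beta}=\tau(\la)\ket{\beta}$ by verifying it only at the $N$ points $\la=\xi_n$, which suffices since both sides are polynomials of degree $\le N-1$ in $\la$ agreeing at $N$ points. Simplicity of the spectrum and completeness then follow from counting: under \eqref{generic}, \eqref{dis-sys} is a system of $N$ quadratic equations for the $N$ values $\tau(\xi_n)$ (with each $\tau(\xi_n-\eta)$ re-expressed by interpolation), with generically $2^N$ distinct solutions; the $2^N$ eigenvectors just built are linearly independent, as their SoV components form a nondegenerate Vandermonde-type array, hence span $\mathbb{V}$, forcing each eigenvalue to be simple.

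The hard part will be the clean reduction in the second paragraph. The SoV basis is adapted to $B$ on the kets \eqref{r-h-state} and to $C$ on the bras \eqref{l-h-state}, so one of $B(\xi_n),C(\xi_n)$ always acts against the grain and its matrix elements come dressed with ratios of the Vandermonde factors $V(\{\xi-h\eta\})$ recorded in \eqref{sp-kh}. Arranging the computation so that these dressings cancel and the two separated points combine into the symmetric product $\tau(\xi_n)\,\tau(\xi_n-\eta)$ -- rather than a spurious $\tau(\xi_n)^2$ carrying a leftover Vandermonde ratio, as a naive single-point evaluation would suggest -- is the delicate step, and it is exactly here that centrality of the quantum determinant \eqref{qdet} does the work. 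The completeness count is a secondary obstacle, where \eqref{generic} is needed to exclude coincidences among the solutions of \eqref{dis-sys}.
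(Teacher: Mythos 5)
The paper does not actually prove this theorem in the text (it defers to \cite{Skl92} and \cite{Nic13}), and your proposal follows the same SoV route as those references: polynomiality from $\deg B=\deg C=N-1$, a two-term recursion on the SoV components whose elimination produces \eqref{dis-sys} through the quantum determinant, and an explicit factorized eigenvector for the converse. The architecture is right, but two of your concrete claims fail. First, the mechanism you state for the separation is incorrect: $B(\xi_n)$ does \emph{not} annihilate the states with $h_n=1$. Already for $N=2$ one computes $B(\xi_1)^2\,\ket{0}=\eta^3(\xi_1-\xi_2+\eta)\,\sigma_1^-\sigma_2^-\ket{0}\neq 0$, so $B(\xi_1)\ket{(1,0)}\neq 0$. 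The correct simple actions pair each operator with a \emph{state-dependent} evaluation point: $B(\xi_n)$ raises $h_n$ on the $h_n=0$ kets, $B(\xi_n-\eta)$ annihilates the $h_n=1$ kets, $C(\xi_n)$ annihilates the $h_n=0$ kets, and $C(\xi_n-\eta)$ lowers $h_n$ on the $h_n=1$ kets; these follow from the $D$--$B$ and $D$--$C$ exchange relations (the putative image would be a $D$-eigenvector with an eigenvalue outside the spectrum, hence zero, under \eqref{generic}), with the lowering coefficient fixed by the quantum determinant \eqref{qdet}. Consequently $\mathcal{T}(\xi_n)$ is a clean flip of the $n$-th variable only on the $h_n=0$ half of the basis, and the recursion must be extracted by pairing the eigenvalue equation at $\la=\xi_n-k_n\eta$ with the dual covector $\bra{\mathbf k}$; done this way the elimination does yield \eqref{dis-sys} exactly as you anticipate, but as written your projection of $\mathcal{T}(\xi_n)\ket{\psi}=\tau(\xi_n)\ket{\psi}$ onto the $h_n=1$ components produces uncontrolled extra terms.

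Second, the simplicity argument by counting is not a proof: B\'ezout bounds the number of solutions of the $N$ quadrics \eqref{dis-sys} by $2^N$ only from above, and ``generically $2^N$ distinct solutions'' does not cover all $\{\xi\}$ satisfying \eqref{generic}. One-dimensionality of each eigenspace should instead be read off from the recursion itself, which determines every component of an eigenvector from a single one and shows that all components are nonzero (this nonvanishing is also what legitimizes dividing by the wavefunction ratio in your elimination step); diagonalizability of the family, if one insists on the strong reading of ``simple spectrum'', needs a further argument, e.g.\ running the same recursion on a putative rank-two generalized eigenvector. A last small point: the leading coefficient of $\mathcal{T}(\la)$ is $\eta\, S^x\la^{N-1}$, which vanishes on the $S^x=0$ sector when $N$ is even, so your claim that the leading coefficient fixes the degree to be exactly $N-1$ is also not correct; condition \emph{1} must be read as ``degree at most $N-1$''.
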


\begin{theorem}\label{th-eigenvector}
The left eigenstate of the transfer matrix $\mathcal{T}(\la)$ corresponding to the eigenvalue $\tau(\la)$ is a left separate state $\bra{Q}$ of the form \eqref{left_sep-state} for a function $Q(\la)$ satisfying the following set of conditions for all $n\in\{1,\dots,N\}$:
\begin{gather}
\tau(\xi_n)\, Q(\xi_n)+a(\xi_n)\, Q(\xi_n-\eta)=0,  \label{LeftTQxi}\\
\big(Q(\xi_n),Q(\xi_n-\eta)\big)\not= (0,0).\label{condQ-2}
\end{gather}
Similarly the right eigenstate with eigenvalue $\tau(\lambda)$ is a  right separate  state of the form \eqref{right_sep-state} for the same function $Q(\la)$. 
\end{theorem}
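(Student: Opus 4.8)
The plan is to build on Theorem~\ref{SoV-spectrum}: since the spectrum of $\mathcal{T}(\la)$ is simple, it suffices, for a fixed eigenvalue $\tau(\la)$, to produce one non-zero left separate state annihilated by $\mathcal{T}(\la)-\tau(\la)\,\mathbb{I}$, and simplicity will then guarantee that it is the sought eigenstate. I would therefore take the ansatz \eqref{left_sep-state} with $\alpha=Q$ and determine the admissible $Q$ by imposing $\bra{Q}\mathcal{T}(\la)=\tau(\la)\bra{Q}$, remembering that a separate state depends only on the ratios $Q(\xi_j-\eta)/Q(\xi_j)$. The first step is to convert this functional identity into finitely many numerical conditions. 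Both $\bra{Q}\mathcal{T}(\la)$ and $\tau(\la)\bra{Q}$ are covector-valued polynomials of degree $N-1$ in $\la$, so they coincide for all $\la$ precisely when they coincide at the $N$ points $\la=\xi_n$; expanding $\mathcal{T}(\la)$ on the Lagrange basis attached to $\{\xi_n\}_{n=1}^N$ shows that the eigenvalue equation is equivalent to the $N$ relations $\bra{Q}\mathcal{T}(\xi_n)=\tau(\xi_n)\bra{Q}$.

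The technical heart is then the evaluation of $\mathcal{T}(\xi_n)=B(\xi_n)+C(\xi_n)$ on the SoV covectors $\bra{\mathbf h}$ of \eqref{l-h-state}. I would use that the $\bra{\mathbf h}$ diagonalise $D(\la)$, the reference-state identities $\bra{0}B(\la)=0$, $\bra{0}A(\la)=a(\la)\bra{0}$, $\bra{0}D(\la)=d(\la)\bra{0}$, and the Yang--Baxter commutation relations together with the quantum determinant \eqref{qdet}. Appending $C(\xi_n)$ simply shifts $h_n$ from $0$ to $1$ with coefficient $d(\xi_n-\eta)$ (and annihilates the components with $h_n=1$), whereas commuting $B(\xi_n)$ through the string of $C(\xi_m)$'s defining $\bra{\mathbf h}$ produces, besides the term that lowers $h_n$ when $h_n=1$, contributions proportional to $A(\xi_n)$ and $D(\xi_n)$; the key simplification is that the eigenvalue $d_{\mathbf h}(\xi_n)$ of $D(\xi_n)$ vanishes on the components with $h_n=0$, so that only one quasi-local channel survives and $\mathcal{T}(\xi_n)$ ends up shifting the single variable $h_n$ between its two admissible values. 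Carrying out this reduction cleanly --- isolating the surviving $A(\xi_n)$-mediated shift, collecting its coefficient into $a(\xi_n)$ of \eqref{a-d}, and controlling the shifted Vandermonde normalisations (which is exactly where \eqref{Vandermonde_trick} enters) --- is the step I expect to be the main obstacle.

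Granting the quasi-local action, the eigenvalue equation decouples in the separated variables into $N$ identical one-variable relations. Substituting the factorised coefficients $\prod_a Q(\xi_a-h_a\eta)\,V(\{\xi-h\eta\})$ of \eqref{left_sep-state}, the Vandermonde prefactors cancel against the coefficients produced by the action and each relation collapses to Baxter's $T$--$Q$ equation at the point $\xi_n$, namely $\tau(\xi_n)Q(\xi_n)+a(\xi_n)Q(\xi_n-\eta)=0$, i.e. \eqref{LeftTQxi}; the quantum determinant \eqref{qdet}, which underlies the discrete equations \eqref{dis-sys} of Theorem~\ref{SoV-spectrum}, is what makes the two shift channels mutually compatible. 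Finally, genericity \eqref{generic} forces $a(\xi_n)$ and $d(\xi_n-\eta)$ to be non-zero, so \eqref{dis-sys} gives $\tau(\xi_n)\neq 0$; normalising $Q(\xi_n)=1$ then yields $Q(\xi_n-\eta)=-\tau(\xi_n)/a(\xi_n)\neq 0$, which establishes \eqref{condQ-2} and $\bra{Q}\neq 0$. The right eigenstate is obtained by the same computation applied to the vectors $\ket{\mathbf h}$ of \eqref{r-h-state}, or equivalently by the transposition symmetry exchanging the roles of $B(\la)$ and $C(\la)$, and leads to the same function $Q$.
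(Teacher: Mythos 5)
Your strategy is the standard SoV argument and coincides with the one in the references to which the paper itself defers (the paper states this theorem without proof, citing \cite{Skl92} and \cite{Nic13}): reduce the eigenvalue equation to the $N$ points $\xi_n$ using that both $\bra{Q}\mathcal{T}(\la)$ and $\tau(\la)\bra{Q}$ are polynomials of degree $N-1$, show that $\mathcal{T}(\xi_n)=B(\xi_n)+C(\xi_n)$ acts on the SoV covectors as a pure shift of the single variable $h_n$, and read off the discrete $T$--$Q$ relation from the factorized coefficients of \eqref{left_sep-state}. The reduction to $N$ interpolation points, the appeal to the simplicity of the spectrum (Theorem~\ref{SoV-spectrum}) to identify the constructed separate state with \emph{the} eigenstate, and the derivation of \eqref{condQ-2} from $\tau(\xi_n)\,\tau(\xi_n-\eta)=-a(\xi_n)\,d(\xi_n-\eta)\neq 0$ are all sound.

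The gap is that the technical core of the theorem is precisely the step you defer. Three facts must actually be extracted from the Yang--Baxter relations and are only asserted in your write-up: (i) $\bra{\mathbf{h}}\,B(\xi_n)=0$ on components with $h_n=0$; (ii) $\bra{\mathbf{h}}\,C(\xi_n)=0$ on components with $h_n=1$, which is not automatic from \eqref{l-h-state} since a priori $\bra{0}\cdots C(\xi_n)^2\cdots$ need not vanish and requires an argument (e.g.\ that such a covector would be a left $D$-eigencovector with a non-polynomial eigenvalue); and (iii) the precise coefficient of the surviving $B$-mediated shift, which must combine with the ratio of the shifted Vandermonde determinants $V(\{\xi-h\eta\})$ appearing in \eqref{left_sep-state} to give exactly $-a(\xi_n)$ --- this is where the quantum determinant \eqref{qdet} genuinely enters, and without it one obtains \eqref{LeftTQxi} only up to an undetermined proportionality factor. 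Your appeal to \eqref{dis-sys} to reconcile the two shift channels (the component equations at $k_n=0$ and $k_n=1$) likewise depends on these exact coefficients and cannot be verified at the level of the sketch. Since you explicitly label this computation as ``the main obstacle'' and do not carry it out, the proposal is a correct plan with the decisive lemma missing rather than a proof.
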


\begin{rem}
It is important to underline that, even if the system of equations \eqref{LeftTQxi} does not uniquely define the function $Q(\la)$, they nevertheless uniquely define the corresponding eigenstate up to a global normalization factor. 
\end{rem}

\begin{rem}\label{rem-eq-dis}
The conditions \eqref{dis-sys} for the transfer matrix eigenvalue $\tau(\lambda)$ imply that a function $Q(\lambda)$ associated with $\tau(\lambda)$ through the system~\eqref{LeftTQxi} satisfies in fact the following larger systems of $2N$ equations:
\begin{multline}\label{T-Q-dis}
  \tau(\xi_n-h_n\eta)\, Q(\xi_n-h_n\eta)=-a(\xi_n-h_n\eta)\, Q(\xi_n-(h_n+1)\eta)\\
  +d(\xi_n-h_n\eta)\, Q(\xi_n-(h_n-1)\eta),\qquad \forall n\in\{1,\dots,N\},\ \forall h_n\in\{0,1\}.
\end{multline}
Conversely it is easy to see that, if one can exhibit two functions $\tau(\lambda)$ and $Q(\lambda)$ satisfying \eqref{T-Q-dis} and \eqref{condQ-2}, then $\tau(\lambda)$ automatically satisfies the conditions \eqref{dis-sys}. 
It means that the set of quadratic equations \eqref{dis-sys} for $\tau(\lambda)$ in Theorem~\ref{SoV-spectrum} can equivalently be replaced by a condition on the existence of a function $Q(\lambda)$ satisfying \eqref{T-Q-dis}-\eqref{condQ-2}.
\end{rem}

\subsection{Baxter functional  $T$-$Q$ equation}

The system of equations \eqref{T-Q-dis} which, due to Remark~\ref{rem-eq-dis}, leads to the characterization of the transfer matrix spectrum and eigenvectors,  is strongly reminiscent  of  Baxter's famous $T$-$Q$ equation \cite{Bax82L}: it appears naturally as a discrete version, evaluated at the inhomogeneity and shifted inhomogeneity parameters only, of the following functional equation:
\begin{equation}\label{hom-Baxter-Eq}
\tau(\lambda ) \,Q(\lambda )=-a(\lambda )\, Q(\lambda -\eta )+d(\lambda )\, Q(\lambda+\eta ). 
\end{equation}
In fact if, for a given function $\tau(\lambda)$, there exists $Q(\lambda)$ solution of \eqref{hom-Baxter-Eq}, then it is obvious that $Q(\lambda)$ satisfies \eqref{T-Q-dis}. The converse is of course not automatically true\footnote{All that we can say in general is that $Q(\lambda)$ satisfies some inhomogeneous version of \eqref{hom-Baxter-Eq} of the form
\begin{equation}\label{inhom-eq}
   \tau(\lambda ) \,Q(\lambda )=-a(\lambda )\, Q(\lambda -\eta )+d(\lambda )\, Q(\lambda+\eta )+F_Q(\lambda),
\end{equation}
with an additional term $F_Q(\lambda)$ vanishing at all the points $\xi_n$ and $\xi_n-\eta$, $n\in\{1,\ldots,N\}$.}.
It would nonetheless be very convenient to be able to completely characterize the transfer matrix spectrum and eigenstates in terms of solutions of the continuous equation \eqref{hom-Baxter-Eq} rather than of the discrete ones \eqref{T-Q-dis}: the entireness condition for $\tau(\lambda)$ could then simply be rewritten in terms of Bethe-type equations for the roots $\la_j$ of the function $Q(\lambda)$, which would enable us to study the homogeneous and thermodynamic limit of our model in a rather standard way.

Hence the whole problem, to pass from the discrete \eqref{T-Q-dis} to the continuous \eqref{hom-Baxter-Eq} picture, is to understand whether there always exists, for each eigenvalue $\tau(\lambda)$ as characterized from the SoV approach,  a solution $Q(\lambda)$ to the functional equation \eqref{hom-Baxter-Eq} which moreover satisfies \eqref{condQ-2}. In addition, one has to be able to characterize the functional form of this solution, keeping in mind that the latter should be relatively independent from the corresponding eigenvalue $\tau(\lambda)$ if one wants to be able to reformulate our characterization of the spectrum and eigenstates in terms of a system of Bethe-type equations.

Depending on the model one considers, this may be a difficult problem\footnote{See for instance the recent papers \cite{NicT15,LevNT15} which tackle this problem for the slightly more complicated XXZ and dynamical 6-vertex antiperiodic models: it appears that in these cases the functional form (and notably the quasi-periodicity properties) of the solutions $Q(\lambda)$ differ from those of $\tau(\lambda)$. The problem seems even more complicated when one considers spin chains with boundaries, so that it has been suggested in \cite{KitMN14} to  reformulate the SoV characterization of the transfer matrix spectrum in terms of solutions of a particular inhomogeneous functional equations of the type \eqref{inhom-eq} rather than of the homogeneous one.} to precisely identify  the class of $Q$-solutions to \eqref{hom-Baxter-Eq} associated with the transfer matrix eigenvalues $\tau(\lambda)$.
However, in the present case, the situation is quite simple since we have to deal with polynomials of degree less than $N$ only. One can therefore formulate the following theorem, which provides an alternative description of the transfer matrix spectrum with respect to Theorem~\ref{SoV-spectrum}.

\begin{theorem}\label{th-T-Q}
Let the inhomogeneity parameters satisfy the condition \eqref{generic}. 
Then, a given function $\tau(\la)$ is an eigenvalue of the antiperiodic transfer matrix \eqref{transfer} if and only if it satisfies the two following conditions:
\begin{enumerate}
\item it is an entire function of $\lambda$,
\item there exists a polynomial $Q$ of some degree  $R\le N$,
\begin{equation}\label{Q-form}
   Q(\lambda )=\prod_{a=1}^R( \lambda -\lambda_a),
\end{equation}
for some set of roots $\lambda_1,\ldots,\lambda_R$ such that $ \la_a\neq \xi_b,\ \forall a\in\{1,\dots ,R\},\  \forall b \in\{1,\dots ,N\}$,
satisfying with $\tau(\lambda)$ the functional equation \eqref{hom-Baxter-Eq}.
\end{enumerate}
Whenever it exists, such a polynomial $Q(\lambda)$ is unique.
\end{theorem}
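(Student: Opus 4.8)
\emph{Overall strategy.} The plan is to read Theorem~\ref{th-T-Q} as the passage from the discrete characterization of Theorem~\ref{SoV-spectrum}--Remark~\ref{rem-eq-dis} to the functional one, proving the two implications separately and then settling uniqueness by a discrete Casoratian argument. Throughout I will use the polynomial identity $a(\lambda)=d(\lambda+\eta)$, immediate from \eqref{a-d}.

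\emph{Sufficiency ($\Leftarrow$).} Assume $\tau$ is entire and that some $Q$ of the form \eqref{Q-form} with $\lambda_a\neq\xi_b$ solves \eqref{hom-Baxter-Eq}. Evaluating \eqref{hom-Baxter-Eq} at $\lambda=\xi_n-h_n\eta$ reproduces \emph{verbatim} the discrete system \eqref{T-Q-dis}, while $\lambda_a\neq\xi_b$ forces $Q(\xi_n)\neq0$, hence \eqref{condQ-2}. The converse part of Remark~\ref{rem-eq-dis} then yields \eqref{dis-sys} for $\tau$. Moreover $\tau=\bigl[-a(\lambda)Q(\lambda-\eta)+d(\lambda)Q(\lambda+\eta)\bigr]/Q(\lambda)$ is a ratio of polynomials that is entire, hence a polynomial; since $a,d$ are monic of degree $N$ the $\lambda^{N+R}$ terms in the numerator cancel, so the numerator has degree $\le N+R-1$ and $\deg\tau\le N-1$. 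Theorem~\ref{SoV-spectrum} then identifies $\tau$ as an eigenvalue.

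\emph{Necessity ($\Rightarrow$), construction.} Let $\tau$ be an eigenvalue; by Theorem~\ref{SoV-spectrum} it is a polynomial of degree $\le N-1$ (hence entire) obeying \eqref{dis-sys}. I would search for $Q$ inside the $(N{+}1)$-dimensional space of polynomials of degree $\le N$. The $N$ linear conditions \eqref{LeftTQxi} cut out a subspace of dimension $\ge(N+1)-N=1$, so a nonzero polynomial solution $Q$ of \eqref{LeftTQxi} exists. For such a $Q$, Remark~\ref{rem-eq-dis} promotes \eqref{LeftTQxi} to the full system \eqref{T-Q-dis}; equivalently the polynomial $P(\lambda):=\tau(\lambda)Q(\lambda)+a(\lambda)Q(\lambda-\eta)-d(\lambda)Q(\lambda+\eta)$ vanishes at the $2N$ points $\xi_n,\ \xi_n-\eta$. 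By \eqref{generic} these points are pairwise distinct, whereas $\deg P\le 2N-1$ (the $\lambda^{2N}$ terms cancel and $\deg(\tau Q)\le 2N-1$); therefore $P\equiv0$, i.e. $Q$ solves \eqref{hom-Baxter-Eq}. Matching the $\lambda^{N+R-1}$ coefficients shows the leading coefficient of $\tau$ equals $\eta(2R-N)$, which pins down the degree $R$ of $Q$.

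\emph{Uniqueness, and the main obstacle.} For uniqueness I would form, for two solutions $Q_1,Q_2$ of \eqref{hom-Baxter-Eq}, the discrete Casoratian $W(\lambda)=Q_1(\lambda)Q_2(\lambda-\eta)-Q_2(\lambda)Q_1(\lambda-\eta)$, which satisfies $d(\lambda)W(\lambda+\eta)=-a(\lambda)W(\lambda)$. Using $a(\lambda)=d(\lambda+\eta)$, the function $g=W/d$ obeys $g(\lambda+\eta)=-g(\lambda)$, so $g$ is a $2\eta$-periodic rational function, hence constant, hence zero; thus $W\equiv0$ and $Q_1/Q_2$ is $\eta$-periodic rational, i.e. constant, giving uniqueness up to scale (and, after the monic normalization \eqref{Q-form}, outright). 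The genuinely delicate point, which I expect to be the main obstacle, is showing that this (unique) $Q$ has all roots away from the $\xi_b$, i.e. $Q(\xi_n)\neq0$. Since $a(\xi_n)\neq0$ by \eqref{generic}, \eqref{LeftTQxi} shows $Q(\xi_b)=0$ would force $Q(\xi_b-\eta)=0$, whereupon every term of the separate state \eqref{left_sep-state} built from $Q$ carries a vanishing factor $Q(\xi_b)$ or $Q(\xi_b-\eta)$, so that state would vanish and could not be the nonzero eigenstate of Theorem~\ref{th-eigenvector}. To turn this into a contradiction I would write $Q(\lambda)=(\lambda-\xi_b)(\lambda-\xi_b+\eta)\hat Q(\lambda)$, divide \eqref{hom-Baxter-Eq} by the common factor $(\lambda-\xi_b)(\lambda-\xi_b+\eta)$, and show that the compatibility of the resulting relation at $\xi_b$ with \eqref{dis-sys} imposes a polynomial relation among the $\xi$'s that \eqref{generic} excludes; already for constant $\hat Q$ this relation reads $\prod_{m\neq b}(\xi_b-\xi_m+\eta)+\prod_{m\neq b}(\xi_b-\xi_m-\eta)+2\prod_{m\neq b}(\xi_b-\xi_m)=0$, which fails generically. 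Carrying this out for arbitrary $\hat Q$ is the step that will demand the most care, and it is precisely where the genericity assumption \eqref{generic} is essential.
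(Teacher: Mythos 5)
Your sufficiency argument and your uniqueness argument (the Casoratian $W$ with $g=W/d$ forced to be an anti-periodic, hence vanishing, polynomial) coincide with the paper's, and your existence argument for a nonzero polynomial solution of degree at most $N$ (dimension count: the $N$ linear conditions \eqref{LeftTQxi} on an $(N+1)$-dimensional space, then a degree argument to promote the discrete system to \eqref{hom-Baxter-Eq}) is an acceptable variant of the paper's interpolation set-up. The genuine gap is exactly the step you flag yourself: proving that the polynomial so obtained satisfies $Q(\xi_b)\neq 0$. Your sketch does not close it, for two reasons. First, the observation that $Q(\xi_b)=Q(\xi_b-\eta)=0$ makes the separate state \eqref{left_sep-state} vanish is not by itself a contradiction: Theorem~\ref{th-eigenvector} asserts that \emph{some} solution of the discrete system satisfies \eqref{condQ-2}, not that every polynomial solution does, and the equation of \eqref{LeftTQxi} at site $b$ is trivially satisfied by the pair $(0,0)$, so your dimension count may well land on such a degenerate solution. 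Second, the proposed endgame --- deriving a polynomial relation among the $\xi$'s and observing that it ``fails generically'' --- would at best exclude the degenerate case outside a proper Zariski-closed set of inhomogeneities, which is strictly weaker than the stated hypothesis \eqref{generic} (namely $\xi_a\neq\xi_b\pm h\eta$); you would still have to show that the offending relation is actually incompatible with \eqref{generic}, and you have not done so even in the special case you write down.

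The paper closes this step differently, by exploiting the arbitrariness of the auxiliary interpolation node $\xi_{N+1}$ in \eqref{Interpolation-Q}: the conditions \eqref{LeftTQxi} become the inhomogeneous linear system \eqref{system2} for $Q(\xi_1),\ldots,Q(\xi_N)$ in terms of $Q(\xi_{N+1})$, Cramer's rule gives $Q(\xi_j)=Q(\xi_{N+1})\,\det_N\big[c^{(j)}_\tau(\xi_{N+1})\big]/\det_N\big[c_\tau(\xi_{N+1})\big]$, and since these determinants are rational functions of $\xi_{N+1}$ which are not identically zero, one may fix $\xi_{N+1}$ so that all of them are simultaneously finite and non-vanishing, which forces $Q(\xi_j)\neq 0$ for every $j$. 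To complete your proof you would need either to import this argument or to actually carry out --- and strengthen so that it holds under \eqref{generic} alone --- the elimination you only outline.
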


\begin{proof}
It is quite straightforward to show that, if the conditions {\it 1} and {\it 2} are satisfied, then $\tau(\lambda)$ is an an eigenvalue of the antiperiodic transfer matrix: on the one hand it is obvious from the previous discussion that the condition {\it 2} of Theorem~\ref{SoV-spectrum} is satisfied; on the other hand the fact that $Q(\lambda)$ is a polynomial of maximal degree $N$ implies that $-a(\lambda)\, Q(\lambda-\eta)+d(\lambda)\, Q(\lambda)$ is a polynomial of maximal degree $2N-1$ so that, from the entireness condition of $\tau(\lambda)$,  the condition {\it 1} of Theorem~\ref{SoV-spectrum} is also satisfied.

Let us now show  that, if the equation \eqref{hom-Baxter-Eq}  admits a polynomial solution $Q(\lambda) $ of the form \eqref{Q-form} for a given function $\tau(\lambda )$, then this solution is unique.
Indeed, let us assume that there exists two different polynomial solutions $P(\lambda )$ and $Q(\lambda )$ to the equation \eqref{hom-Baxter-Eq} associated with the same function $\tau(\lambda)$.  It means that
\begin{equation}
\frac{-a(\lambda )P(\lambda -\eta )+d(\lambda )P(\lambda +\eta )}{P(\lambda )}
=\frac{-a(\lambda )Q(\lambda -\eta )+d(\lambda )Q(\lambda +\eta)}{Q(\lambda )},
\end{equation}
so that
\begin{equation}
   -a(\lambda )\,W_{P,Q}(\lambda )=d(\lambda )\, W_{P,Q}(\lambda +\eta ),
\end{equation}
where $W_{P,Q}(\lambda )$ stands for the quantum Wronskian of these two solutions:
\begin{equation}
W_{P,Q}(\lambda )=Q(\lambda )\, P(\lambda -\eta )-P(\lambda )\, Q(\lambda -\eta ).
\end{equation}
Taking into account that $a(\lambda )=d(\lambda +\eta )$, and using the fact that $W_{P,Q}(\lambda )$ is a polynomial in $\lambda $, we obtain that
\begin{equation}
W_{P,Q}(\lambda )=w_{P,Q}(\lambda )\, d(\lambda )
\end{equation}
where $w_{P,Q}(\lambda )$ is a polynomial in $\lambda $ which moreover has to satisfy
the following quasi-periodicity condition:
\begin{equation}
w_{P,Q}(\lambda +\eta )=-w_{P,Q}(\lambda ).
\end{equation}
Since the only polynomial in $\lambda $ which is periodic of period $2\eta $ is a constant, and since the only constant which satisfies $w_{P,Q}=-w_{P,Q}$ is zero, one has that $W_{P,Q}(\lambda )=0$. Hence $Q(\lambda )=P(\lambda )$ once we have chosen  the highest coefficients of both polynomials  to be equal to 1.

Let us finally prove that, for any eigenvalue $\tau(\lambda )$ of the antiperiodic transfer matrix \eqref{transfer}, there exists  a polynomial solution $Q(\lambda )$ of degree $R\leq  N$ and which does not vanish at the points $\xi_j$, $1\le j \le N$, to the functional equation \eqref{hom-Baxter-Eq} associated with $\tau(\lambda)$.
We recall that, if $Q(\la)$ is a polynomial of maximal degree $N$, then $\tau(\lambda )\, Q(\lambda )$ and $-a(\lambda )\,Q(\lambda -\eta )+d(\lambda )\,Q(\lambda +\eta )$ are both polynomials of maximal degree $2 N-1$.
Hence these two polynomials are equal if and only if they coincide for $2N$ different values of $\lambda$, for instance at the $2N$ points $\xi_a$ and $\xi_a-\eta$ for $a\in\{1,\ldots, N\}$, i.e., if and only the following system of $2N$ equations is satisfied:
\begin{equation}\label{2Nconditions}
\left\{
\begin{aligned}
&\tau(\xi _{a})\,Q(\xi _{a})=-a(\xi _{a})\,Q(\xi _{a}-\eta ),\\
&\tau(\xi_{a}-\eta )\,Q(\xi _{a}-\eta )=d(\xi _{a}-\eta )\,Q(\xi _{a}),
\end{aligned}
\right.
\qquad \forall \, a\in\{1,\ldots, N\}.
\end{equation}
Note at this point that, due to \eqref{dis-sys}, the above system is in fact equivalent to the system of only $N$ equations given by the first line of \eqref{2Nconditions}.
Moreover, saying that $Q(\lambda)$ is a polynomial of maximal degree $N$ is equivalent to saying that $Q(\lambda)$ can be written in the following form: 
\begin{equation}
Q(\lambda )=\sum_{a=1}^{ {N}+1}\prod_{\substack{ b=1  \\ b\neq a}}^{{N}+1}
\frac{\lambda -\xi _{b}}{\xi _{a}-\xi _{b}}\,Q(\xi _{a}).
\label{Interpolation-Q}
\end{equation}
In \eqref{Interpolation-Q}, $\xi_{N+1}$ is an arbitrary complex number, different from $\xi_1,\ldots,\xi_N$, which can be chosen at our convenience.
Hence the system \eqref{2Nconditions} is equivalent to a homogeneous linear system of $N$ equations for the $2N+1$ unknowns $Q(\xi_1),\ldots,$ $Q(\xi_{N+1})$, which can alternatively be thought of as an inhomogeneous linear system for the $N$ unknowns $Q(\xi_1),\ldots,Q(\xi_{N})$ in terms of the $(N+1)$-th one $Q(\xi_{N+1})$:
\begin{equation}\label{system2}
   \sum_{b=1}^N [ c_\tau(\xi_{N+1})]_{ab}\, Q(\xi_b)
   =-\prod_{\ell=1}^N\frac{\xi_a-\xi_\ell-\eta}{\xi_{N+1}-\xi_\ell}\, Q(\xi_{N+1}).
\end{equation}
The elements of the matrix $ c_\tau(\xi_{N+1})$ of this linear system are
\begin{equation}\label{mat-c}
  [ c_\tau(\xi_{N+1})]_{ab}= \delta _{ab}\,\frac{\tau(\xi _{a})}{a(\xi _{a})}
  +\prod_{\substack{ c=1  \\ c\neq a}}^{N+1}\frac{\xi _{b}-\xi _{c}-\eta }{\xi _{a}-\xi _{c}}
  \qquad\forall a,b\in \{1,\ldots,N\}.
\end{equation}
The determinant of this matrix is a rational function of $\xi_{N+1}$ which is not identically zero, hence it is possible to chose $\xi_{N+1}$ for this determinant to be finite and non-zero.
Then, for any given choice of $Q(\xi _{ {N}+1})\neq 0$, there exists one and only one nontrivial solution $\big(Q(\xi_1),\ldots ,Q(\xi_N)\big)$ of the  system \eqref{system2}, which is given by Cramer's rule:
\begin{equation}
   Q(\xi_j)=Q(\xi_{N+1})\, \frac{\det_N\big[ c_\tau^{(j)}(\xi_{N+1})\big] }{\det_N [ c_\tau(\xi_{N+1})]},
   \qquad\forall\, j\in \{1,\ldots, {N}\},
\end{equation}
with matrices $c_\tau^{(j)}(\xi_{N+1})$ defined as
\begin{equation}\label{mat-cj}
   \big[ c_\tau^{(j)}(\xi_{N+1})\big]_{ab}
   = (1-\delta _{b,j}) [ c_\tau(\xi_{N+1})]_{ab}
   -\delta _{b,j} \prod_{\ell=1}^{ {N}}\frac{\xi_a-\xi_\ell-\eta }{\xi _{N+1}-\xi _\ell},
\end{equation}
for all $a,b\in\{1,\ldots, N\}$. Note that the determinant of the matrices are also non-zero rational functions of $\xi_{N+1}$, so that it is also possible to fix $\xi_{N+1}$ such that none of them vanish.
In that case one is ensured that $Q(\xi_j)\not=0$, $\forall\, j\in\{1,\ldots,N\}$.
Hence from  \eqref{Interpolation-Q}  we have obtained a polynomial of degree at most $N$ satisfying the functional equation \eqref{hom-Baxter-Eq} and the requirement that its zeros do not coincide with any of the inhomogeneity parameters. It is important to remark that we have no criteria implying  that the degree of $Q(\lambda )$ is exactly $N$: being described by an interpolation formula in $N+1$ points,  it can be  a polynomial of any degree $ {R}\leq  {N}$.
\end{proof}

Hence, from Theorem~\ref{th-T-Q}, we know that there exists a bijection which relates the set of the transfer matrix eigenvalues $\tau(\lambda)$ to the set of polynomials $Q(\lambda)$ of the form \eqref{Q-form} with maximal degree $N$: the image of $\tau(\lambda)$ is provided by the unique solution $Q(\lambda)$, in the aforementioned set of polynomials, to the functional equation \eqref{hom-Baxter-Eq} associated with $\tau(\lambda)$. From now on we shall denote this unique solution by $Q_\tau(\lambda)$, and we shall denote the corresponding right and left eigenstate by $\ket{Q_\tau}$ and $\bra{Q_\tau}$ respectively. 

The Baxter equation as usual leads to the Bethe equations for the roots of the polynomial $Q(\la)$:
\begin{equation}
\label{Bethe_anti}
\frac {a(\la_a)}{d(\la_a)}\prod_{b=1}^R\frac{\la_a-\la_b-\eta}{\la_a-\la_b+\eta}=1,
\qquad \forall\, a\in\{1,\ldots,R\},
\qquad 1\le R\le N.
\end{equation}
Theorem~\ref{th-T-Q} therefore ensures us that the solutions to these Bethe equations provide a complete description of the spectrum (and eigenstates) of the antiperiodic transfer matrix \eqref{transfer}.

Before concluding this subsection, we would like to mention that it is possible to obtain an alternative (but equivalent) description of this spectrum, as stated in the following theorem.

\begin{theorem}
\label{secondQ}
Let $M=N/2$ if $N$ is even or $M=(N-1)/2$ if $N$ is odd.
Then $\tau(\la)$ is an eigenvalue of the antiperiodic transfer matrix \eqref{transfer} if and only if it can be written in the following form,
\begin{equation}
 \tau(\lambda)=\pm \frac 12 \big[ p(\lambda -\eta )\, q(\lambda +\eta )
                          -q(\lambda -\eta )\, p(\lambda +\eta)\big],
\label{Def-St}
\end{equation}
with $q(\la)$  a polynomial of degree $R\le M$ and $p(\la)$ a polynomial of degree $N-R$ such that
\begin{equation}
   \frac 12 \big[ p(\lambda )\, q(\lambda -\eta )+q(\lambda)\, p(\lambda -\eta )\big]  =d(\lambda ).
\label{Wronskian_pq}
\end{equation}
\end{theorem}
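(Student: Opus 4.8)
The plan is to read Theorem~\ref{secondQ} as a statement about the two monic $Q$-polynomials attached to the pair of eigenvalues $\tau$ and $-\tau$, and to deduce everything from the functional characterization of Theorem~\ref{th-T-Q} together with the elementary identity $a(\la)=d(\la+\eta)$.

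For the ``if'' direction I would assume $\tau$ is of the form \eqref{Def-St} with $p,q$ satisfying \eqref{Wronskian_pq} and the stated degrees, and simply verify by substitution that $q$ solves the functional Baxter equation \eqref{hom-Baxter-Eq} with this $\tau$. Writing \eqref{Wronskian_pq} at $\la$ and at $\la+\eta$ (the latter gives $d(\la+\eta)=a(\la)$ as a symmetric combination of $p,q$), the products $p(\la)\,q(\la\pm\eta)$ recombine, the cross terms cancel, and one is left exactly with $\tau(\la)\,q(\la)=-a(\la)\,q(\la-\eta)+d(\la)\,q(\la+\eta)$. Since $p,q$ are comonic of degrees $N-R$ and $R$, the leading $\la^N$ terms in \eqref{Def-St} cancel, so $\tau$ is a polynomial of degree $\le N-1$, hence entire; and $q$ is a polynomial of degree $R\le M\le N$. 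The only delicate point is that the roots of $q$ must avoid the $\xi_b$ before invoking Theorem~\ref{th-T-Q}: I would deduce this from genericity \eqref{generic} (if $q(\xi_n)=0$, the Baxter equation at $\xi_n$ forces $q(\xi_n-\eta)=0$ since $a(\xi_n)\neq 0$, and then \eqref{Wronskian_pq} evaluated at $\xi_n+\eta$ yields a contradiction), or, equivalently, route through Theorem~\ref{SoV-spectrum} and Remark~\ref{rem-eq-dis} by checking the nonvanishing condition \eqref{condQ-2}.

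For the ``only if'' direction I would start from Theorem~\ref{th-T-Q}, which gives the unique monic $q:=Q_\tau$ of degree $R\le N$ solving \eqref{hom-Baxter-Eq}, and then produce the second polynomial $p$ from a spectrum symmetry. The operator $\Gamma^z=\otimes_{n=1}^N\sigma^z_n$ flips the sign of the off-diagonal monodromy entries, so $\Gamma^z\,\mathcal{T}(\la)\,(\Gamma^z)^{-1}=-\mathcal{T}(\la)$; hence the spectrum is stable under $\tau\mapsto-\tau$, and Theorem~\ref{th-T-Q} furnishes a monic $p:=Q_{-\tau}$ of some degree $R'\le N$ solving \eqref{hom-Baxter-Eq} with $-\tau$. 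Setting $S(\la)=p(\la)q(\la-\eta)+q(\la)p(\la-\eta)$ and eliminating $q(\la+\eta),p(\la+\eta)$ through the two Baxter equations, the $\tau$-terms cancel and one finds $d(\la)\,S(\la+\eta)=a(\la)\,S(\la)=d(\la+\eta)\,S(\la)$. Thus $S/d$ is an $\eta$-periodic rational function, hence constant, so $S(\la)=c\,d(\la)$; the case $c=0$ is excluded since it would force $p/q$ to be $2\eta$-periodic, hence constant, hence $p\equiv 0$. Comparing leading coefficients of the comonic $p,q$ gives $c=2$, which is precisely \eqref{Wronskian_pq} and forces $R'=N-R$. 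Feeding \eqref{Wronskian_pq} back into the computation of the previous paragraph shows that $q$ solves \eqref{hom-Baxter-Eq} with $\tilde\tau(\la)=\tfrac12\big[p(\la-\eta)q(\la+\eta)-q(\la-\eta)p(\la+\eta)\big]$; as a nonzero $q$ determines its eigenvalue uniquely through \eqref{hom-Baxter-Eq}, we get $\tilde\tau=\tau$, i.e.\ \eqref{Def-St} with sign $+$. Finally, to meet $\deg q\le M$ I would exchange the roles of $q$ and $p$ whenever $R>M$ (always possible since $\min(R,N-R)\le M$); this leaves \eqref{Wronskian_pq} invariant and flips the sign of the bracket in \eqref{Def-St}, which is exactly the source of the $\pm$.

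The step I expect to be the main obstacle is the existence of a \emph{polynomial} second solution $p$ of the correct degree $N-R$: a priori the second solution of the three-term Baxter recursion need not be polynomial, and there is no direct linear-algebra reason for it to have the required degree. The device that removes this obstacle is the symmetry $\tau\mapsto-\tau$, which hands us $p=Q_{-\tau}$ as a genuine polynomial for free; the difference-equation argument then pins down both its degree and the normalization $c=2$. A secondary, more routine point requiring care is the root condition $\la_a\neq\xi_b$ in the ``if'' direction, which is settled using the genericity assumption \eqref{generic}.
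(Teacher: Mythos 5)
Your argument follows essentially the same route as the paper's: the second polynomial is obtained from the spectral symmetry $\tau\mapsto-\tau$ as $p=Q_{-\tau}$, the symmetrized quantum Wronskian $S(\la)=p(\la)q(\la-\eta)+q(\la)p(\la-\eta)$ is shown to satisfy $d(\la+\eta)S(\la)=d(\la)S(\la+\eta)$ and hence to be a constant multiple of $d$, the constant is fixed by leading coefficients, and the converse is a direct substitution. Your handling of the normalization (excluding $c=0$, getting $c=2$, deducing $\deg p+\deg q=N$) is in fact a little more explicit than the paper's terse ``if and only if $\hat W_{p,q}=d$''.

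The one step that does not work as written is the parenthetical argument for the root condition $\la_a\neq\xi_b$ in the ``if'' direction. From $q(\xi_n)=0$ you correctly get $q(\xi_n-\eta)=0$ (using $d(\xi_n)=0$, $a(\xi_n)\neq0$), but evaluating \eqref{Wronskian_pq} at $\xi_n+\eta$ then yields $q(\xi_n+\eta)\,p(\xi_n)=2\,d(\xi_n+\eta)=2\,a(\xi_n)\neq0$, which is merely a constraint, not a contradiction; so this route does not exclude $q(\xi_n)=0$ and Theorem~\ref{th-T-Q} cannot be invoked directly. Your stated fallback is the correct repair and is what the paper actually does: evaluate \eqref{Wronskian_pq} at $\xi_j-\eta$ to see that $p$ and $q$ cannot \emph{both} vanish there, and then verify the discrete conditions \eqref{dis-sys} of Theorem~\ref{SoV-spectrum} at each $\xi_n$ using whichever of the two Baxter relations (the one for $q$ or the one for $p$) is non-degenerate at that point. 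Note that checking \eqref{condQ-2} for $q$ alone would run into the same obstruction, so the alternation between $q$ and $p$ is genuinely needed.
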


\begin{proof}[Proof]
Let $\tau(\lambda )$ be an eigenvalue of the transfer matrix.
Then Theorem~\ref{th-T-Q} implies that there exists a (unique) polynomial $Q_{\tau}(\lambda )$
satisfying the Baxter equation \eqref{hom-Baxter-Eq} together with  $\tau(\lambda )$.
From  Theorem \ref{SoV-spectrum},  it is easy to see that $ -\tau(\lambda )$ is another eigenvalue of the transfer matrix, so that there also exists a polynomial $Q_{-\tau}(\lambda)$ which satisfies  \eqref{hom-Baxter-Eq} together with  $-\tau(\lambda )$.
We then define  $q(\lambda )\in\{Q_{\tau}(\lambda ),Q_{-\tau}(\lambda )\}$ to be the polynomial with the smaller degree and  $p(\lambda )\in\{Q_{\tau}(\lambda ),Q_{-\tau}(\lambda )\}$ to be the other one.
If the two polynomials $Q_{\tau}(\lambda )$ and $Q_{-\tau}(\lambda )$ have the same degree, we fix for instance $q(\lambda )=Q_{\tau}(\lambda )$ and $p(\lambda )=Q_{-\tau}(\lambda )$.

Using the fact that $q(\lambda )$ and $p(\lambda )$ satisfy the Baxter equation with opposite eigenfunctions, we obtain the identity
\begin{equation}
\frac{-a(\lambda )\, q(\lambda -\eta )+d(\lambda )\, q(\lambda +\eta )}{q(\lambda )}
=\frac{a(\lambda )\, p(\lambda -\eta )-d(\lambda )\, p(\lambda +\eta )}{p(\lambda)},
\end{equation}
or equivalently,
\begin{equation}\label{wpq-id}
a(\lambda )\, \hat{W}_{q,p}(\lambda )=d(\lambda )\, \hat{W}_{q,p}(\lambda +\eta ),
\end{equation}
where we have defined the function $\hat{W}_{q,p}(\lambda )$ as
\begin{equation}
   \hat{W}_{q,p}(\lambda )= \frac 12 \big[ p(\lambda )\, q(\lambda -\eta )+q(\lambda)\, p(\lambda -\eta )\big].
\end{equation}
Since $\hat{W}_{q,p}(\lambda )$ is a polynomial in $\lambda$ and since $a(\lambda )=d(\lambda +\eta )$, the equation \eqref{wpq-id} is satisfied if and only if 
\begin{equation}
 \hat{W}_{p,q}(\lambda )=d(\lambda ).  \label{W-d-form}
\end{equation}
It also means that the degree of $q(\lambda)$ is less or equal to $M$. 

Let us now show that the transfer matrix eigenvalue $\tau(\lambda )$ is of the form \eqref{Def-St}.
By definition we know that
\begin{equation}
\tau(\lambda )\, q(\lambda )=\epsilon [-a(\lambda )\, q(\lambda -\eta )+d(\lambda)\, q(\lambda +\eta )]
\end{equation}
where $\epsilon =1$ if $q(\lambda )=Q_{\tau}(\lambda )$ and $\epsilon =-1$ if  $q(\lambda )=Q_{-\tau}(\lambda )$. Using now that  $a(\lambda )=\hat{W}_{p,q}(\lambda +\eta )$ and that $d(\lambda )=\hat{W}_{p,q}(\lambda )$, we obtain
\begin{equation}
\tau(\lambda )\, q(\lambda )=\epsilon \big[ q(\lambda )\, p(\lambda -\eta )\, q(\lambda +\eta)-q(\lambda )\, p(\lambda +\eta )\, q(\lambda -\eta )\big],
\end{equation}
which implies \eqref{Def-St}.

Vice versa, let $q(\lambda)$ and $p(\la)$ be two polynomials of degree $R$ and $N-R$ respectively and which satisfy \eqref{Wronskian_pq}. Let $\tau^{(+)}(\lambda )$ and $\tau^{(-)}(\lambda)$ be equal to the right hand side of \eqref{Def-St} with + or $-$ sign respectively. We can first remark that, from the definition \eqref{Def-St}, $\tau^{(\pm)}(\lambda)$ are obviously polynomials in $\lambda$ of degree $N-1$. Moreover,
\begin{align}
\tau^{(\pm )}(\lambda )\, q(\lambda ) 
=&\pm \frac 12 \big[ p(\lambda -\eta )\, q(\lambda +\eta)-q(\lambda -\eta )\, p(\lambda +\eta ) \big]\, q(\lambda)  \notag \\
 =&\pm \bigg\{  
       \frac 12 \big[ p(\lambda -\eta )\, q(\lambda)+p(\lambda)\, q(\lambda-\eta)\big]\, q(\lambda +\eta )
       \notag\\
   &\quad\ -\frac 12 \big[ p(\lambda +\eta)\, q(\lambda)+q(\lambda+\eta)\, p(\lambda)\big]\, q(\lambda -\eta )\bigg\}\notag\\
 =&\pm \big[d(\lambda )\, q(\lambda +\eta ) -a(\lambda )\, q(\lambda -\eta )\big]. \label{eq-q}
\end{align}
Similarly one can show that
\begin{equation}\label{eq-p}
   \tau^{(\pm )}(\lambda )\, p(\lambda ) =\mp\big[-a(\lambda)\, p(\lambda-\eta)-d(\lambda)\, p(\lambda+\eta)\big].
\end{equation}
Hence $\tau^{(+)}(\lambda )=-\tau^{(-)}(\lambda)$ satisfies the functional equation \eqref{hom-Baxter-Eq}  with $q(\lambda )$ whereas $\tau^{(-)}(\lambda )=-\tau^{(+)}(\lambda)$  satisfies the functional equation \eqref{hom-Baxter-Eq}  with $p(\lambda )$.
Finally it is easy to see that, given any inhomogeneity parameter $\xi_j$, $p(\lambda)$ and $q(\lambda)$ cannot both vanish in $\xi_j-\eta$: this would imply from \eqref{Wronskian_pq} that $d(\xi_j-\eta)=0$, which is obviously not true. Hence, taking also into account \eqref{eq-q} and \eqref{eq-p}, this means  that $\tau^{(+)}(\lambda )$ and $\tau^{(-)}(\lambda )$ both satisfy \eqref{dis-sys}, so that they are both transfer matrix eigenvalues.
\end{proof}

\subsection{ABA-type representations for the transfer matrix eigenvectors}

In this section we present an alternative way to write the separate states, and hence the eigenstates of the antiperiodic transfer matrix, in a form which is strongly reminiscent of the form of the Bethe states as obtained in the framework of the algebraic Bethe ansatz. It is worth remarking that this type of rewriting can be in fact derived for a large class of integrable quantum models solvable by SoV method.

Let us first define a simple separate state, the state $\bra{ 1 }$ associated with the constant function $\a(\la)=1$.  The corresponding right separate state $\ket{ 1} $  is defined similarly using \eqref{right_sep-state}. Then, if $\alpha(\lambda)$ is a polynomial, the separate state $\bra{\alpha}$ or $\ket{\alpha}$ can be obtained by multiple action of the ``creation'' operator $D$, evaluated at the roots $\alpha_j$ of $\alpha$, on the separate states  $\bra{ 1 }$ or  $\ket{ 1} $.

\begin{proposition}\label{prop-sep-ABA}
Let $\a(\la)$ be a polynomial of the form,
\begin{equation}
 \a(\la)=\prod_{k=1}^R (\la-\a_k).
\end{equation}
Then the corresponding separate states $\bra{\alpha}$ and $\ket{\alpha}$ can be written as
\begin{equation}
\label{ABA_like}
   \bra{\alpha}=(-1)^{RN}\bra{ 1}\prod_{k=1}^R D(\a_k),\qquad
   \ket{\alpha}=(-1)^{RN}\prod_{k=1}^R D(\a_k) \ket{1}.
\end{equation}
\end{proposition}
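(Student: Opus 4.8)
The plan is to use that the SoV basis vectors $\bra{\mathbf{h}}$ and $\ket{\mathbf{h}}$ are simultaneous eigenvectors of the whole family $\{D(\la)\}$, so that acting with a product of operators $D$ evaluated at the roots of $\a$ merely multiplies each basis vector by the product of the corresponding eigenvalues. First I would write $\bra{1}$ explicitly from \eqref{left_sep-state} with $\a\equiv 1$, i.e.\ $\bra{1}=\sum_{\mathbf{h}}V(\{\xi-h\eta\})\,\bra{\mathbf{h}}$, and similarly $\ket{1}=\sum_{\mathbf{h}}V(\{\xi+h\eta\})\,\ket{\mathbf{h}}$ from \eqref{right_sep-state}.

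Next I would invoke the eigenvalue relations \eqref{eigen-Dl} and \eqref{eigen-Dr}. Because the operators $D(\a_k)$ mutually commute (a consequence of the Yang--Baxter commutation relations satisfied by the entries of $T_0(\la)$), their ordered product in \eqref{ABA_like} is diagonal on the SoV basis regardless of ordering, giving $\bra{\mathbf{h}}\prod_{k=1}^R D(\a_k)=\big(\prod_{k=1}^R d_{\mathbf{h}}(\a_k)\big)\bra{\mathbf{h}}$ and likewise on the right. Substituting into the expansion of $\bra{1}$ yields
\[
\bra{1}\prod_{k=1}^R D(\a_k)=\sum_{\mathbf{h}}V(\{\xi-h\eta\})\Big(\prod_{k=1}^R d_{\mathbf{h}}(\a_k)\Big)\bra{\mathbf{h}}.
\]

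The single genuine computation is to identify the scalar prefactor. Using $d_{\mathbf{h}}(\la)=\prod_{n=1}^N(\la-\xi_n+h_n\eta)$, I would interchange the products over $k$ and $n$ and use $\prod_{k=1}^R(\a_k-\xi_n+h_n\eta)=(-1)^R\prod_{k=1}^R(\xi_n-h_n\eta-\a_k)=(-1)^R\a(\xi_n-h_n\eta)$, so that after the product over the $N$ sites one obtains $\prod_{k=1}^R d_{\mathbf{h}}(\a_k)=(-1)^{RN}\prod_{n=1}^N\a(\xi_n-h_n\eta)$. Inserting this back reproduces exactly the definition \eqref{left_sep-state} of $\bra{\a}$ up to the global factor $(-1)^{RN}$, which proves the first identity; the second follows verbatim with $V(\{\xi+h\eta\})$ in place of $V(\{\xi-h\eta\})$ and \eqref{right_sep-state} in place of \eqref{left_sep-state}.

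No real obstacle arises: the statement is essentially immediate once one observes that $D(\la)$ is diagonal in the SoV basis. The only points worth stating carefully are the commutativity of the $D(\a_k)$, which justifies treating the ordered product as the product of eigenvalues, and the origin of the sign $(-1)^{RN}$, namely the reversal of the $R$ linear factors of $\a$ at each of the $N$ sites.
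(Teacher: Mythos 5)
Your proof is correct and follows exactly the route the paper intends: it invokes the eigenvalue relations \eqref{eigen-Dr} and \eqref{eigen-Dl} for $D(\lambda)$ on the SoV basis and identifies $\prod_{k=1}^R d_{\mathbf{h}}(\a_k)=(-1)^{RN}\prod_{n=1}^N\a(\xi_n-h_n\eta)$, which is precisely the ``straightforward'' computation the paper alludes to. The sign bookkeeping is also right, so nothing further is needed.
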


\begin{proof}
The proof is straightforward using \eqref{eigen-Dr} and \eqref{eigen-Dl}.
\end{proof}

The representation \eqref{ABA_like} can be used notably for the eigenstates  $\ket{Q_\tau}$ and $\bra{Q_\tau}$ corresponding to the eigenvalue $\tau(\lambda)$ of the transfer matrix, by means of the unique polynomial $Q_\tau(\lambda)$ which, in virtue of Theorem~\ref{th-T-Q}, solves the functional equation associated with $\tau(\lambda)$.  Since the roots of $Q_\tau(\lambda)$ can be obtained as the solutions to the corresponding Bethe-type equations, the analogy with algebraic Bethe ansatz is then particularly obvious.
It is also possible to represent these eigenstates in a slightly different form. Let us to this aim define some other simple left and right separate state that we shall denote by $\bra{1_\mathrm{alt}}$  and $\ket{1_\mathrm{alt}}$ respectively, and which are given by a function $1_\mathrm{alt}(\la)$ which alternates sign between the inhomogeneity and shifted inhomogeneity parameters, namely
\begin{equation}
       1_\mathrm{alt}(\xi_a)=1,\quad 1_\mathrm{alt}(\xi_a-\eta)=-1,\quad�\forall a\in\{1,\dots N\}.
\end{equation}
This second representation for the eigenstates   $\ket{Q_\tau}$ and $\bra{Q_\tau}$ corresponding to the eigenvalue $\tau(\lambda)$ of the transfer matrix then uses the polynomial $Q_{-\tau}(\lambda)$ which solves the functional equation \eqref{hom-Baxter-Eq} for the eigenvalue $-\tau(\lambda)$ of the transfer matrix:

\begin{proposition}\label{prop-eigen-ABA}
Let $\tau(\la)$ be an eigenvalue of the antiperiodic transfer matrix \eqref{transfer} and let
\begin{equation}
   Q_\tau(\la)=\prod_{k=1}^R(\la-\la_k),\qquad Q_{-\tau}(\la)=\prod_{k=1}^{N-R}(\la-\widehat{\la}_k),
\end{equation}
be the unique solutions to the functional $T$-$Q$ equation \eqref{hom-Baxter-Eq} associated with the eigenvalues $\tau(\lambda)$ and $-\tau(\lambda)$ respectively.
Then the transfer matrix eigenstate $\ket{Q_\tau}$ with eigenvalue $\tau(\lambda)$ can be represented in following forms:
\begin{align}
    \ket{Q_\tau} &=(-1)^{RN}\prod_{k=1}^R D(\la_k)\, \ket{1} \label{Qtau-form1}\\
    &=(-1)^{(N-R)N}\frac{\prod\limits_{k=1}^R d(\la_a)}{\pl_{k=1}^{N-R} d(\widehat{\la}_a)}\pl_{k=1}^{N-R} D(\widehat{\la}_k)\, \ket{1_\mathrm{alt}}. \label{Qtau-form2}
\end{align}
\end{proposition}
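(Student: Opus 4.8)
The plan is to treat the two representations separately, the first being an immediate consequence of Proposition~\ref{prop-sep-ABA} and the second resting on the Wronskian relation of Theorem~\ref{secondQ}. The representation \eqref{Qtau-form1} is just \eqref{ABA_like} applied to $\a=Q_\tau$: by Theorem~\ref{th-eigenvector} the right eigenstate $\ket{Q_\tau}$ is the right separate state \eqref{right_sep-state} built on the function $Q_\tau$, whose roots are $\la_1,\dots,\la_R$, so Proposition~\ref{prop-sep-ABA} gives \eqref{Qtau-form1} at once.

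For \eqref{Qtau-form2} I would first compute the right-hand side as a separate state. Expanding $\ket{1_\mathrm{alt}}$ in the SoV basis via \eqref{right_sep-state} produces coefficients $(-1)^{|\mathbf h|}\,V(\{\xi+h\eta\})$, where $|\mathbf h|=\sul_{n=1}^N h_n$; acting with $\pl_{k=1}^{N-R}D(\widehat\la_k)$ and using \eqref{eigen-Dr} then multiplies the coefficient of $\ket{\mathbf h}$ by $\pl_{k=1}^{N-R}\pl_{n=1}^N(\widehat\la_k-\xi_n+h_n\eta)=(-1)^{(N-R)N}\pl_{n=1}^N Q_{-\tau}(\xi_n-h_n\eta)$. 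Hence $\pl_k D(\widehat\la_k)\ket{1_\mathrm{alt}}$ equals, up to the explicit factor $(-1)^{(N-R)N}$, the separate state whose $\mathbf h$-coefficient carries $(-1)^{|\mathbf h|}\pl_n Q_{-\tau}(\xi_n-h_n\eta)$.

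The key step is to match this against the coefficients of $\ket{Q_\tau}$, namely $\pl_n Q_\tau(\xi_n-h_n\eta)$. Since a separate state depends only on the $N$ ratios $Q(\xi_n-\eta)/Q(\xi_n)$, I would evaluate the identity \eqref{Wronskian_pq} (equivalently the two functional equations \eqref{hom-Baxter-Eq} written for $\tau$ and for $-\tau$) at the points $\la=\xi_n$, where $d(\xi_n)=0$, to obtain the crucial sign-flip relation $Q_\tau(\xi_n-\eta)/Q_\tau(\xi_n)=-\,Q_{-\tau}(\xi_n-\eta)/Q_{-\tau}(\xi_n)$. Inserting this into $\pl_n Q_\tau(\xi_n-h_n\eta)$ factors out exactly $(-1)^{|\mathbf h|}\pl_n Q_{-\tau}(\xi_n-h_n\eta)$ together with the $\mathbf h$-independent constant $\pl_n Q_\tau(\xi_n)/\pl_n Q_{-\tau}(\xi_n)$. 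The alternating signs built into $\ket{1_\mathrm{alt}}$ are precisely what absorbs the $(-1)^{|\mathbf h|}$, so the two states agree coefficient by coefficient up to that constant.

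It then remains to recast the constant $\pl_n Q_\tau(\xi_n)/\pl_n Q_{-\tau}(\xi_n)$ in the form appearing in \eqref{Qtau-form2}. Using $Q_\tau(\la)=\pl_{k=1}^R(\la-\la_k)$, $Q_{-\tau}(\la)=\pl_{k=1}^{N-R}(\la-\widehat\la_k)$ and the products \eqref{a-d}, one has $d(\la_k)=(-1)^N\pl_{n=1}^N(\xi_n-\la_k)$, whence $\pl_n Q_\tau(\xi_n)=(-1)^{RN}\pl_k d(\la_k)$ and likewise $\pl_n Q_{-\tau}(\xi_n)=(-1)^{(N-R)N}\pl_k d(\widehat\la_k)$, giving the ratio $\pl_{k=1}^R d(\la_k)\big/\pl_{k=1}^{N-R} d(\widehat\la_k)$ up to a sign. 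The main (and essentially the only) obstacle is the careful bookkeeping of the accumulated signs $(-1)^{RN}$, $(-1)^{(N-R)N}$ and $(-1)^{|\mathbf h|}$; collecting them, and simplifying with the elementary fact that $N(N+1)$ is always even so that $(-1)^{N(N+1)}=1$, yields the prefactor of \eqref{Qtau-form2}.
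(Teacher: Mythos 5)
Your strategy is the right one and is essentially the (implicit) argument of the paper: \eqref{Qtau-form1} is Proposition~\ref{prop-sep-ABA} applied to $\a=Q_\tau$, and \eqref{Qtau-form2} follows by expanding $\ket{1_\mathrm{alt}}$ in the SoV basis, diagonalizing the operators $D(\widehat{\la}_k)$ via \eqref{eigen-Dr}, and converting $(-1)^{|\mathbf h|}\prod_n Q_{-\tau}(\xi_n-h_n\eta)$ into $\prod_n Q_\tau(\xi_n-h_n\eta)$ through the relation $Q_\tau(\xi_n-\eta)/Q_\tau(\xi_n)=-\,Q_{-\tau}(\xi_n-\eta)/Q_{-\tau}(\xi_n)$, which you correctly obtain by evaluating the two $T$-$Q$ equations \eqref{hom-Baxter-Eq} at $\la=\xi_n$ where $d(\xi_n)=0$. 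The one step you do not actually carry out is the final sign bookkeeping, and there your assertion does not quite check out: collecting $(-1)^{(N-R)N}$ from the action of the $D$'s together with $\prod_n Q_\tau(\xi_n)=(-1)^{RN}\prod_k d(\la_k)$ and $\prod_n Q_{-\tau}(\xi_n)=(-1)^{(N-R)N}\prod_k d(\widehat{\la}_k)$, the overall prefactor comes out as $(-1)^{RN}$, which coincides with the printed $(-1)^{(N-R)N}$ only for even $N$, the two differing by $(-1)^{N^2}=(-1)^N$; the parity of $N(N+1)$ is not what is at stake here. Since this is an overall sign in the normalization of an eigenvector (and the statement already contains an evident index typo), it is almost certainly a misprint in the proposition rather than a flaw in your method, but the claim that the signs ``collect'' to the stated prefactor should either be computed explicitly or flagged as a discrepancy rather than asserted.
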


\section{Scalar products of separate states}

In this section, we explain how to compute the scalar products of separate states of the form \eqref{left_sep-state} and \eqref{right_sep-state}. In general, the latter can by construction be represented in terms of the determinant of a weighted sum of two Vandermonde matrices involving the inhomogeneity parameters of the model. We shall notably focus on the case where one of the two separate states is an eigenstate of the antiperiodic transfer matrix. We shall see that in this case it is possible to represent the corresponding scalar product in a more convenient form for the study of the homogeneous and thermodynamic limit, namely, similarly as what happens in the ABA framework for the scalar product of an on-shell and an off-shell Bethe states, in terms of the determinant  introduced in \cite{Sla89} (Slavnov determinant).

\subsection{General determinant representation for the scalar product of two separate states}

Let us start by recalling the general determinant representation for the scalar products of two separate states. Note that this type of representation is a direct consequence of the factorized form of these states in the SoV basis, and hence can be shown for a large variety of models solvable by SoV.

\begin{theorem}
Let $\bra{\alpha}$ be a left separate state of the form \eqref{left_sep-state}, and let $\ket{\beta}$ be a right separate state of the form \eqref{right_sep-state}.
Their scalar product can be written as
\begin{equation}\label{sp-sep}
\moy{ \alpha\, |\, \beta } =\frac{\det_{N}\big[\mathcal{M}^{(\alpha ,\beta)}\big]}{V(\{\xi\})},
\end{equation}
where $\mathcal{M}^{(\alpha,\beta)}$ corresponds to the following weighted sum of two Vandermonde matrices:
\begin{align}
\big[ \mathcal{M}^{( \alpha,\beta ) }\big]_{a,b}
&= \xi _{a} ^{b-1}\, \alpha (\xi _{a})\, \bar{\beta}(\xi _{a})
+(\xi _{a}-\eta) ^{b-1}\, \alpha (\xi _{a}-\eta)\,\bar{\beta}(\xi _{a}-\eta) \label{sp-det-sep1}\\
& =\a(\xi_a)\, \b(\xi_a)\left[\xi_a^{b-1}
  -\frac{a(\xi_a)\, \a(\xi_a-\eta)\,\b(\xi_a-\eta)}{d(\xi_a-\eta)\,\a(\xi_a)\,\b(\xi_a)}\, (\xi_a-\eta)^{b-1}\right].
 \label{sp-det-sep2}
\end{align}
\end{theorem}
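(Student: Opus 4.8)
The plan is to compute the scalar product directly from the factorized expressions of the two separate states in the SoV basis, using the orthogonality relation~\eqref{sp-kh}. First I would substitute the representation~\eqref{left_sep-state} for $\bra{\alpha}$ and the alternative representation~\eqref{right_sep_bar} for $\ket{\beta}$ (the latter being preferable to~\eqref{right_sep-state} precisely because, as emphasised in the excerpt, it is written with the \emph{same} Vandermonde factor $V(\{\xi-h\eta\})$ as the left state). Expanding the double sum over $\mathbf{h}$ and $\mathbf{k}$ and invoking $\moy{\mathbf{k}\,|\,\mathbf{h}}=\delta_{\mathbf{k},\mathbf{h}}/\bigl(V(\{\xi\})\,V(\{\xi-h\eta\})\bigr)$ collapses the two sums to a single diagonal sum over $\mathbf{h}\in\{0,1\}^N$. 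The two copies of $V(\{\xi-h\eta\})$ from the states cancel one power in the denominator, leaving
\begin{equation}
\moy{\alpha\,|\,\beta}=\frac{1}{V(\{\xi\})}\sum_{\mathbf{h}\in\{0,1\}^N}V(\{\xi-h\eta\})\prod_{a=1}^N\alpha(\xi_a-h_a\eta)\,\bar\beta(\xi_a-h_a\eta).
\end{equation}

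The heart of the argument is then to recognise this weighted sum over $\{0,1\}^N$ as a single $N\times N$ determinant, which will supply the $\det_N[\mathcal{M}^{(\alpha,\beta)}]$ in~\eqref{sp-sep}. The key observation is that $V(\{\xi-h\eta\})=\det_{a,b}\bigl[(\xi_a-h_a\eta)^{b-1}\bigr]$ is itself a Vandermonde determinant whose $a$-th row depends only on the single bit $h_a$. Summing a determinant over all independent choices of each row is exactly multilinearity of the determinant in its rows: $\sum_{\mathbf{h}}\prod_a w_a(h_a)\,\det[\text{row }a\text{ depends on }h_a]=\det\bigl[\sum_{h_a\in\{0,1\}}w_a(h_a)\,(\text{row }a)\bigr]$. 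Applying this with the row weight $w_a(h_a)=\alpha(\xi_a-h_a\eta)\,\bar\beta(\xi_a-h_a\eta)$ folds the entire sum into $\det_N$ of the matrix whose $(a,b)$ entry is the two-term sum
\begin{equation}
\alpha(\xi_a)\,\bar\beta(\xi_a)\,\xi_a^{b-1}+\alpha(\xi_a-\eta)\,\bar\beta(\xi_a-\eta)\,(\xi_a-\eta)^{b-1},
\end{equation}
which is precisely~\eqref{sp-det-sep1} (here $\bar\beta(\xi_a)=\beta(\xi_a)$ absorbs into the stated entry). The second form~\eqref{sp-det-sep2} then follows by factoring $\alpha(\xi_a)\beta(\xi_a)$ out of each row and inserting the explicit relation~\eqref{betabar}, namely $\bar\beta(\xi_a-\eta)=-\tfrac{a(\xi_a)}{d(\xi_a-\eta)}\beta(\xi_a-\eta)$, into the second term.

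The step I expect to require the most care is the multilinearity folding: one must verify that each bit $h_a$ genuinely controls only the $a$-th row of the Vandermonde determinant and the $a$-th factor of the product $\prod_a\alpha(\xi_a-h_a\eta)\bar\beta(\xi_a-h_a\eta)$, so that the sum over the $2^N$ configurations factorises row-by-row without cross terms. This is where the choice of the $\bar\beta$ representation pays off, since it guarantees the two states share the identical $V(\{\xi-h\eta\})$ weight and no residual $h$-dependent prefactor obstructs the factorisation. Once this is checked the passage to~\eqref{sp-det-sep1} is immediate, and deriving~\eqref{sp-det-sep2} is a routine substitution of~\eqref{betabar}.
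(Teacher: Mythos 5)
Your proposal is correct and follows exactly the route the paper intends: it declares this proof ``straightforward'' (deferring details to \cite{Nic13}), and the computation is precisely the expansion of both separate states in the SoV basis via \eqref{left_sep-state} and \eqref{right_sep_bar}, the orthogonality relation \eqref{sp-kh}, and the row-multilinearity identity expressing a sum over $\{0,1\}^N$ of weighted Vandermonde determinants as the determinant of a sum of two matrices --- the same identity the paper itself invokes explicitly in the proof of Identity~\ref{id-1}. Your emphasis on using $\bar\beta$ so that both states carry the identical weight $V(\{\xi-h\eta\})$ is exactly the point of the preparatory formulas \eqref{Vandermonde_trick}--\eqref{betabar}.
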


The proof of this theorem is straightforward. It was given in \cite{Nic13} in the case of a more general XXZ model.

Note that it is possible to use this determinant representation to show the (expected) orthogonality of the eigenstates of the antiperiodic transfer matrix, which therefore form an orthogonal basis of the quantum space of states of the model.

\begin{corollary}
Let $\tau(\la)$ and $\tau'(\la)$ be two different eigenvalues of the antiperiodic transfer matrix \eqref{transfer}.
Then the corresponding eigenstates are orthogonal:
 \begin{equation}
 \moy{Q_{\tau'}\, |\,Q_\tau}= \langle \,Q_\tau\,\ket{Q_{\tau'}}=0.
 \end{equation}
\end{corollary}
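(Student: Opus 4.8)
The plan is to deduce the orthogonality from the fact, established in Theorem~\ref{th-eigenvector}, that $\bra{Q_{\tau'}}$ is a \emph{left} eigenstate of the whole commuting family $\mathcal{T}(\la)$ with eigenvalue $\tau'(\la)$, while $\ket{Q_\tau}$ is a \emph{right} eigenstate with eigenvalue $\tau(\la)$. First I would sandwich the transfer matrix between the two states and evaluate it in two ways,
\begin{equation*}
\tau'(\la)\,\moy{Q_{\tau'}\,|\,Q_\tau} = \bra{Q_{\tau'}}\,\mathcal{T}(\la)\,\ket{Q_\tau} = \tau(\la)\,\moy{Q_{\tau'}\,|\,Q_\tau},
\end{equation*}
acting with $\mathcal{T}(\la)$ first to the left and then to the right. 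This yields $\big(\tau(\la)-\tau'(\la)\big)\,\moy{Q_{\tau'}\,|\,Q_\tau}=0$ for every value of the spectral parameter. Since $\tau$ and $\tau'$ are two \emph{distinct} eigenvalues, they are distinct polynomials, so there is at least one $\la_0$ with $\tau(\la_0)\neq\tau'(\la_0)$; choosing this $\la_0$ forces $\moy{Q_{\tau'}\,|\,Q_\tau}=0$. The identity $\langle Q_\tau\,|\,Q_{\tau'}\rangle=0$ follows by exchanging the roles of $\tau$ and $\tau'$. This is really nothing but the biorthogonality of the left and right eigenbases of a family with simple spectrum (Theorem~\ref{SoV-spectrum}), and it uses only the eigenvalue equations.

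Alternatively, and more in the spirit of the preceding theorem, one can reach the same conclusion from the determinant representation~\eqref{sp-sep} itself. Taking $\alpha=Q_{\tau'}$ and $\beta=Q_\tau$ in~\eqref{sp-det-sep1}, I would insert the discrete relations~\eqref{LeftTQxi}, namely $Q_\tau(\xi_a-\eta)=-\tau(\xi_a)\,Q_\tau(\xi_a)/a(\xi_a)$ and its analogue for $Q_{\tau'}$, together with the quadratic constraint~\eqref{dis-sys}, $a(\xi_a)\,d(\xi_a-\eta)=-\tau(\xi_a)\,\tau(\xi_a-\eta)$. After factoring the common prefactor $Q_{\tau'}(\xi_a)\,Q_\tau(\xi_a)$ out of each row, the matrix entries collapse to
\begin{equation*}
\big[\mathcal{M}^{(Q_{\tau'},Q_\tau)}\big]_{a,b} = Q_{\tau'}(\xi_a)\,Q_\tau(\xi_a)\left[\xi_a^{\,b-1}+\frac{\tau'(\xi_a)}{\tau(\xi_a-\eta)}\,(\xi_a-\eta)^{\,b-1}\right],
\end{equation*}
and the task becomes to show that the determinant of the bracketed matrix vanishes precisely when $\tau\neq\tau'$, while staying nonzero in the diagonal case $\tau=\tau'$, where it computes the squared norm.

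The main obstacle lies in this second, purely determinantal route: the vanishing is \emph{not} produced by an elementary row or column operation, since searching for a column combination lying in the kernel leads straight back to the very system whose determinant one is trying to annihilate. To make it work one has to exploit the full polynomial structure of $Q_\tau$ and $Q_{\tau'}$, for instance by producing an auxiliary polynomial of degree at most $N-1$ — built from a Wronskian-type combination of the two solutions of~\eqref{hom-Baxter-Eq} — that certifies the linear dependence of the rows of $\mathcal{M}^{(Q_{\tau'},Q_\tau)}$; this is essentially the Izergin/Slavnov machinery developed in the sequel. For the corollary itself, however, the operator argument of the first paragraph is both shorter and completely rigorous, so I would present that one as the proof and simply remark that it is consistent with, and could also be extracted from, the explicit formula~\eqref{sp-sep}.
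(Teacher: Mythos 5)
Your first argument (sandwiching $\mathcal{T}(\la)$ between the left eigenstate $\bra{Q_{\tau'}}$ and the right eigenstate $\ket{Q_\tau}$, then using that two distinct polynomial eigenvalues differ at some point $\la_0$) is correct and complete, and it is a genuinely different route from the paper's. The paper proves the corollary directly from the determinant representation \eqref{sp-sep}: writing $\tau(\la)=\sum_{b=1}^N c_b\la^{b-1}$ and $\tau'(\la)=\sum_{b=1}^N c'_b\la^{b-1}$, the nontrivial vector $v_b=c_b-c'_b$ annihilates the columns of $\mathcal{M}^{(Q_\tau,Q_{\tau'})}$, so the determinant vanishes. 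Your operator argument is shorter and uses only simplicity of the spectrum; the paper's argument has the merit of exercising the determinant formula around which the whole section is built (the corollary is explicitly presented as an application of it).

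Where your proposal goes astray is in the assessment of this second route. You claim the vanishing ``is not produced by an elementary row or column operation'' and would need the Izergin/Slavnov machinery or a Wronskian construction. In fact the column combination above works in one line. With your own form of the entries,
\begin{equation*}
\big[\mathcal{M}^{(Q_{\tau'},Q_\tau)}\big]_{a,b}
= Q_{\tau'}(\xi_a)\,Q_\tau(\xi_a)\left[\xi_a^{\,b-1}+\frac{\tau'(\xi_a)}{\tau(\xi_a-\eta)}\,(\xi_a-\eta)^{\,b-1}\right],
\end{equation*}
one computes
\begin{align*}
\sum_{b=1}^N v_b\left[\xi_a^{\,b-1}+\frac{\tau'(\xi_a)}{\tau(\xi_a-\eta)}(\xi_a-\eta)^{\,b-1}\right]
&=\tau(\xi_a)-\tau'(\xi_a)+\frac{\tau'(\xi_a)}{\tau(\xi_a-\eta)}\big[\tau(\xi_a-\eta)-\tau'(\xi_a-\eta)\big]\\
&=\tau(\xi_a)-\frac{\tau'(\xi_a)\,\tau'(\xi_a-\eta)}{\tau(\xi_a-\eta)}=0,
\end{align*}
the last step because both eigenvalues obey the \emph{same} quadratic constraint \eqref{dis-sys}, $\tau(\xi_a)\,\tau(\xi_a-\eta)=\tau'(\xi_a)\,\tau'(\xi_a-\eta)=-a(\xi_a)\,d(\xi_a-\eta)$ (which also guarantees $\tau(\xi_a-\eta)\neq 0$). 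This shared constraint is the ingredient you missed: the kernel vector is not ``the very system one is trying to annihilate'' but simply the coefficient vector of the polynomial $\tau-\tau'$, and it is nontrivial exactly when $\tau\neq\tau'$, consistently with the nonvanishing of the norm in the diagonal case. Since you present the operator argument as your actual proof, the proposal stands as a valid alternative; only the side remarks dismissing the determinantal route should be corrected.
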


\begin{proof}
$\tau(\la)$ and $\tau'(\la)$ are two different polynomials of degree $N-1$:
\begin{equation}
    \tau(\la)=\sul_{b=1}^N c_b\la^{b-1},\qquad \tau'(\la)=\sul_{b=1}^N c'_b\la^{b-1}.
\end{equation}
It means that the vector with components $v_b=c_b-c'_b$ is non trivial, and it is easy to check that
\begin{equation}
    \sum_{b=1}^N\big[\mathcal{M}^{(Q_\tau,Q_{\tau'})}\big]_{a,b} \, v_b=0.
\end{equation}
\end{proof}

The aim of this section is to show that it is possible to rewrite the expression \eqref{sp-sep} for the scalar product of two separate states into some more convenient forms for the consideration of the homogeneous and thermodynamic limit. We shall in particular link this formula to some (generalizations of some) other determinant representations that have already appeared  in the literature in the framework of algebraic Bethe ansatz: the {\em Izergin determinant},   first introduced as a representation for the partition function of the six-vertex model with domain wall boundary conditions \cite{Ize87}, and the {\em Slavnov determinant}, which appears in the ABA framework as a convenient representation for the scalar product of an on-shell and an off-shell Bethe vectors \cite{Sla89}.
To this aim, we shall first derive some identities that will be used for these reformulations.
 
\subsection{Some useful identities}
 
As announced above, we want to transform the representation \eqref{sp-sep} into a more convenient form for our purpose.
To this aim we introduce, following \cite{KosM12}, some convenient notations that we shall use throughout all this section.
For any set of complex numbers $\{x\}\equiv\{x_1,\dots ,x_M\}$, and a function $f$, we define
\begin{align}
E_{\{ x\} }^{\pm }(y)=&\prod_{n=1}^{N}\frac{y-x_{n}\pm \eta}{y-x_{n}},
\label{DefEpm}\\
\mathcal{A}_{\{ x\}}^{\pm }[ f ] =&\frac{\det_{M}\big[ x_{a}^{b-1}-f(x_{a})\,( x_{a}\pm\eta) ^{b-1}\big] }{V(\{x\}) }
\label{DefApm}
\end{align}
Note that the representation \eqref{sp-sep}-\eqref{sp-det-sep2} of the scalar product can easily be rewritten by means of the notations \eqref{DefEpm} and \eqref{DefApm}: it is enough to choose the functions $\alpha(\lambda)$ and $\beta(\lambda)$ to be some polynomials (which by interpolation is always possible) that we express in terms of their roots $\{\alpha_1,\ldots,\alpha_R\}$ and $\{\beta_1,\ldots,\beta_S\}$ as
\begin{equation}
\alpha (\lambda )=\prod_{n=1}^{R}(\lambda -\alpha _{n}),\qquad
\beta(\lambda )=\prod_{m=1}^{S}(\lambda -\beta _{m}),
\end{equation}
to obtain that
\begin{equation}\label{sp-rep2}
\moy{ \alpha\, |\, \beta }
 =\prod_{n=1}^N\Big(\a(\xi_n)\b(\xi_n)\Big)\ 
 \mathcal{A}^-_{\{\xi\}}\!\left[
             \frac{E^+_{\{\xi\}}}{E^-_{\{\xi\}}}\, E^-_{\{\a_1,\dots,\a_R\}\cup\{\b_1,\dots,\b_S\}}\right].
\end{equation}
We shall reformulate this expression by means of a few identities, involving the quantities \eqref{DefEpm} and \eqref{DefApm} and their relations to the Izergin and Slavnov determinants, and  that we now derive.

\bigskip

We start with some preliminary identity which relates the two functionals $\mathcal{A}_{\{ x\}}^+[f]$ and $\mathcal{A}_{\{ x\}}^-[f']$ \eqref{DefApm} when $f'$ is related to $f$ via a ratio of the two functions $E_{\{ x\} }^+$ and $E_{\{ x\} }^-$ \eqref{DefEpm}:
\begin{identity}\label{id-1}
For any set $\{x\}\equiv\{x_1,\dots ,x_M\}$ of arbitrary complex numbers, we have
\begin{equation} \label{Identity-q}
   \mathcal{A}_{\{ x\}}^{\pm }[ f ] 
   =\mathcal{A}_{\{ x\}}^{\mp }\!
    \left[ -\frac{ E_{ \{ x\} }^{\pm }}{E_{\left\{x\right\} }^{\mp }}f\right].
\end{equation}
\end{identity}

\begin{proof}
 To prove this identity we use an expression for the determinant of the sum of two matrices, 
\begin{align}
V(\{x\})\ &\mathcal{A}_{\{ x\} }^{-}\!\left[ \frac{E_{\{ x\} }^{+}}{E_{\{x\} }^{-}}f\right]\nonumber\\
&=\sum_{h_{1}=0}^{1}\cdots\sum_{h_{M}=0}^{1}\prod_{n=1}^{M}
\left( - f(x_{n})\prod_{m=1}^{M}\frac{x_{n}-x_{m}+\eta}{x_{n}-x_{m}-\eta}\right)^{\! h_{n}}\, V(\{x-h\eta\})\nonumber\\
&=\sum_{h_{1}=0}^{1}\cdots \sum_{h_{M}=0}^{1}\left( \prod_{n=1}^{M}
 f(x_{n})\right)\, V(\{x+h\eta\})\nonumber\\
&=V(\{x\})\ \mathcal{A}_{\{ x\} }^{+}[ f]\vphantom{\prod_{n=1}^{M}}.
\end{align}
We have also used here the identity \eqref{Vandermonde_trick} for the Vandermonde determinants. 
\end{proof}

\bigskip

We shall now formulate a generalization of a result of Kostov  \cite{Kos120,Kos12,KosM12,BetK14,Kos12} concerning the relation between the functionals $\mathcal{A}_{\{x\}}^\pm[ f ]$ and the Izergin determinant (see also \cite{Garb14}).
 
For $\mu\in\mathbb{C}$ and two sets  $\{x\}\equiv\{x_1,\dots,x_N\}$ and $\{y\}\equiv\{y_1,\dots,y_N\}$ of arbitrary complex numbers, we introduce the following function, that we call {\em generalized Izergin determinant}:
 \begin{equation}\label{Ize-det}
 \mathcal{I}^{(\mu)}_N(\{x\},\{y\})=\frac{\pl_{a,b=1}^N(x_a-y_b+\eta)}{V(x_1,\dots, x_N)\, V(y_N,\dots,y_1)}\, \det_N \big[ t_\mu(x_a-y_b)\big],
 \end{equation}
 where we have defined
 \begin{equation}
 \label{t_mu}
 t_\mu(x)=\frac\mu x-\frac 1{x+\eta}.
 \end{equation}
 For $\mu=1$ this formula gives the partition function of the rational six-vertex model with domain wall boundary conditions \cite{Ize87}.
 
\begin{identity}\label{prop-Izergin-rel}
 For $\mu\in\mathbb{C}$ and any sets $\{x_1,\dots,x_N\}$ and $\{y_1,\dots,y_N\}$ of arbitrary complex numbers, we have
\begin{align}
  \mathcal{I}^{(\mu)}_N(\{x\},\{y\})
  &=(-1)^N \mathcal{A}^-_{\{x\}}\!\left[\mu E^+_{\{y\}}\right] \label{Iz-rel1}\\
  &=(-1)^N \mathcal{A}^+_{\{y\}}\!\left[\mu E^-_{\{x\}}\right].\label{Izergin_relation}
\end{align}
\end{identity}

\begin{proof}
 We shall prove here the first equality, the second one can be proven  in the same way (or by replacing $\eta $ by $-\eta$ in the Izergin determinant).
 
Let us first introduce the following auxiliary matrix with elements,
\begin{equation}
  \mathcal{M}_{ab}= \pl_{k=1}^N(x_a-y_k+\eta)\, t_\mu(x_a-y_b),
\end{equation}
so as to rewrite \eqref{Ize-det} as
\begin{equation}
   \mathcal{I}^{(\mu)}_N(\{x\},\{y\})=\frac{\det_N \mathcal{M}}{V(x_1,\dots, x_N)\, V(y_N,\dots,y_1)}.
\end{equation}
We also introduce a set of $N$ polynomials of degree $N-1$,
\begin{equation}
     Z_b(x)=\pl_{k\neq b} (x-y_k+\eta)=\sul_{j=1}^N \mathcal{C}_{j,b} \,x^{j-1},
     \qquad 1\le b\le N.
\end{equation}
The coefficients $\mathcal{C}_{j,b}$ of these polynomials can be seen as the entries of an $N\times N$ matrix $\mathcal{C}$. It is easy to observe that
\begin{align}
\mathcal{M}_{ab}
   &=\mu  Z_b(x_a-\eta)\pl_{k=1}^N\frac {x_a-y_k+\eta}{x_a-y_k}-Z_b(x_a)\nonumber\\
   &=\sul_{j=1}^N\left(\mu E^+_{\{y\}}(x_a)\,  (x_a-\eta)^{j-1}-x_a^{j-1}\right)\mathcal{C}_{j,b},
\end{align}
which means that $\mathcal{M}$ can be factorized into a product of two matrices, so that
\begin{equation}
   \det_N \mathcal{M}
   =(-1)^N\, V(\{x\})\ \mathcal{A}^-_{\{x\}}\!\left[\mu E^+_{\{y\}}\right]\,\det_N \mathcal{C}.
 \end{equation}
It now remains to compute the determinant of the matrix $\mathcal{C}$. It is easy to see from the definition of the coefficients $\mathcal{C}_{j,b}$ that the product  of $\mathcal{C}$ with a Vandermonde matrix is diagonal:
\begin{equation}
\sul_{j=1}^N (y_a-\eta)^{j-1}\, \mathcal{C}_{j,b}=\delta_{a,b}\pl_{j\neq a}(y_a-y_j),
\end{equation}
which means that
$\det_N \mathcal{C}= V(y_N,\dots,y_1)$.
\end{proof} 

Note that the Izergin determinant \eqref{Ize-det} is defined only in the case where the cardinality of the two sets of parameters $\{x\}$ and $\{y\}$ are equal, so that Identity~\ref{prop-Izergin-rel} is a priori valid only in this case. To overcome this restriction, it is however possible to generalize the part of Identity~\ref{prop-Izergin-rel} which concerns the relation between the two functions $ \mathcal{A}^-_{\{x\}} \big[\mu E^+_{\{y\}}\big]$ and $\mathcal{A}^+_{\{y\}}\big[\mu E^-_{\{x\}}\big]$ to cases in which the number of parameters $x$ and $y$ does not coincide.

 \begin{identitybis}{prop-Izergin-rel}
 \label{limit_trick}
For two sets $\{x_1,\dots,x_M\}$ and $\{y_1,\dots, y_N\}$ of arbitrary complex numbers and $\mu\neq 1$, we have
 \begin{equation}\label{id-Mnot=N}
 \mathcal{A}^+_{\{y\}}\!\left[\mu E^-_{\{x\}}\right]
 =(1-\mu)^{N-M}\mathcal{A}^-_{\{x\}}\!\left[\mu E^+_{\{y\}}\right] .
 \end{equation}
 \end{identitybis}
 
\begin{proof}
 The case $N=M$ follows immediately from Identity~\ref{prop-Izergin-rel}.
 Let us therefore suppose that $M>N$ (the case $M<N$ can be proven in the same way starting from the right hand side). The proof is based on the following simple observation\footnote{Here and in the following we mean that the limits are taken independently one at the time.},
\begin{equation}
\mathcal{A}^+_{\{y\}}\!\left[\mu E^-_{\{x\}}\right]
=(1-\mu)^{N-M}\lim_{v_1,\dots,v_{M-N}\rightarrow\infty} 
\mathcal{A}^+_{\{y\}\cup\{v_1,\dots,v_{M-N}\}}\!\left[\mu E^-_{\{x\}}\right],
\end{equation}
from which we can use  Identity~\ref{prop-Izergin-rel} and obtain
\begin{equation}
\mathcal{A}^+_{\{y\}}\left[\mu E^-_{\{x\}}\right]
=(1-\mu)^{N-M}\lim_{v_1,\dots,v_{M-N}\rightarrow\infty} 
\mathcal{A}^-_{\{x\}}\left[\mu E^+_{\{y\}\cup\{v_1,\dots,v_{M-N}\}}\right].
\end{equation}
 Computing the limits we obtain \eqref{id-Mnot=N}.
 \end{proof}
 
Note that for $N>M$ both sides of \eqref{id-Mnot=N} are polynomials in $\mu$, so that the identity holds also in the case $\mu=1$. This leads to the following simple observation: 

\begin{corollary}
  \label{zero_overlap}
For arbitrary sets  $\{x_1,\dots,x_M\}$ and $\{y_1,\dots, y_N\}$ of complex numbers such that $N>M$, we have
\begin{equation}
\mathcal{A}^\pm_{\{y\}}\!\left[ E^\mp_{\{x\}}\right]=0.
\end{equation}
\end{corollary}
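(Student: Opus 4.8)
The plan is to deduce Corollary~\ref{zero_overlap} directly from Identity~\ref{limit_trick} by specializing $\mu=1$. The key observation, already noted just before the corollary, is that when $N>M$ both sides of \eqref{id-Mnot=N} are polynomials in $\mu$: indeed the right-hand side carries an explicit factor $(1-\mu)^{N-M}$, and the functionals $\mathcal{A}^\pm_{\{\cdot\}}[\mu E^\mp_{\{\cdot\}}]$ are themselves polynomial (in fact affine) in $\mu$, since $\mu$ enters $E^\pm$ only linearly through the bracketed argument of the determinant in \eqref{DefApm}. Hence Identity~\ref{limit_trick}, which was established for $\mu\neq 1$, extends by polynomial continuity to $\mu=1$.

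First I would set $\mu=1$ in \eqref{id-Mnot=N}. With $N>M$ the exponent $N-M$ is a strictly positive integer, so the factor $(1-\mu)^{N-M}$ vanishes at $\mu=1$, forcing
\begin{equation}
\mathcal{A}^+_{\{y\}}\!\left[ E^-_{\{x\}}\right]=0.
\end{equation}
This is exactly the upper-sign case of the corollary. For the lower-sign case $\mathcal{A}^-_{\{y\}}[E^+_{\{x\}}]=0$, I would simply repeat the argument after the substitution $\eta\rightarrow-\eta$, which interchanges the roles of $E^+$ and $E^-$ and of $\mathcal{A}^+$ and $\mathcal{A}^-$ throughout \eqref{DefEpm}--\eqref{DefApm}; alternatively one notes that the whole derivation of Identity~\ref{limit_trick} is symmetric under this flip, so the same vanishing holds with all signs reversed.

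The only genuine point requiring care is the justification that the identity survives the limit $\mu\to 1$, i.e.\ that no implicit singularity in $\mu$ hides in the functionals. This is where I would be most careful: I would verify explicitly from the definition \eqref{DefApm} that $\mathcal{A}^\pm_{\{y\}}[\mu E^\mp_{\{x\}}]$ is a polynomial in $\mu$ of degree at most $N$, by expanding the determinant along the rows and observing that each entry $y_a^{b-1}-\mu E^\mp_{\{x\}}(y_a)\,(y_a\pm\eta)^{b-1}$ is affine in $\mu$. Granting this, both sides of \eqref{id-Mnot=N} are polynomials agreeing on the infinite set $\mu\neq 1$, hence they agree identically, and in particular at $\mu=1$. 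I expect this continuity step to be the main (and only) obstacle, but it is routine once the polynomial dependence on $\mu$ is made explicit.
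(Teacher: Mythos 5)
Your proposal is correct and follows essentially the same route as the paper, which derives Corollary~\ref{zero_overlap} precisely by observing that both sides of \eqref{id-Mnot=N} are polynomials in $\mu$ when $N>M$, so that the identity extends to $\mu=1$ where the factor $(1-\mu)^{N-M}$ kills the right-hand side. Your additional care in checking the affine dependence of the determinant entries on $\mu$, and the $\eta\to-\eta$ flip to obtain the other sign, are both sound and merely make explicit what the paper leaves implicit.
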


\bigskip
 
The next identity that we shall derive will enable us to relate the scalar product of two separate states, one of them being an eigenstate of the transfer matrix,  with the Slavnov formula 
 \cite{Sla89} for the scalar product of an off-shell and an on-shell Bethe vectors in the framework of the algebraic Bethe ansatz.
 
For $\mu\in\mathbb{C}$, for two sets of parameters $\{x\}\equiv\{x_1,\dots, x_M\}$ and $\{y\}\equiv\{y_1,\dots, y_M\}$, and a set of inhomogeneity parameters $\{\xi\}\equiv\{\xi_1,\ldots,\xi_N\}$ (which may possibly coincide) 
with $M\le N$, we introduce the following function, that we call {\em Slavnov determinant}:
\begin{equation}\label{Slavnov-det}
 \mathcal{S}_M^{(\mu)}(\{x\},\{y\}|\{\xi\})
 =\frac{\pl_{j,k=1}^M(x_j-y_k+\eta)}{V(x_1,\dots, x_M)\, V(y_M,\dots,y_1)}\,
   \det_M\mathcal{H}^{(\mu)}(\{x\},\{y\}|\{\xi\}).
\end{equation}
In \eqref{Slavnov-det}, the matrix $\mathcal{H}^{(\mu)}(\{x\},\{y\}|\{\xi\})$ is defined by its elements as
\begin{equation}
   \big[\mathcal{H}^{(\mu)}(\{x\},\{y\}|\{\xi\})\big]_{j k}
   =\mu E^+_{\{\xi\}}(y_k)\,t(x_j-y_k)-  \frac{E^+_{\{x\}}(y_k)}{E^-_{\{x\}}(y_k)}\,t(y_k-x_j),
\end{equation}
where $t(x)\equiv t_1(x)$ is given by \eqref{t_mu}.

As already mentioned, this formula appeared initially  in the ABA framework as a representation for the scalar product of a Bethe state associated with generic parameters $\{y\}$ (off-shell Bethe state) and a Bethe state associated with Bethe roots $\{x\}$  satisfying a system of Bethe equations (on-shell Bethe state) for an inhomogeneous model with inhomogeneities $\xi_1,\dots,\xi_N$ and with a twist $\mu$:
\begin{equation}
 \label{Bethe_equation}
 \mu E^+_{\{\xi\}}(x_m)=-\frac{E^+_{\{x\}}(x_m)}{E^-_{\{x\}}(x_m)}.
 \end{equation}
Note at this point that the Bethe equations \eqref{Bethe_anti} introduced in the previous section correspond to a special case of \eqref{Bethe_equation} with $\mu=-1$.

\begin{identity}
  \label{Slavnov_relation_th}
For $\mu\in\mathbb{C}$, for $\{x_1,\dots, x_M\}$ a solution to the Bethe equations \eqref{Bethe_equation} associated with a set of inhomogeneity parameters $\{\xi_1,\dots,\xi_N\}$,  and for $\{y_1,\dots, y_M\}$ a set of arbitrary complex numbers, we have
\begin{equation}
 \label{Slavnov_relation}
 \mathcal{S}_M^{(\mu)}(\{x\},\{y\}|\{\xi\})=\mathcal{A}^-_{\{x\}\cup\{y\}}\left[\mu E^+_{\{\xi\}}\right].
\end{equation}
\end{identity}

\begin{proof}
Let us first suppose that the parameters $x_j$ are pairwise distinct, and let $Q(\la)$ be the normalized polynomial of degree $M$ with roots $x_1,\ldots,x_M$:
\begin{equation}
 Q(\la)=\pl_{j=1}^M(\la-x_j).
\end{equation}
We  define from $Q(\lambda)$ $M$ different polynomials $Q_k(\la)$ of degree $M-1$ as follows:
\begin{equation}
Q_k(\la)=\pl_{j\neq k}(\la-x_j), \qquad 1\le k \le M.
\end{equation}
Then, similarly as what has been done in the proof of Identity~\ref{prop-Izergin-rel}, we can define a $2M\times 2M$ matrix $\mathcal{C}$ from the coefficients of the following $2M$ polynomials:
\begin{alignat}{2}
&Q_k(\la)\, Q_k(\la+\eta)=\sul_{a=1}^{2M}\mathcal{C}_{k,a}\,\la^{a-1}, &
\qquad &1\le k \le M,\\
&Q_k(\la)\, Q(\la+\eta)=\sul_{a=1}^{2M}\mathcal{C}_{k+M,a}\,\la^{a-1}, &
\qquad &1\le k \le M.
\end{alignat}
Evidently the coefficients $\mathcal{C}_{k,2M}$ all vanish for $k=1,\dots,M$, but the $2M\times 2M$ matrix $\mathcal{C}$ is invertible, and it is not difficult to compute its determinant using the following identities:
\begin{gather}
  \sul_{a=1}^{2M}\mathcal{C}_{b,a}\, (x_k-\eta)^{a-1}
  =\delta_{b,k}\,  Q_k (x_k)\, Q_k (x_k-\eta),
  \\
  \sul_{a=1}^{2M}\mathcal{C}_{b,a}\, x_k^{a-1}
  = \delta_{b,k+M} \, Q(x_k+\eta)\, Q_k (x_k)+\delta_{b,k} \, Q_k (x_k) \, Q_k (x_k+\eta).
\end{gather}
It means that the product of the matrix $\mathcal{C}$ with the Vandermonde matrix constructed from the variables $x_1-\eta,\ldots,x_M-\eta,x_1,\ldots,x_M$ is triangular, and therefore the determinant of the matrix $\mathcal{C}$ is given as
\begin{align}
\det_{2M}\mathcal{C}
&= \frac{\pl_{k=1}^M \left[ Q^2_k (x_k)\, Q_k (x_k-\eta)\, Q (x_k+\eta) \right] }{V(x_1-\eta,\dots,x_M-\eta,x_1\dots,x_M)}
\nonumber\\
&=(-1)^{\frac{M(M+1)}{2}}\pl_{k=1}^M \big[ Q_k (x_k)\, Q_k (x_k+\eta)\big].
\end{align}

On the other hand let us compute the following product,  
\begin{equation}\label{prodA-C}
    \mathcal{A}^-_{\{x\}\cup\{y\}}\!\left[\mu E^+_{\{\xi\}}\right]\ \det_{2M} \mathcal{C}
    =\frac{\det_{2M}\left(
 \begin{array}{cc}
 \mathcal{G}^{(1,1)}&\mathcal{G}^{(1,2)}\\\mathcal{G}^{(2,1)}&\mathcal{G}^{(2,2)}\end{array}\right)}{V(\{x\}\cup\{y\})},
\end{equation}
where the matrix in the numerator is presented in a block form, $\mathcal{G}^{(p,q)}$ ($1\le p,q\le 2$) being  $M\times M$ matrices which can easily be computed. For instance, the upper left block $\mathcal{G}^{(1,1)}$ has for elements
\begin{align}
  \big[\mathcal{G}^{(1,1)}\big]_{j,k}
  &=\sul_{a=1}^{2M}\mathcal{C}_{j,a}\left(x_k^{a-1}-\mu  E^+_{\{\xi\}}(x_k)\, (x_k-\eta)^{a-1}\right)\nonumber\\
  &= Q_j(x_k)\, Q_j(x_k+\eta)-\mu E^+_{\{\xi\}}(x_k)\, Q_j(x_k)\, Q_j(x_k-\eta)
  \nonumber\\
  &= \delta_{jk}\, Q_k(x_k) \left(Q_k(x_k+\eta)-\mu E^+_{\{\xi\}}(x_k)\, Q_k(x_k-\eta)\right) ,
\end{align}
so that $ \mathcal{G}^{(1,1)}=0$ due to the Bethe equations \eqref{Bethe_equation}. It means in particular that we do not need to compute the lower right block $ \mathcal{G}^{(2,2)}$ to have access to the quantity \eqref{prodA-C}. The lower left block  $\mathcal{G}^{(2,1)}$ is a diagonal matrix,
\begin{align}
  \big[ \mathcal{G}^{(2,1)} \big]_{j,k}
   &=\sul_{a=1}^{2M}\mathcal{C}_{j+M,a}\left(x_k^{a-1}-\mu  E^+_{\{\xi\}}(x_k)\, (x_k-\eta)^{a-1}\right)\nonumber\\
   &=Q_j(x_k)\, Q(x_k+\eta)-\mu E^+_{\{\xi\}}(x_k)\, Q(x_k)\, Q_j(x_k-\eta)\nonumber\\
   &=\delta_{jk}\, Q_k(x_k)\,Q(x_k+\eta).
\end{align}
Finally, the upper right block $\mathcal{G}^{(1,2)}$ is given by
\begin{align}
     \big[\mathcal{G}^{(1,2)}\big]_{j,k}
     &=\sul_{a=1}^{2M}\mathcal{C}_{j,a}\left(y_k^{a-1}-\mu  E^+_{\{\xi\}}(y_k) \, (y_k-\eta)^{a-1}\right)
     \nonumber\\ 
    &=Q_j(y_k)\, Q_j(y_k+\eta)-\mu E^+_{\{\xi\}}(y_k)\, Q_j(y_k) \, Q_j(y_k-\eta)\nonumber\\
    &=-\frac 1\eta\, Q(y_k)\, Q(y_k-\eta) \,  \big[\mathcal{H}^{(\mu)}(\{x\},\{y\}|\{\xi\})\big]_{j k}.
\end{align}
It means that
\begin{multline}
   \mathcal{A}^-_{\{x\}\cup\{y\}}\!\left[\mu E^+_{\{\xi\}}\right]\, \det_{2M} \mathcal{C}
   =\frac{\pl_{k=1}^M \left[ Q(y_k)\, Q(y_k-\eta)\, Q_k(x_k)\,Q_k(x_k+\eta)\right]}{V(\{x\}\cup\{y\})}\,
   \\
   \times \det_M\mathcal{H}(\{x\},\{y\}|\{\xi\}),
\end{multline}
which leads directly to the identity \eqref{Slavnov_relation}.

Note that both $\mathcal{S}_M^{(\mu)}(\{x\},\{y\}|\{\xi\})$ and $\mathcal{A}^-_{\{x\}\cup\{y\}}\left[\mu E^+_{\{\xi\}}\right]$ formally contain zero over zero terms if computed for two or more coinciding $x_i$. So that the identity \eqref{Slavnov-det} has to be meant as a limit to the Bethe roots in the case of two or several coinciding roots. In fact we can introduce a set of parameters $\{x^{(\epsilon)}\}$ pairwise distinct for $\epsilon>0$ and converging to the solution of the Bethe equations $\{x\}$ when $\epsilon\rightarrow 0$. We have to prove that the l.h.s. and r.h.s. of \eqref{Slavnov-det} are finite and coincide under this limit. The identity \ref{id-Mnot=N}, being proven for arbitrary values of the parameters, implies that $\mathcal{A}^-_{\{x^{(\epsilon)}\}\cup\{y\}}\left[\mu E^+_{\{\xi\}}\right]$ has a smooth limit for $\epsilon\rightarrow 0$. Now we can repeat the proof above developed for pairwise distinct roots checking that at any step all remains finite and that one just reproduces $\mathcal{S}_M^{(\mu)}(\{x^{(\epsilon)}\},\{y\}|\{\xi\})$ for $\epsilon\rightarrow 0$. 
 \end{proof} 
 
 If the number $M$ of parameters $x$ is exactly half of the length of the chain we can directly apply Identity~\ref{prop-Izergin-rel} and Identity~\ref{Slavnov_relation_th} to obtain the following result:
 
 \begin{corollary}
 Let $N=2M$. Then, for $\mu\in\mathbb{C}$, for $\{x_1,\dots, x_M\}$ a solution to the Bethe equations \eqref{Bethe_equation} associated with a set of inhomogeneity parameters $\{\xi_1,\dots,\xi_N\}$,  and for $\{y_1,\dots, y_M\}$ a set of arbitrary complex numbers, we have
  \begin{align}
 \label{Slavnov_relation2}
 \mathcal{S}_M^{(\mu)}(\{x\},\{y\}|\{\xi\})
 &=\mathcal{A}^+_{\{\xi\}}\left[\mu E^-_{\{x\}\cup\{y\}}\right]
 \\
 &=\mathcal{I}^{(\mu)}_N(\{x\}\cup\{y\},\{\xi\}).
 \end{align}
 \end{corollary}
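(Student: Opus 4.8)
The plan is to obtain both equalities by directly chaining the two identities established just above, exploiting the fact that when $N=2M$ the union $\{x\}\cup\{y\}$ has exactly $N$ elements, matching the cardinality of the inhomogeneity set $\{\xi\}$. First I would invoke Identity~\ref{Slavnov_relation_th}. Since here $M\le N=2M$ and $\{x_1,\dots,x_M\}$ is assumed to solve the Bethe equations \eqref{Bethe_equation}, that identity applies verbatim and gives
\begin{equation*}
\mathcal{S}_M^{(\mu)}(\{x\},\{y\}|\{\xi\})=\mathcal{A}^-_{\{x\}\cup\{y\}}\!\left[\mu E^+_{\{\xi\}}\right].
\end{equation*}
This already produces the right-hand side of \eqref{Slavnov_relation} specialized to the present situation; the only remaining work is to re-express the functional $\mathcal{A}^-_{\{x\}\cup\{y\}}[\mu E^+_{\{\xi\}}]$ in the two desired forms.

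Next I would apply Identity~\ref{prop-Izergin-rel} with the substitution $\{x\}\to\{x\}\cup\{y\}$ and $\{y\}\to\{\xi\}$. This substitution is legitimate precisely because both sets now have cardinality $N$: the hypothesis $N=2M$ guarantees $|\{x\}\cup\{y\}|=2M=N=|\{\xi\}|$, which is exactly the equal-cardinality condition under which the generalized Izergin determinant $\mathcal{I}^{(\mu)}_N$ and Identity~\ref{prop-Izergin-rel} are defined. That identity then yields simultaneously
\begin{equation*}
\mathcal{A}^-_{\{x\}\cup\{y\}}\!\left[\mu E^+_{\{\xi\}}\right]
=(-1)^N\,\mathcal{I}^{(\mu)}_N(\{x\}\cup\{y\},\{\xi\})
=\mathcal{A}^+_{\{\xi\}}\!\left[\mu E^-_{\{x\}\cup\{y\}}\right].
\end{equation*}
Combining this with the previous display gives the two asserted equalities, up to the overall sign $(-1)^N$.

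Finally I would dispose of that sign by noting that $N=2M$ is even, so $(-1)^N=1$ and the three expressions coincide exactly as stated. I do not expect any serious obstacle here: the argument is purely a matter of matching cardinalities so that Identity~\ref{prop-Izergin-rel} (rather than its unequal-cardinality refinement, Identity~\ref{limit_trick}) is applicable, and then reading off the chain of equalities. The only point requiring a word of care is the same one already flagged in the proof of Identity~\ref{Slavnov_relation_th}: if some of the $x_j$ (or the $y_k$) coincide, the functionals $\mathcal{A}^\pm$ contain formal zero-over-zero terms, so the statement must be read as the corresponding limit of pairwise-distinct parameters, a limit which is smooth by Identity~\ref{limit_trick}.
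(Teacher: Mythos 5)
Your proposal is correct and follows exactly the route the paper intends: chain Identity~\ref{Slavnov_relation_th} with Identity~\ref{prop-Izergin-rel} applied to the sets $\{x\}\cup\{y\}$ and $\{\xi\}$, which have equal cardinality precisely because $N=2M$, and observe that the prefactor $(-1)^N$ equals $1$ since $N$ is even. The paper's own justification is the one-line remark that one can ``directly apply'' these two identities, so your write-up is simply a more explicit version of the same argument (and your explicit handling of the sign and of the coinciding-roots limit is a welcome addition rather than a deviation).
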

 
 This relation is of prime importance for the computation of the scalar products of separate states. The inconvenient restriction  on the number of parameters $x$ and $y$ can be relaxed thanks to Identity~\ref{limit_trick}, which enables us to formulate the following generalization of \eqref{Slavnov_relation2}:
\begin{corollary}
   Let $\{x_1,\dots, x_M\}$ be a  solution of the Bethe equations \eqref{Bethe_equation} associated with the set of inhomogeneity parameters $\{\xi_1,\ldots,\xi_N\}$ and the twist $\mu$,  and let $\{y_1,\dots, y_M\}$ be a set of arbitrary complex numbers. Then the following relation holds:
  \begin{equation}
   \mathcal{S}_M^{(\mu)}(\{x\},\{y\}|\{\xi\})=(1-\mu)^{2M-N}\mathcal{A}^+_{\{\xi\}}\left[\mu E^-_{\{x\}\cup\{y\}}\right].
  \end{equation}
\end{corollary}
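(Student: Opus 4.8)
The plan is to derive the stated formula simply by composing two identities already proved in this section, so that no genuinely new computation is needed; the entire content is the bookkeeping of set cardinalities and the resulting power of $(1-\mu)$. First I would apply Identity~\ref{Slavnov_relation_th}. Since $\{x_1,\dots,x_M\}$ is by hypothesis a solution of the Bethe equations \eqref{Bethe_equation} for the inhomogeneities $\{\xi_1,\dots,\xi_N\}$ and twist $\mu$, that identity applies directly and yields
\begin{equation*}
 \mathcal{S}_M^{(\mu)}(\{x\},\{y\}|\{\xi\})=\mathcal{A}^-_{\{x\}\cup\{y\}}\!\left[\mu E^+_{\{\xi\}}\right],
\end{equation*}
in which the functional $\mathcal{A}^-$ now acts on the set $\{x\}\cup\{y\}$ of cardinality $2M$, while $E^+_{\{\xi\}}$ is built from the $N$ inhomogeneity parameters.

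The second step is to rewrite this right-hand side in the desired form $\mathcal{A}^+_{\{\xi\}}[\mu E^-_{\{x\}\cup\{y\}}]$ by invoking Identity~\ref{limit_trick}, which holds for parameter sets of arbitrary (unequal) cardinality. Here I would identify the composite set $\{x\}\cup\{y\}$, of size $2M$, with the set playing the role of ``$\{x\}$'' in \eqref{id-Mnot=N}, and the inhomogeneities $\{\xi\}$, of size $N$, with the set playing the role of ``$\{y\}$''. With this reading \eqref{id-Mnot=N} becomes
\begin{equation*}
 \mathcal{A}^+_{\{\xi\}}\!\left[\mu E^-_{\{x\}\cup\{y\}}\right]
 =(1-\mu)^{\,N-2M}\,\mathcal{A}^-_{\{x\}\cup\{y\}}\!\left[\mu E^+_{\{\xi\}}\right],
\end{equation*}
and after inverting the explicit factor this is $\mathcal{A}^-_{\{x\}\cup\{y\}}[\mu E^+_{\{\xi\}}]=(1-\mu)^{2M-N}\mathcal{A}^+_{\{\xi\}}[\mu E^-_{\{x\}\cup\{y\}}]$. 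Substituting into the output of the first step gives exactly the claimed relation, the exponent $2M-N$ emerging as the cardinality difference $|\{x\}\cup\{y\}|-|\{\xi\}|$.

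The only delicate point, which I would flag as the main (though mild) obstacle, is the excluded value $\mu=1$ in Identity~\ref{limit_trick}. I would dispose of it exactly as in the remark following that identity: after multiplying through by a suitable nonnegative power of $(1-\mu)$ to clear any pole, both members become polynomials in $\mu$; since they agree on the cofinite set $\mu\neq1$, they agree identically, in particular at $\mu=1$. Beyond this edge case and the determination of the sign of the exponent, the proof is a purely formal concatenation of Identities~\ref{Slavnov_relation_th} and~\ref{limit_trick}, which is precisely why the resulting Slavnov-type representation inherits their universal, model-independent character.
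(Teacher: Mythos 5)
Your proof is correct and follows exactly the route the paper intends: the corollary is obtained by composing Identity~\ref{Slavnov_relation_th} with Identity~\ref{limit_trick}, the exponent $2M-N$ being just the cardinality mismatch between $\{x\}\cup\{y\}$ and $\{\xi\}$. Your remark on clearing the $\mu=1$ exclusion by a polynomiality argument is, if anything, more careful than the paper, which leaves this point implicit.
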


\bigskip

Finally, we would like to generalize the previous results to cases for which the number of parameters $x$ and $y$ are not obligatory the same. To this aim we introduce the following generalization of the Slavnov determinant \eqref{Slavnov-det}, which is defined for two sets of parameters $\{x\}\equiv\{x_1,\dots,x_M\}$ and $\{y\}\equiv\{y_1,\dots, y_{M+S}\}$, with $S\ge 0$,  a set of inhomogeneity parameters $\{\xi\}\equiv\{\xi_1,\ldots,\xi_N\}$ and a twist $\mu$: 
 \begin{multline}\label{Slavnov-gen}
 \mathcal{S}_{M,M+S}^{(\mu)}(\{x\},\{y\}|\{\xi\})
 =\frac{\pl_{j=1}^M\pl_{k=1}^{M+S}(x_j-y_k+\eta)}{V(x_1,\dots, x_M)\, V(y_{M+S},\dots,y_1)}\\
 \times
 \det_{M+S}\widetilde{\mathcal{H}}^{(\mu)}(\{x\},\{y\}|\{\xi\}),
 \end{multline}
 with
 \begin{alignat*}{2}
 \big[ \widetilde{\mathcal{H}}^{(\mu)}(\{x\},\{y\}|\{\xi\})\big]_{j k}
 &=\mu E^+_{\{\xi\}}\!(y_k)\,t(x_j-y_k)-  \frac{E^+_{\{x\}}\!(y_k)}{E^-_{\{x\}}\!(y_k)}\, t(y_k-x_j),
       &\  &\text{if}\  j\le M,
       \nonumber\\
 &=\mu E^+_{\{\xi\}}\!(y_k)\,y_k^{j-M-1}-  \frac{E^+_{\{x\}}\!(y_k)}{E^-_{\{x\}}\!(y_k)} (y_k+\eta)^{j-M-1},
       &\ \,&\text{if}\  j> M.
 \end{alignat*}
This type of object was first introduced in \cite{FodW12}. This enables us to formulate the last identity of this subsection:

 \begin{identity}
   \label{gen_Slavnov_relation_th}
Let $\{x_1,\dots, x_M\}$ be a  solution of the Bethe equations \eqref{Bethe_equation} associated with the set of inhomogeneity parameters $\{\xi_1,\ldots,\xi_N\}$ and the twist $\mu$,  and let $\{y_1,\dots, y_{M+S}\}$ be a set of arbitrary complex numbers. Then,
\begin{equation}
 \label{gen_Slavnov_relation}
 \mathcal{S}_{M,M+S}^{(\mu)}(\{x\},\{y\}|\{\xi\})=\mathcal{A}^-_{\{x\}\cup\{y\}}\!\left[\mu E^+_{\{\xi\}}\right]. \end{equation}
\end{identity}
 
The proof of this identity follows the same lines as the proof of \eqref{Slavnov_relation}. It is however more cumbersome, so that we  give its details in Appendix A.

\subsection{An alternative representation for the scalar product of two separate states}

We shall now use the identities that have been derived in the previous subsection to rewrite the representation \eqref{sp-sep} (or \eqref{sp-rep2}) for the scalar product of two generic separate states $\bra{\alpha}$ and $\ket{\beta}$ into a form for which the consideration of the homogeneous limit is completely straightforward.  

Let us first notice that Identity~\ref{id-1} enables us to rewrite \eqref{sp-rep2} into a slightly simpler form:

\begin{proposition}\label{prop-sp-sep}
Let $\a(\la)$ and $\b(\la)$ be two polynomials of respective degree $R$ and $S$, and which can be factorized as
\begin{equation}
\alpha (\lambda )=\prod_{n=1}^{R}(\lambda -\alpha _{n}),\qquad
\beta(\lambda )=\prod_{m=1}^{S}(\lambda -\beta _{m}),
\end{equation}
in terms of some sets of roots $\{\alpha_1,\ldots,\alpha_R\}$ and $\{\beta_1,\ldots,\beta_S\}$.
Then the scalar product of the corresponding separate states $\bra{\alpha}$ and $\ket{\beta}$ can be written in the following form:
\begin{equation}
  \moy{ \alpha\, |\, \beta }
   =(-1)^{N(R+S)}\prod_{j=1}^R d(\alpha_j)\prod_{k=1}^S d(\beta_k) \ 
   \mathcal{A}^+_{\{\xi\}}\!\left[ -E^-_{\{\a_1,\dots,\a_R\}\cup\{\b_1,\dots,\b_S\}}\right],
\label{scalar_product_A}
\end{equation}
where we have used the notations \eqref{DefEpm}-\eqref{DefApm}.
\end{proposition}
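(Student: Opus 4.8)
The plan is to start from the representation \eqref{sp-rep2} and to use Identity~\ref{id-1} in order to convert the functional $\mathcal{A}^-_{\{\xi\}}$ into a functional $\mathcal{A}^+_{\{\xi\}}$ whose argument no longer carries the ratio $E^+_{\{\xi\}}/E^-_{\{\xi\}}$. Concretely, once $\alpha$ and $\beta$ are written in the factorized polynomial form of the statement, \eqref{sp-rep2} reads
\begin{equation*}
\moy{\alpha\,|\,\beta} = \prod_{n=1}^N\big(\alpha(\xi_n)\,\beta(\xi_n)\big)\,
\mathcal{A}^-_{\{\xi\}}\!\left[\frac{E^+_{\{\xi\}}}{E^-_{\{\xi\}}}\, E^-_{\{\alpha_1,\dots,\alpha_R\}\cup\{\beta_1,\dots,\beta_S\}}\right].
\end{equation*}
First I would invoke Identity~\ref{id-1} with $\{x\}=\{\xi\}$ in its lower-sign form, $\mathcal{A}^-_{\{\xi\}}[f]=\mathcal{A}^+_{\{\xi\}}\big[-(E^-_{\{\xi\}}/E^+_{\{\xi\}})\,f\big]$, applied to $f=(E^+_{\{\xi\}}/E^-_{\{\xi\}})\,E^-_{\{\alpha\}\cup\{\beta\}}$ (using the shorthand $\{\alpha\}\cup\{\beta\}$ for $\{\alpha_1,\dots,\alpha_R\}\cup\{\beta_1,\dots,\beta_S\}$). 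The two factors $E^+_{\{\xi\}}/E^-_{\{\xi\}}$ and $E^-_{\{\xi\}}/E^+_{\{\xi\}}$ cancel, leaving exactly $\mathcal{A}^+_{\{\xi\}}\!\big[-E^-_{\{\alpha\}\cup\{\beta\}}\big]$, i.e.\ the functional appearing on the right-hand side of \eqref{scalar_product_A}.

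It then remains to identify the scalar prefactor. Using the factorization of $\alpha$ together with the definition \eqref{a-d} of $d$, I would compute
\begin{equation*}
\prod_{n=1}^N\alpha(\xi_n)=\prod_{j=1}^R\prod_{n=1}^N(\xi_n-\alpha_j)=(-1)^{NR}\prod_{j=1}^R d(\alpha_j),
\end{equation*}
since each inner product satisfies $\prod_{n=1}^N(\xi_n-\alpha_j)=(-1)^N\prod_{n=1}^N(\alpha_j-\xi_n)=(-1)^N d(\alpha_j)$, and likewise $\prod_{n=1}^N\beta(\xi_n)=(-1)^{NS}\prod_{k=1}^S d(\beta_k)$. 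Multiplying the two contributions produces the global sign $(-1)^{N(R+S)}$ together with the product of $d$'s in \eqref{scalar_product_A}, which completes the identification.

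The computation is a direct substitution, so there is no genuine obstacle; the only points requiring care are purely bookkeeping. The first is to apply Identity~\ref{id-1} with the correct choice of signs, so that the unwanted factor $E^+_{\{\xi\}}/E^-_{\{\xi\}}$ is cancelled rather than doubled. The second is to keep track of the factor $(-1)^N$ produced by each reversal $(\xi_n-\alpha_j)=-(\alpha_j-\xi_n)$ when turning the evaluations of $\alpha$ and $\beta$ at the inhomogeneities into the functions $d$. Since $\alpha$ and $\beta$ may be taken polynomial without loss of generality (by interpolation, the separate state depending only on the $N$ ratios $\alpha(\xi_j-\eta)/\alpha(\xi_j)$ and $\beta(\xi_j-\eta)/\beta(\xi_j)$), these manipulations are unambiguous and yield \eqref{scalar_product_A} exactly.
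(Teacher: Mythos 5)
Your proposal is correct and follows exactly the route the paper intends: start from \eqref{sp-rep2}, apply Identity~\ref{id-1} (lower-sign form) so that the ratio $E^+_{\{\xi\}}/E^-_{\{\xi\}}$ cancels, and convert $\prod_n\alpha(\xi_n)\beta(\xi_n)$ into $(-1)^{N(R+S)}\prod_j d(\alpha_j)\prod_k d(\beta_k)$. The sign bookkeeping is handled correctly, so nothing is missing.
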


We can now use the identities~\ref{prop-Izergin-rel} and~\ref{limit_trick}, and we obtain the following result: 

\begin{theorem}
Under the same hypothesis and notations as in Proposition~\ref{prop-sp-sep}, the scalar product of the two separate states $\bra{\alpha}$ and $\ket{\beta}$ can be written as:
\begin{multline}
  \moy{ \alpha\, |\, \beta }
   =(-1)^{N(R+S)}\,2^{N-(R+S)}\prod_{j=1}^R d(\alpha_j)\prod_{k=1}^S d(\beta_k)\\
   \times 
   \mathcal{A}^-_{\{\a_1,\dots,\a_R\}\cup\{\b_1,\dots,\b_S\}}\!\left[ -E^-_{\{\xi\}}\right].
\label{scalar_product_B}
\end{multline}
In the particular case where $R+S=N$, the scalar product of the two separate states $\bra{\alpha}$ and $\ket{\beta}$ can be written in terms of the generalized Izergin determinant \eqref{Ize-det} as:
\begin{multline}
  \moy{ \alpha\, |\, \beta }
   =(-1)^{N(R+S+1)}\prod_{j=1}^R d(\alpha_j)\prod_{k=1}^S d(\beta_k)\\
   \times \mathcal{I}_N^{(-1)}(\{\alpha_1,\ldots,\alpha_R\}\cup\{\beta_1,\ldots,\beta_S\},\{\xi\}). 
\label{scalar_product_C}
\end{multline}
\end{theorem}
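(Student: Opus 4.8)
The plan is to take the representation \eqref{scalar_product_A} of Proposition~\ref{prop-sp-sep} as the starting point and feed it through the two determinant identities already established in this subsection, carefully tracking the sign and power-of-two prefactors. Throughout I abbreviate $\{z\}\equiv\{\alpha_1,\dots,\alpha_R\}\cup\{\beta_1,\dots,\beta_S\}$, a set of cardinality $R+S$, while $\{\xi\}$ has cardinality $N$; the crux of the matter is that these two cardinalities need not coincide, so the object appearing in \eqref{scalar_product_A}, namely $\mathcal{A}^+_{\{\xi\}}\!\left[-E^-_{\{z\}}\right]$, relates two sets of unequal size.

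To obtain \eqref{scalar_product_B} I would apply Identity~\ref{limit_trick} — the unequal-size generalization of the Izergin relation — to that functional. Matching the left-hand side of \eqref{id-Mnot=N} with $\{y\}=\{\xi\}$ (so the ``$N$'' of the identity is indeed $N$), with $\{x\}=\{z\}$ (so the ``$M$'' of the identity is $R+S$), and with $\mu=-1$, the identity exchanges the roles of the two sets: it turns $\mathcal{A}^+_{\{\xi\}}$ into $\mathcal{A}^-_{\{z\}}$ and the $E$-factor built on $\{z\}$ into the one built on $\{\xi\}$, at the cost of the prefactor $(1-\mu)^{N-(R+S)}=2^{N-(R+S)}$. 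Since $\mu=-1\neq 1$ the identity applies with no caveat, so $\mathcal{A}^+_{\{\xi\}}\!\left[-E^-_{\{z\}}\right]=2^{N-(R+S)}\,\mathcal{A}^-_{\{z\}}\!\left[-E^+_{\{\xi\}}\right]$. Collecting this factor with the prefactor $(-1)^{N(R+S)}\prod_j d(\alpha_j)\prod_k d(\beta_k)$ already present in \eqref{scalar_product_A} yields \eqref{scalar_product_B}.

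For the second statement \eqref{scalar_product_C} I would specialize to $R+S=N$. Now $\{z\}$ has exactly $N$ elements, the same as $\{\xi\}$, so the unrestricted relation Identity~\ref{prop-Izergin-rel} applies directly. Using its form \eqref{Izergin_relation}, namely $\mathcal{I}^{(\mu)}_N(\{x\},\{y\})=(-1)^N\mathcal{A}^+_{\{y\}}\!\left[\mu E^-_{\{x\}}\right]$, with $\{x\}=\{z\}$, $\{y\}=\{\xi\}$ and $\mu=-1$, I would rewrite $\mathcal{A}^+_{\{\xi\}}\!\left[-E^-_{\{z\}}\right]=(-1)^N\,\mathcal{I}^{(-1)}_N(\{z\},\{\xi\})$. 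Substituting this into \eqref{scalar_product_A} and combining $(-1)^{N(R+S)}$ with the extra $(-1)^N$ into $(-1)^{N(R+S+1)}$ gives \eqref{scalar_product_C}, where the Izergin determinant is that of \eqref{Ize-det}.

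The computation is essentially bookkeeping once the correct identity is selected, so the step I would treat most carefully — and the only genuine subtlety — is the choice between Identity~\ref{prop-Izergin-rel} and its generalization Identity~\ref{limit_trick}. In the generic regime $R+S\neq N$ the Izergin determinant \eqref{Ize-det} is simply undefined (it requires two sets of equal cardinality), so one is forced through the unequal-size version; it is precisely this detour that generates the characteristic factor $2^{N-(R+S)}=(1-\mu)^{N-(R+S)}\big|_{\mu=-1}$. Fixing the orientation of the set-exchange and the accompanying signs is where an error would most easily creep in, so as a final check I would specialize \eqref{scalar_product_B} to $R+S=N$: there $2^{N-(R+S)}=1$, and feeding $\mathcal{A}^-_{\{z\}}\!\left[-E^+_{\{\xi\}}\right]$ back through the other form of Identity~\ref{prop-Izergin-rel} must reproduce \eqref{scalar_product_C}, confirming that the two displayed representations agree on their common domain.
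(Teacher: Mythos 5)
Your derivation is correct and is exactly the argument the paper intends: the paper itself offers no written proof beyond ``use Identities~\ref{prop-Izergin-rel} and~\ref{limit_trick}'', and you supply precisely that bookkeeping, including the correct identification of which set plays the role of $\{x\}$ and which of $\{y\}$ in \eqref{id-Mnot=N}, whence the factor $(1-\mu)^{N-(R+S)}\big|_{\mu=-1}=2^{N-(R+S)}$ and, for $R+S=N$, the extra $(-1)^N$ giving \eqref{scalar_product_C}. One caveat you should have flagged: what your (correct) application of Identity~\ref{limit_trick} produces is $\mathcal{A}^-_{\{\alpha\}\cup\{\beta\}}\bigl[-E^{+}_{\{\xi\}}\bigr]$, whereas the displayed equation \eqref{scalar_product_B} carries $E^{-}_{\{\xi\}}$; these are not equal, and it is the $E^{+}$ version that is right --- this follows from consistency with \eqref{scalar_product_C} via \eqref{Iz-rel1} at $R+S=N$ (the very cross-check you propose), and is confirmed by the direct computation at $N=1$, $R=1$, $S=0$, where $\moy{\alpha\,|\,1}=2\xi_1-2\alpha_1-\eta=-d(\alpha_1)\bigl(1+E^{+}_{\{\xi\}}(\alpha_1)\bigr)$ while the $E^{-}$ version gives $2\xi_1-2\alpha_1+\eta$. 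So the printed $E^{-}_{\{\xi\}}$ in \eqref{scalar_product_B} is a typo, and asserting that your expression ``yields \eqref{scalar_product_B}'' as written papers over a genuine (if typographical) discrepancy that your own final check would have exposed had you run it against the printed formula rather than against the formula you derived.
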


Note that, as announced before, the representation \eqref{scalar_product_B} is completely smooth with respect to the homogeneous limit in which all the inhomogeneity parameters $\xi_j$, $1\le j\le N$, tend to the same value.

\subsection{The scalar product of  a generic separate state with an eigenstate of the transfer matrix} 

We shall now focus on the case in which one of the two separate states is an eigenstate $\ket{Q_\tau}$ of the transfer matrix. As we shall see, the knowledge of such scalars products is indeed sufficient to obtain some adequate representations for the form factors of local operators. This fact has to be put in relation with what happens in the ABA framework, where the consideration of the scalar products of off-shell and on-shell Bethe vectors is used to compute the local spin form factors \cite{KitMT99}. 

In the whole subsection, $\bra{\alpha}$ will denote a separate state associated with a given polynomial $\alpha(\lambda)$ of degree $M$,
\begin{equation}
\alpha (\lambda )=\prod_{n=1}^{M}(\lambda -\alpha _{n}),
\end{equation}
and $\ket{Q_\tau}$ will denote a given eigenstate of the antiperiodic transfer matrix, associated with an eigenvalue $\tau(\lambda)$. We recall that such an eigenstate can be written in two different forms, either by using the polynomial
\begin{equation}
   Q_\tau(\lambda)=\pl_{k=1}^R(\la-\la_k)
\end{equation}
satisfying the functional $T$-$Q$ equation \eqref{hom-Baxter-Eq} with $\tau(\lambda)$ itself, or by using the polynomial
\begin{equation}
   Q_{-\tau}(\lambda)=\pl_{k=1}^{N-R}(\la-\widehat{\la}_k)
\end{equation}
satisfying the $T$-$Q$-equation with $-\tau(\lambda)$ (see Proposition~\ref{prop-eigen-ABA}).
We also recall that in this framework the set of roots $\{\la_1,\dots,\la_R \}$ of $Q_\tau(\lambda)$ satisfies the Bethe equations \eqref{Bethe_equation} with $\mu=-1$ (as well as the set of roots $\{\widehat{\la}_1,\dots,\widehat{\la}_{N-R} \}$ of $Q_{-\tau}(\lambda)$). 

\begin{proposition}
The scalar product $\moy{ \a\,|\, Q_\tau}$ can be written in the two possible following forms:
\begin{align}
   \moy{ \a\,|\, Q_\tau} 
   &=(-1)^{N(R+M)}\, \pl_{n=1}^M d(\a_n)\, \pl_{k=1}^R d(\la_k) \ 
        \mathcal{A}^+_{\{\xi\}}\!\left[ -E^-_{\{\la\}\cup\{\a\}}\right]
         \label{withlambda}\\
   &=(-1)^{N(N-R+M)}\, \pl_{n=1}^M d(\a_n)\, \pl_{k=1}^{N-R} d(\widehat{\la}_k) \ 
         \mathcal{A}^+_{\{\xi\}}\!\left[ E^-_{\{\widehat{\la}\}\cup\{\a\}}\right] .
         \label{withlambdahat}
\end{align}
\end{proposition}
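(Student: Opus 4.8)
The plan is to obtain the two forms from the two distinct representations of the eigenstate $\ket{Q_\tau}$ provided by Proposition~\ref{prop-eigen-ABA}. The representation \eqref{Qtau-form1} exhibits $\ket{Q_\tau}$ as the right separate state associated, in the sense of \eqref{right_sep-state}, with the polynomial $\beta=Q_\tau$ of degree $R$ and roots $\{\lambda_1,\dots,\lambda_R\}$. Form \eqref{withlambda} is then nothing but Proposition~\ref{prop-sp-sep}, equation \eqref{scalar_product_A}, applied with this choice of $\beta$: substituting there $S\to R$ and $\{\beta_1,\dots,\beta_S\}\to\{\lambda_1,\dots,\lambda_R\}$ reproduces \eqref{withlambda} verbatim, so this half requires no new computation.

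For \eqref{withlambdahat} I would instead start from the second representation \eqref{Qtau-form2}, which rewrites the \emph{same} eigenstate as $\prod_{k=1}^{N-R}D(\widehat\lambda_k)\ket{1_\mathrm{alt}}$ up to the prefactor $(-1)^{(N-R)N}\prod_k d(\lambda_k)/\prod_k d(\widehat\lambda_k)$. The first step is to identify $\prod_k D(\widehat\lambda_k)\ket{1_\mathrm{alt}}$ as a right separate state: since $D(\lambda)$ acts diagonally on the SoV basis by \eqref{eigen-Dr}, expanding $\ket{1_\mathrm{alt}}$ in that basis and collecting the eigenvalue factor $\prod_k d_{\mathbf{h}}(\widehat\lambda_k)=(-1)^{(N-R)N}\prod_{n}Q_{-\tau}(\xi_n-h_n\eta)$ shows that this state is of the form \eqref{right_sep-state} with an effective function $\tilde\beta$ obeying $\tilde\beta(\xi_n)=(-1)^{N-R}Q_{-\tau}(\xi_n)$ and $\tilde\beta(\xi_n-\eta)=-(-1)^{N-R}Q_{-\tau}(\xi_n-\eta)$, the extra minus sign at the shifted points being precisely the alternation built into $\ket{1_\mathrm{alt}}$. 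Feeding $\tilde\beta$ into the general determinant representation \eqref{sp-det-sep2} (or equivalently \eqref{sp-rep2}) and reorganising into the functional notation \eqref{DefApm} turns the $-E^-$ of \eqref{withlambda} into the $+E^-_{\{\widehat\lambda\}\cup\{\alpha\}}$ of \eqref{withlambdahat}, since the sign flip of $\tilde\beta$ at the shifted arguments flips the sign of the second contribution in each row of the matrix.

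The delicate point, and the main obstacle, is the bookkeeping of the overall sign and of the $d$-function prefactors: one must combine the $(-1)^{(N-R)N}$ and the ratio $\prod_k d(\lambda_k)/\prod_k d(\widehat\lambda_k)$ coming from \eqref{Qtau-form2} with the per-row factor $(-1)^{N-R}$ and the products $\prod_n\alpha(\xi_n)$, $\prod_nQ_{-\tau}(\xi_n)$ extracted when passing to \eqref{DefApm}, and verify that the net result is exactly the prefactor $(-1)^{N(N-R+M)}\prod_n d(\alpha_n)\prod_k d(\widehat\lambda_k)$ claimed in \eqref{withlambdahat}. As a guiding consistency check I would confirm a priori that \eqref{Qtau-form1} and \eqref{Qtau-form2} describe one and the same separate state, which, by the observation following \eqref{right_sep-state}, amounts to comparing the ratios at the inhomogeneities: evaluating the functional equation \eqref{hom-Baxter-Eq} at $\lambda=\xi_n$ (where $d(\xi_n)=0$) gives $Q_\tau(\xi_n-\eta)/Q_\tau(\xi_n)=-\tau(\xi_n)/a(\xi_n)$ and, for the $-\tau$ solution, $Q_{-\tau}(\xi_n-\eta)/Q_{-\tau}(\xi_n)=\tau(\xi_n)/a(\xi_n)$, so that $Q_\tau(\xi_n-\eta)/Q_\tau(\xi_n)=-Q_{-\tau}(\xi_n-\eta)/Q_{-\tau}(\xi_n)$, which matches the ratio of $\tilde\beta$ and thereby certifies that both representations yield the same eigenstate and hence the same scalar product.
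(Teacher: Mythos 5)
Your treatment of \eqref{withlambda} is correct and coincides with the paper's own one\--line argument: $\ket{Q_\tau}$ is by construction the separate state attached to the polynomial $Q_\tau$, so \eqref{scalar_product_A} applies verbatim with $\beta=Q_\tau$, $S=R$. For \eqref{withlambdahat} the mechanism you isolate is also the one the paper invokes: everything rests on the ratio identity $Q_\tau(\xi_a-\eta)/Q_\tau(\xi_a)=-Q_{-\tau}(\xi_a-\eta)/Q_{-\tau}(\xi_a)$, which you derive correctly from the Baxter equation at $\lambda=\xi_a$ using $d(\xi_a)=0$; the detour through \eqref{Qtau-form2} and the state $\ket{1_\mathrm{alt}}$ is a more roundabout packaging of the same fact, and your identification of the effective function $\tilde\beta$ is right.

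The genuine gap is that you postpone the prefactor bookkeeping, and that postponed step is not a routine sign chase: it does not close onto the constant printed in \eqref{withlambdahat}. Indeed, the ratio identity says precisely that the two functions $-E^-_{\{\lambda\}\cup\{\alpha\}}$ and $E^-_{\{\widehat{\lambda}\}\cup\{\alpha\}}$ take the \emph{same} values at every $\xi_a$, and $\mathcal{A}^+_{\{\xi\}}[f]$ depends on $f$ only through $f(\xi_1),\dots,f(\xi_N)$; hence the two $\mathcal{A}^+$ factors in \eqref{withlambda} and \eqref{withlambdahat} are one and the same number, and the two displayed prefactors would have to coincide, i.e.\ one would need $(-1)^{NR}\prod_{k=1}^{R}d(\lambda_k)=(-1)^{N(N-R)}\prod_{k=1}^{N-R}d(\widehat{\lambda}_k)$. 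This fails in general: for $N=R=2$ one has $Q_{-\tau}=1$, and the Wronskian relation \eqref{Wronskian_pq} gives $\prod_{k}d(\lambda_k)=\tfrac14\,Q_\tau(\lambda_1-\eta)\,Q_\tau(\lambda_2-\eta)=-\tfrac14\,\eta^2(\xi_1-\xi_2)^2\neq 1$. Concretely, if you carry your plan to the end, the normalization ratio $\prod_k d(\lambda_k)/\prod_k d(\widehat{\lambda}_k)$ inherited from \eqref{Qtau-form2} does \emph{not} cancel against $\prod_a\tilde\beta(\xi_a)=\prod_k d(\widehat{\lambda}_k)$, and you land on $(-1)^{N(R+M)}\prod_{n}d(\alpha_n)\prod_{k=1}^{R}d(\lambda_k)\,\mathcal{A}^+_{\{\xi\}}\big[E^-_{\{\widehat{\lambda}\}\cup\{\alpha\}}\big]$ — the argument of \eqref{withlambdahat} but the prefactor of \eqref{withlambda}, which is the only prefactor consistent with the first line. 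So the verification you defer is exactly where the computation and the printed statement part company: the proof of \eqref{withlambdahat} cannot be completed with the constant as stated, only up to the overall factor $(-1)^{N}\prod_kd(\lambda_k)/\prod_kd(\widehat{\lambda}_k)$. (Only the $\mathcal{A}^+$ factor matters for the later vanishing statement \eqref{sp_less}, which survives; but a proof of the proposition must pin down the constant, and yours leaves open precisely that point.)
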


\begin{proof}
The first line is a direct consequence of \eqref{scalar_product_A}. The second one can be obtain in the same way from the fact that
\begin{equation}
   \frac{Q_\tau(\xi_a-\eta)}{Q_\tau(\xi_a)}=-\frac{Q_{-\tau}(\xi_a-\eta)}{Q_{-\tau}(\xi_a)}.
\end{equation}
\end{proof}

We can now use the identities obtained in the previous subsection to prove the following results for the scalar product of the separate state $\bra{\alpha}$ with the eigenstate $\ket{Q_\tau}$.

\begin{theorem}\label{th-sc-pdt}
If $M<R$ the scalar product between $\bra{\alpha}$ and  $\ket{Q_\tau}$ vanishes:
\begin{equation}
   \moy{ \a\,|\, Q_\tau} =0.
\label{sp_less}
\end{equation}
If $M=R$ the scalar product  between $\bra{\alpha}$ and  $\ket{Q_\tau}$  can be written either as an Izergin determinant \eqref{Ize-det},
\begin{equation}
   \moy{ \a\,|\, Q_\tau} =\left(\pl_{n=1}^M d(\a_n)\pl_{k=1}^{N-M} d(\widehat{\la}_k)\right) \ 
 \mathcal{I}^{(1)}_N(\{\a\}\cup\{\widehat{\la}\},\{\xi\}),
 \label{sp_izergin}
\end{equation}
or as a Slavnov determinant \eqref{Slavnov-det},
\begin{equation}
   \moy{ \a\,|\, Q_\tau} =(-1)^M\, 2^{N-2M}\left(\pl_{n=1}^M d(\a_n)\, d(\la_n)\right) \,
 \mathcal{S}_M^{(-1)}(\{\la\},\{\a\}|\{\xi\}).
 \label{sp_slavnov}
\end{equation}
If $M>R$ the scalar product  between $\bra{\alpha}$ and  $\ket{Q_\tau}$ can be written as a generalized Slavnov determinant \eqref{Slavnov-gen}:
\begin{equation}
  \moy{ \a\,|\, Q_\tau}=(-1)^R\, 2^{N-M-R}\left(\pl_{n=1}^M d(\a_n)\pl_{k=1}^R d(\la_k)\right) \,
 \mathcal{S}_{R,M}^{(-1)}(\{\la\},\{\a\}|\{\xi\}).
 \label{sp_more}
\end{equation}
\end{theorem}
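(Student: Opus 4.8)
The plan is to feed the two equivalent representations \eqref{withlambda} and \eqref{withlambdahat} of $\moy{\a\,|\,Q_\tau}$ into the chain of identities established in the previous subsection, selecting in each regime the representation whose cardinalities match the hypotheses of the identity we wish to invoke. The key observation is that \eqref{withlambda} exhibits $\moy{\a\,|\,Q_\tau}$ through $\mathcal{A}^+_{\{\xi\}}\!\left[-E^-_{\{\la\}\cup\{\a\}}\right]$, where the set $\{\la\}\cup\{\a\}$ has $R+M$ elements and $\{\la\}$ (of size $R$) solves the Bethe equations \eqref{Bethe_equation} with $\mu=-1$, whereas \eqref{withlambdahat} exhibits it through $\mathcal{A}^+_{\{\xi\}}\!\left[E^-_{\{\widehat{\la}\}\cup\{\a\}}\right]$, where $\{\widehat{\la}\}\cup\{\a\}$ has $(N-R)+M$ elements and $\{\widehat{\la}\}$ (of size $N-R$) likewise solves the Bethe equations. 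Comparing these cardinalities with $N=\card\{\xi\}$ is exactly what distinguishes the three cases.

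For the case $M<R$ I would start from \eqref{withlambdahat}. The argument of $\mathcal{A}^+_{\{\xi\}}$ is then built on the set $\{\widehat{\la}\}\cup\{\a\}$ of cardinality $N-R+M$, which is strictly smaller than $N$ precisely when $M<R$. Since the prefactors $\prod_n d(\a_n)$ and $\prod_k d(\widehat{\la}_k)$ are finite, Corollary~\ref{zero_overlap} applied with the $N$ parameters $\{\xi\}$ and the $N-R+M<N$ parameters $\{\widehat{\la}\}\cup\{\a\}$ forces $\mathcal{A}^+_{\{\xi\}}\!\left[E^-_{\{\widehat{\la}\}\cup\{\a\}}\right]=0$, hence \eqref{sp_less}.

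For the case $M=R$ both unions become square, and the two representations produce the two claimed forms. Starting from \eqref{withlambdahat}, the set $\{\widehat{\la}\}\cup\{\a\}$ has exactly $N$ elements, so Identity~\ref{prop-Izergin-rel} at $\mu=1$ converts $\mathcal{A}^+_{\{\xi\}}\!\left[E^-_{\{\widehat{\la}\}\cup\{\a\}}\right]$ directly into the Izergin determinant $\mathcal{I}^{(1)}_N(\{\a\}\cup\{\widehat{\la}\},\{\xi\})$; the sign $(-1)^N$ coming from \eqref{withlambdahat} at $M=R$ cancels against the $(-1)^N$ in Identity~\ref{prop-Izergin-rel}, yielding \eqref{sp_izergin}. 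Starting instead from \eqref{withlambda}, the set $\{\la\}$ of size $R=M$ solves the $\mu=-1$ Bethe equations and $\{\la\}\cup\{\a\}$ has $2M$ elements, so the corollary relating the Slavnov determinant to $\mathcal{A}^+_{\{\xi\}}$ (equivalently Identity~\ref{Slavnov_relation_th} combined with Identity~\ref{limit_trick} at $\mu=-1$) replaces $\mathcal{A}^+_{\{\xi\}}\!\left[-E^-_{\{\la\}\cup\{\a\}}\right]$ by $2^{N-2M}\,\mathcal{S}_M^{(-1)}(\{\la\},\{\a\}|\{\xi\})$, yielding \eqref{sp_slavnov}.

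For the case $M>R$ I would again use \eqref{withlambda}, where now $\{\la\}$ (size $R$) solves the Bethe equations and $\{\a\}$ (size $M>R$) is generic: this is precisely the rectangular configuration covered by the generalized Slavnov determinant. First, Identity~\ref{limit_trick} at $\mu=-1$ flips $\mathcal{A}^+_{\{\xi\}}\!\left[-E^-_{\{\la\}\cup\{\a\}}\right]$ into $2^{N-R-M}\,\mathcal{A}^-_{\{\la\}\cup\{\a\}}\!\left[-E^+_{\{\xi\}}\right]$; then Identity~\ref{gen_Slavnov_relation_th}, with solution set $\{\la\}$ and generic set $\{\a\}$, identifies the latter functional with $\mathcal{S}_{R,M}^{(-1)}(\{\la\},\{\a\}|\{\xi\})$, which produces \eqref{sp_more}. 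The main technical obstacle throughout is the bookkeeping of the overall constants: one must track, through each chained identity, both the powers of $2$ originating via Identity~\ref{limit_trick} at $\mu=-1$ from the factor $(1-\mu)$ raised to the relevant integer, and the signs coming from the representations \eqref{withlambda} and \eqref{withlambdahat} together with the reversed Vandermonde orderings built into the definitions \eqref{Ize-det} and \eqref{Slavnov-det}; it is this careful accounting that fixes the prefactors $(-1)^M2^{N-2M}$ and $(-1)^R2^{N-M-R}$ in \eqref{sp_slavnov} and \eqref{sp_more}. The only other point requiring care is that the Slavnov-type expressions are \emph{a priori} of the form zero over zero when two roots coincide, but this is already handled by the limiting argument built into Identity~\ref{Slavnov_relation_th} and Identity~\ref{gen_Slavnov_relation_th}.
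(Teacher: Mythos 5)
Your proposal is correct and follows essentially the same route as the paper: the case $M<R$ via representation \eqref{withlambdahat} and Corollary~\ref{zero_overlap}, the Izergin form via \eqref{withlambdahat} and Identity~\ref{prop-Izergin-rel}, and the two Slavnov forms via \eqref{withlambda} combined with Identities~\ref{limit_trick}, \ref{Slavnov_relation_th} and \ref{gen_Slavnov_relation_th}. Your cardinality checks and the identification of where the powers of $2$ and the signs originate are exactly the bookkeeping the paper leaves implicit.
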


\begin{proof}
The proof of this theorem is straightforward. To prove  (\ref{sp_less}) it is enough to use (\ref{withlambdahat}) and Corollary \ref{zero_overlap}; (\ref{sp_izergin}) follows from (\ref{withlambdahat}) and Identity~\ref{prop-Izergin-rel}. The two last representations follow directly from (\ref{withlambda}), and from the identities~\ref{limit_trick}, \ref{Slavnov_relation_th} and  \ref{gen_Slavnov_relation_th}.
\end{proof}

It is important to mention that all the representations of Theorem~\ref{th-sc-pdt} for the scalar product of the eigenstate $\ket{Q_\tau}$ with the generic separate state $\bra{\alpha}$ remain finite (and manageable) in the homogeneous limit.

Note finally to conclude this section that the fact that we can represent the scalar product $\moy{ \a\,|\, Q_\tau}$ as a Slavnov determinant allows us to take, as usual, the limit in which the two states are equal. We therefore obtain, as in ABA, a representation for the ``square of the norm''of the eigenstate $\ket{Q_\tau}$ (i.e. for the scalar product $\moy{Q_\tau\,|\, Q_\tau}$) in terms of a Gaudin determinant.

\begin{corollary}\label{cor-norm}
The ``square of the norm'' of the eigenstate $\ket{Q_\tau}$ of the antiperiodic transfer matrix is given by
\begin{equation}
\label{gaudin}
    \moy{Q_\tau\,|\, Q_\tau}=  2^{N-2R}\left(\pl_{n=1}^R d(\la_n)\right)^{\!2}\,
    \frac{\pl_{m,n=1}^R(\la_m-\la_n+\eta)}{\pl_{m\neq n}(\la_m-\la_n)}\ \det_R \Phi_\tau
\end{equation}
with 
\begin{equation}
\big[\Phi_\tau\big]_{m,n}
=\frac{\partial}{\partial\la_n}\log\left( \frac {a(\la_m)}{d(\la_m)}\pl_{b=1}^R\frac{\la_m-\la_b-\eta}{\la_m-\la_b+\eta}\right), \quad 1\le m,n \le R.
\end{equation}
\end{corollary}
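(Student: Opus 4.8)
The plan is to obtain the norm by specializing the Slavnov-determinant representation \eqref{sp_slavnov} of Theorem~\ref{th-sc-pdt} to the case $\bra{\alpha}=\bra{Q_\tau}$ and then letting the two states coincide. Concretely, I take $M=R$ and set the roots of $\alpha$ equal to the Bethe roots, $\alpha_n=\la_n$, so that $\bra{\alpha}=\bra{Q_\tau}$ and \eqref{sp_slavnov} reads
\[
\moy{Q_\tau\,|\,Q_\tau}=(-1)^R\,2^{N-2R}\prod_{n=1}^R d(\la_n)^2\ \lim_{\{\alpha\}\to\{\la\}}\mathcal{S}_R^{(-1)}(\{\la\},\{\alpha\}\,|\,\{\xi\}).
\]
Since the on-shell set $\{\la\}$ is fixed and pairwise distinct, the Slavnov determinant \eqref{Slavnov-det} is a genuine function of the auxiliary variables $\{\alpha\}$ in a neighbourhood of $\{\la\}$, and the whole problem reduces to computing this coincidence limit.

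The core of the computation is the limit of the matrix $\mathcal{H}^{(-1)}(\{\la\},\{\alpha\}\,|\,\{\xi\})$. For $j\neq k$ every entry stays finite as $\alpha_k\to\la_k$, while on the diagonal the two terms of $[\mathcal{H}^{(-1)}]_{kk}$ each develop a simple pole in $\alpha_k-\la_k$, coming from the $1/x$ part of $t$ in \eqref{t_mu} and from the factor $n=k$ of $E^\pm_{\{\la\}}(\alpha_k)$. The residue of this pole is proportional to $E^+_{\{\xi\}}(\la_k)-E^+_{\{\la\}}(\la_k)/E^-_{\{\la\}}(\la_k)$, which vanishes precisely by the Bethe equation \eqref{Bethe_equation} at $\mu=-1$ (equivalently \eqref{Bethe_anti}). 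The pole therefore cancels, and expanding the numerator to first order in $\alpha_k-\la_k$ replaces it by a logarithmic derivative. I expect to find that in the limit the matrix factorizes column-wise,
\[
[\mathcal{H}^{(-1)}]_{jk}\ \longrightarrow\ E^+_{\{\xi\}}(\la_k)\,[\Phi_\tau]_{jk},\qquad 1\le j,k\le R,
\]
for both the off-diagonal and (after the pole cancellation) the diagonal entries, with $\Phi_\tau$ the Gaudin matrix of the statement; here one uses $E^+_{\{\xi\}}=a/d$ from \eqref{a-d}. Hence $\det_R\mathcal{H}^{(-1)}\to\big(\prod_{k=1}^R a(\la_k)/d(\la_k)\big)\,\det_R\Phi_\tau$.

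It then remains to treat the scalar prefactor of \eqref{Slavnov-det}. The double product $\prod_{j,k}(\la_j-\alpha_k+\eta)$ tends to $\prod_{j,k=1}^R(\la_j-\la_k+\eta)$, while the two Vandermonde factors combine into $V(\{\la\})\,V(\la_R,\dots,\la_1)=(-1)^{R(R-1)/2}V(\{\la\})^2=\prod_{m\neq n}(\la_m-\la_n)$, reproducing the rational prefactor of \eqref{gaudin}. Assembling these pieces gives $\moy{Q_\tau\,|\,Q_\tau}$ up to the factor $(-1)^R\prod_n d(\la_n)^2\prod_k a(\la_k)/d(\la_k)=(-1)^R\prod_n a(\la_n)\,d(\la_n)$; the final simplification uses the scalar identity obtained by multiplying \eqref{Bethe_anti} over all roots, namely $\prod_{n=1}^R a(\la_n)/d(\la_n)=(-1)^R$ (the double product $\prod_{a,b}\frac{\la_a-\la_b-\eta}{\la_a-\la_b+\eta}$ collapsing to $(-1)^R$), which turns $(-1)^R\prod_n a(\la_n)\,d(\la_n)$ into $\prod_n d(\la_n)^2$ and exactly matches \eqref{gaudin}.

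The main obstacle is the diagonal $0/0$ analysis: one must show cleanly that the polar parts of the two contributions to $[\mathcal{H}^{(-1)}]_{kk}$ cancel by virtue of the Bethe equations, and that the finite remainder assembles precisely into $E^+_{\{\xi\}}(\la_k)\,[\Phi_\tau]_{kk}$ with the correct normalization, so that the whole limiting matrix is a single column-rescaling of $\Phi_\tau$. Everything else, namely the Vandermonde bookkeeping and the overall sign and factor chase, is routine once this factorization is established.
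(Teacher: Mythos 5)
Your proposal is correct and follows exactly the route the paper intends: the paper states this corollary as the coincidence limit $\{\alpha\}\to\{\lambda\}$ of the Slavnov representation \eqref{sp_slavnov}, and your analysis of the diagonal pole cancellation via the Bethe equations, the column factorization $[\mathcal{H}^{(-1)}]_{jk}\to E^+_{\{\xi\}}(\la_k)[\Phi_\tau]_{jk}$, and the final sign chase using $\prod_n a(\la_n)/d(\la_n)=(-1)^R$ all check out. The paper leaves these details implicit ("as usual"), so your write-up simply supplies the standard Gaudin-limit computation it relies on.
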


\section {Form factors of local operators}

In this section we compute the form factors, i.e. the matrix elements of the local spin operators between the eigenstates of the transfer matrix. 

Let us first consider the local operator $\s_n^-$.
Its matrix elements between two eigenstates $\bra{Q_\tau}$ and $\ket{Q_{\tau'}}$ of the transfer matrix can be computed by acting with this operator on one of these two states, which can be done as usual by means of the solution of the quantum inverse problem \cite{KitMT99,MaiT00}. The reconstruction formula \cite{KitMT99,Nic13}  for this operator takes the form,
\begin{equation}
\s_n^-= (-1)^N \left(\pl_{j=1}^{n-1}\frac{\mathcal{T}(\xi_j)}{a(\xi_j)}\right)\, \frac{D(\xi_n)}{a(\xi_n)}\left(\pl_{j=n+1}^N \frac{\mathcal{T}(\xi_j)}{a(\xi_j)}\right).
\label{inverse_problem}
\end{equation}
It enables us to formulate the following result:

\begin{theorem}\label{th-ffz}
Let $\ket{Q_\tau}$ and $\ket{Q_{\tau'}}$ be two  eigenstates of the transfer matrix, with 
\begin{equation}\label{pol-tau-tau'}
  Q_\tau(\la)=\pl_{k=1}^R(\la-\la_k),\qquad Q_{\tau'}(\la)=\pl_{k=1}^{R'}(\la-\la'_k).
\end{equation}
Then the corresponding form factors for the operator $\sigma_n^-$ are given by the following expressions:
\begin{itemize} 
\item if $|R-R'|> 1$, the form factor $\bra{Q_\tau}\,\s_n^-\,\ket{Q_{\tau'}}$ vanishes,
\begin{equation}
\bra{Q_\tau}\,\s_n^-\,\ket{Q_{\tau'}}=0\, ;
\end{equation}
\item if $R=R'+ 1$, the form factor $\bra{Q_\tau}\,\s_n^-\,\ket{Q_{\tau'}}$ can be expressed in terms of the determinant of a matrix of size $R$ as
\begin{multline}
    \bra{Q_\tau}\,\s_n^-\,\ket{Q_{\tau'}}= 2^{N-2R}\, (-1)^{N-1}\,
    \frac{Q_\tau(\xi_n)}{Q_{\tau'}(\xi_n)}
    \\
    \times
    \frac{\pl_{k=1}^R a_n(\la_k) \pl_{j=1}^{R'}d_n(\la'_j)}{V(\la_1,\dots,\la_R)\, V(\la'_{R'},\dots,\la'_1)}\ 
    \det_R\mathcal{F}^{-}(\{\la\},\{\la'\},\xi_n),\vphantom{ \pl_{j=1}^{R'}}
\label{ff_+1}
\end{multline}
where
\begin{align}
  &a_n(\la)=\pl_{j=1}^{n}(\la-\xi_j+\eta)\pl_{j=n+1}^{N}(\la-\xi_j),\\
  &d_n(\la)=\pl_{j=1}^{n}(\la-\xi_j)\pl_{j=n+1}^{N}(\la-\xi_j+\eta),
\end{align}
and
\begin{align}
\mathcal{F}^{-}_{j,k} 
   &=\frac{a(\la'_k)}{d(\la'_k)}\, Q_\tau(\la'_k-\eta)\, t(\la_j-\la'_k)+Q_\tau(\la'_k+\eta)\, t(\la'_k-\la_j), 
   \quad\text{for}\  k<R,
   \nonumber \\
\mathcal{F}^{-}_{j,R}
    &=t(\la_j-\xi_n)\vphantom{ \pl_{j=1}^{R'}}\, ;
\end{align}
\item  if $R'=R+ 1$, one has similarly
\begin{multline}
\bra{Q_\tau}\,\s_n^-\,\ket{Q_{\tau'}}
   = 2^{N-2R'}\, (-1)^{N-1}\,
   \frac{Q_{\tau'}(\xi_n-\eta)}{Q_\tau(\xi_n-\eta)}\\
   \times
   \frac{\pl_{k=1}^R a_n(\la_k) \pl_{j=1}^{R'}d_n(\la'_j)}{V(\la_R,\dots\la_1)\, V(\la'_1,\dots\la'_{R'})}\
  \det_R\mathcal{F}^{-}(\{\la'\},\{\la\},\xi_n)\vphantom{ \pl_{j=1}^{R'}}\, ;
\end{multline}
\item finally, if $R=R'$, the matrix element $\bra{Q_\tau}\,\s_n^-\,\ket{Q_{\tau'}}$ takes the form
\begin{multline}
\bra{Q_\tau}\,\s_n^-\,\ket{Q_{\tau'}}
   = 2^{N-2R-1}\, (-1)^{N-R}\,
     \frac{Q_\tau(\xi_n)}{Q_{\tau'}(\xi_n)}\,
     \frac{\pl_{k=1}^R a_n(\la_k) \pl_{j=1}^{R'}d_n(\la'_j)}{V(\la_1,\dots\la_R)\, V(\la'_{R'},\dots\la'_1)}
     \\
     \times
     \det_R\left[\mathcal{F}(\{\la\},\{\la'\})+\mathcal{P}^{(n)}(\{\la'\},\{\la\},\xi_n)\right],
     \vphantom{ \pl_{j=1}^{R'}}
\label{ff=}
\end{multline}
where the elements of the matrix $\mathcal{F}(\{\la\},\{\la'\})$ are
\begin{equation}
\mathcal{F}_{j,k}
=\frac{a(\la'_k)}{d(\la'_k)}\, Q_\tau(\la'_k-\eta)\, t(\la_j-\la'_k)+Q_\tau(\la'_k+\eta)\, t(\la'_k-\la_j),
\end{equation}
and where $\mathcal{P}^{(n)}(\{\la'\},\{\la\},\xi_n)$ is a rank one matrix,
\begin{equation}
     \mathcal{P}^{(n)}_{j,k}
     =\left(\frac{a(\la'_k)}{d(\la'_k)}\, Q_\tau(\la'_k-\eta)+Q_\tau(\la'_k+\eta)\right) t(\la_j-\xi_n).
\end{equation}
\end{itemize}
\end{theorem}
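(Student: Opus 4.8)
The plan is to combine the solution of the quantum inverse problem \eqref{inverse_problem} with the scalar-product identities derived in the previous subsection. First I would insert \eqref{inverse_problem} between the two eigenstates and let the two strings of transfer matrices act on the eigenstate they face: since $\bra{Q_\tau}$ is a left eigenstate and $\ket{Q_{\tau'}}$ a right eigenstate, the factors $\mathcal{T}(\xi_j)$ with $j<n$ produce $\tau(\xi_j)$ while those with $j>n$ produce $\tau'(\xi_j)$. This collapses the matrix element to a scalar prefactor times $\bra{Q_\tau}\,D(\xi_n)\,\ket{Q_{\tau'}}$. Trading each ratio $\tau(\xi_j)/a(\xi_j)$ for $-Q_\tau(\xi_j-\eta)/Q_\tau(\xi_j)$ by means of the $T$-$Q$ relation \eqref{LeftTQxi} (and similarly for $\tau'$) turns this prefactor into the products of $a_n,d_n$ and the ratio $Q_\tau(\xi_n)/Q_{\tau'}(\xi_n)$ appearing in \eqref{ff_+1}--\eqref{ff=}; this is purely bookkeeping once the remaining scalar product is identified.

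The key structural observation is that $D(\xi_n)$ merely multiplies the spectral function of a separate state by a linear factor. Indeed, from the eigenvalue relations \eqref{eigen-Dr}--\eqref{eigen-Dl} one checks that $D(\xi_n)\ket{Q_{\tau'}}=\ket{\beta_n}$ with $\beta_n(\la)=(\xi_n-\la)\,Q_{\tau'}(\la)$, a separate state whose associated polynomial has degree $R'+1$; dually $\bra{Q_\tau}\,D(\xi_n)=\bra{\tilde\alpha}$ with $\tilde\alpha(\la)=(\xi_n-\la)\,Q_\tau(\la)$ of degree $R+1$. This is precisely the mechanism of Proposition~\ref{prop-sep-ABA}. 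Thus, up to the prefactor, the form factor is the scalar product of an eigenstate with a generic separate state carrying one extra root located at the inhomogeneity $\xi_n$, and I can feed it into the machinery of Theorem~\ref{th-sc-pdt}.

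The four cases then follow from a degree count, using the freedom to let $D(\xi_n)$ act on whichever side is convenient. Acting on the right I compare the generic degree $R'+1$ with the eigenstate degree $R$, using the bra-ket symmetric analog of Theorem~\ref{th-sc-pdt}: Corollary~\ref{zero_overlap} gives the vanishing when $R'+1<R$ (i.e. $R-R'\ge 2$), Identities~\ref{Slavnov_relation_th} and \ref{prop-Izergin-rel} give the Slavnov/Izergin form when $R'+1=R$ (the case $R=R'+1$, the extra root $\xi_n$ producing the distinguished last column $t(\la_j-\xi_n)$ of $\mathcal{F}^{-}$), and Identities~\ref{limit_trick} and \ref{gen_Slavnov_relation_th} give the generalized Slavnov form when $R'+1>R$. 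Acting instead on the left, $\moy{\tilde\alpha\,|\,Q_{\tau'}}$ falls directly under Theorem~\ref{th-sc-pdt}; comparing $R+1$ with $R'$ yields symmetrically the vanishing for $R'-R\ge 2$ and the Slavnov form for $R'=R+1$. Together these cover $|R-R'|>1\Rightarrow 0$. In the diagonal case $R=R'$ both routes land on a generalized Slavnov determinant with $S=1$, i.e. of size $R+1$, which must be brought down to the size-$R$ determinant \eqref{ff=}.

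The main obstacle is exactly this last reduction in the case $R=R'$: one must collapse the size-$(R+1)$ determinant $\mathcal{S}_{R,R+1}^{(-1)}$ to size $R$, and it is here that the rank-one matrix $\mathcal{P}^{(n)}$ emerges, as the residue of the extra ($\xi_n$) column against the single polynomial row of $\widetilde{\mathcal{H}}^{(\mu)}$. Two technical points require care throughout. First, because $\xi_n$ is an inhomogeneity, quantities such as $E^+_{\{\xi\}}(\xi_n)$ and $d(\xi_n)$ are singular or vanishing, so the substitution $y=\xi_n$ must be taken as a limit in which the $0\cdot\infty$ ambiguities cancel against the prefactor. Second, one must verify that the Bethe equations \eqref{Bethe_equation} with $\mu=-1$ obeyed by the roots of $Q_\tau$ (resp. $Q_{\tau'}$) are exactly those needed to invoke the Slavnov identities, which is guaranteed by Theorem~\ref{th-T-Q} and Proposition~\ref{prop-eigen-ABA}. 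The remaining work of matching signs, the powers of $2$, and the $a_n,d_n$ products is routine once the determinant has been reduced to the stated size.
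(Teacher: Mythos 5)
Your proposal follows essentially the same route as the paper's proof: the reconstruction formula collapses the form factor to a prefactor times $\bra{Q_\tau}\,D(\xi_n)\,\ket{Q_{\tau'}}$, the action of $D(\xi_n)$ turns the eigenstate into a separate state with one extra root at $\xi_n$, and the four cases then follow from Theorem~\ref{th-sc-pdt} by comparing degrees, with the size-$(R+1)$ generalized Slavnov determinant reduced to the size-$R$ form \eqref{ff=} in the diagonal case. Your additional remarks on the $0\cdot\infty$ cancellation at $y=\xi_n$ and on the rank-one origin of $\mathcal{P}^{(n)}$ correctly supply details the paper's proof leaves implicit.
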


\begin{proof}
We consider the case $R\ge R'$ (the case with $R<R'$ can be done in a similar way). 
Using the reconstruction formula (\ref{inverse_problem}) and the fact that $\bra{Q_\tau}$ and $\ket{Q_{\tau'}}$ are eigenstates of the transfer matrix with respective eigenvalues $\tau(\lambda)$ and $\tau'(\lambda)$,  we immediately get the following representation for the matrix element $\bra{Q_\tau}\, \s_n^-\, \ket{Q_{\tau'}}$:
\begin{equation}
   \bra{Q_\tau}\, \s_n^-\, \ket{Q_{\tau'}}
   =(-1)^N\, \frac{\pl_{j=1}^{n-1}\tau(\xi_j)\pl_{j=n+1}^{N}\tau'(\xi_j)}{\pl_{j=1}^Na(\xi_j)}\,
   \bra{Q_\tau}\, D(\xi_n)\, \ket{Q_{\tau'}}.
\end{equation}
We can now use the ABA  type representations (\ref{ABA_like}) from which we straightforwardly obtain that 
\begin{equation}
   D(\xi_n)\, \ket{Q_{\tau'}}=(-1)^N\ket{\a},
\end{equation}
with $\a(\la)$ being a polynomial of degree $R'+1$ constructed from $Q_{\tau'}(\lambda)$ as
\begin{equation}
    \a(\la)=(\la-\xi_n)\, Q_{\tau'}(\la).
\end{equation}
Hence, it reduces the computation of the form factor $\bra{Q_\tau}\, \s_n^-\, \ket{Q_{\tau'}}$ to the computation of the scalar product $\moy{Q_\tau\, |\, \alpha}$ between the eigenstate $\bra{Q_\tau}$ and the separate state $\ket{\alpha}$. We can therefore directly use the results of Theorem~\ref{th-sc-pdt}. 
More precisely, in the case $R>R'+1$, we can use the result for the scalar product (\ref{sp_less}) to show that the corresponding form factor is zero. If $R=R'+1$ we apply (\ref{sp_slavnov}) which leads to the result (\ref{ff_+1}). Finally, in the case $R=R'$, we can rewrite the generalised Slavnov determinant (\ref{sp_more}) as a determinant of a sum of two $R\times R$ matrices (\ref{ff=}).
\end{proof}

We can now use the symmetries \eqref{x-symmetries} of the transfer matrix  to compute the matrix elements of the local operators $\s_n^+$ and $\s_n^z$ from the ones of $\sigma_n^-$. 

\begin{theorem}\label{th-ffz+}
Let $\ket{Q_\tau}$ and $\ket{Q_{\tau'}}$ be two  eigenstates of the transfer matrix constructed from  polynomials $Q_\tau(\lambda)$ and $Q_{\tau'}(\lambda)$ with respective degree $R$ and $R'$.
The matrix elements $\bra{Q_\tau}\,\s_n^z\,\ket{Q_{\tau'}}$ and $\bra{Q_\tau}\,\s_n^+\,\ket{Q_{\tau'}}$ of $\sigma_n^z$ and $\sigma_n^+$ are given in terms of $\bra{Q_\tau}\,\s_n^-\,\ket{Q_{\tau'}}$ as
\begin{align}
        &\bra{Q_\tau}\,\s_n^z\,\ket{Q_{\tau'}}=2(R'-R)\, \bra{Q_\tau}\,\s_n^-\,\ket{Q_{\tau'}},\\
        &\bra{Q_\tau}\,\s_n^+\,\ket{Q_{\tau'}}=(-1)^{R-R'}\, \bra{Q_\tau}\,\s_n^-\,\ket{Q_{\tau'}}.
\end{align}
\end{theorem}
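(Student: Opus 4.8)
The plan is to deduce both identities from the two commutation relations $[S^x,\mathcal{T}(\la)]=0$ and $[\Gamma^x,\mathcal{T}(\la)]=0$ of \eqref{x-symmetries}, which I would first establish (this is the explicit derivation of the symmetries announced in the footnote). For $\Gamma^x$ I would conjugate the monodromy matrix \eqref{mon}: since the rational $R$-matrix \eqref{mat-R} equals $\la\, I+\eta P$ with $P$ the permutation, it is invariant under $\sigma^x\otimes\sigma^x$, whence $\sigma_n^x R_{0n}\sigma_n^x=\sigma_0^x R_{0n}\sigma_0^x$; inserting $\prod_n\sigma_n^x$ on both sides of \eqref{mon} makes the auxiliary factors telescope, giving $\Gamma^x\,T_0(\la)\,\Gamma^x=\sigma_0^x\,T_0(\la)\,\sigma_0^x$, i.e. $\Gamma^x B\,\Gamma^x=C$ and $\Gamma^x C\,\Gamma^x=B$, so that $\mathcal{T}=B+C$ is left invariant. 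For $S^x$ I would use instead the full $SU(2)$ invariance $(g\otimes g)R=R(g\otimes g)$ of $R=\la\, I+\eta P$ with $g=e^{\i\theta\sigma^x/2}$: the same telescoping yields $U(\theta)\,T_0(\la)\,U(\theta)^{-1}=g_0^{-1}\,T_0(\la)\,g_0$ where $U(\theta)=e^{\i\theta S^x/2}$ and $g_0=e^{\i\theta\sigma_0^x/2}$, and taking the auxiliary trace against $\sigma_0^x$, using cyclicity together with $g_0\,\sigma_0^x\, g_0^{-1}=\sigma_0^x$, gives $U(\theta)\,\mathcal{T}(\la)\,U(\theta)^{-1}=\mathcal{T}(\la)$; differentiating at $\theta=0$ produces $[S^x,\mathcal{T}(\la)]=0$.

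Since the antiperiodic transfer matrix has simple spectrum (Theorem~\ref{SoV-spectrum}), each eigenstate $\ket{Q_\tau}$ is automatically a joint eigenvector of $S^x$ and $\Gamma^x$, and the same holds for the left eigenstates. The crucial step is then to identify these eigenvalues in terms of the degree $R$, which I would read off from the large-$\la$ behaviour. Expanding $R_{0n}(\la-\xi_n)=(\la-\xi_n)I+\eta\, P_{0n}$ with $P_{0n}=\tfrac12\big(I+\vec\sigma_0\cdot\vec\sigma_n\big)$ gives $\mathcal{T}(\la)=\eta\, S^x\,\la^{N-1}+O(\la^{N-2})$, so that the leading coefficient of the polynomial eigenvalue $\tau(\la)$ is $\eta$ times the $S^x$-eigenvalue. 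Extracting the same coefficient from the Baxter equation \eqref{hom-Baxter-Eq} — where the top $\la^{N+R}$ terms cancel, forcing one to go to the next order — gives leading coefficient $(2R-N)\eta$. Hence $S^x\ket{Q_\tau}=(2R-N)\ket{Q_\tau}$, and through $\Gamma^x=(-\i)^N\exp\!\big(\tfrac{\i\pi}{2}S^x\big)$ this yields $\Gamma^x\ket{Q_\tau}=(-1)^{N+R}\ket{Q_\tau}$.

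Finally I would combine these eigenvalues with two elementary operator identities. Conjugation by $\sigma^x$ exchanges $\sigma^-_n\leftrightarrow\sigma^+_n$, and $(\Gamma^x)^2=1$, so $\Gamma^x\sigma^-_n\Gamma^x=\sigma^+_n$; writing $\bra{Q_\tau}\sigma^+_n\ket{Q_{\tau'}}=\bra{Q_\tau}\Gamma^x\sigma^-_n\Gamma^x\ket{Q_{\tau'}}$ and acting with $\Gamma^x$ on either state produces the factor $(-1)^{N+R}(-1)^{N+R'}=(-1)^{R-R'}$, which is the $\sigma^+$ statement. For $\sigma^z_n$ I would use $[S^x,\sigma^-_n]=-\sigma^z_n$ together with the fact that $\bra{Q_\tau}$ and $\ket{Q_{\tau'}}$ are $S^x$-eigenstates with eigenvalues $2R-N$ and $2R'-N$, so that $\bra{Q_\tau}\sigma^z_n\ket{Q_{\tau'}}=-\big[(2R-N)-(2R'-N)\big]\bra{Q_\tau}\sigma^-_n\ket{Q_{\tau'}}=2(R'-R)\,\bra{Q_\tau}\sigma^-_n\ket{Q_{\tau'}}$.

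I expect the only genuine obstacle to be the careful extraction of the subleading asymptotics fixing the $S^x$-eigenvalue: because the naive leading terms cancel both in the operator product and in the Baxter equation, one must control the $\la^{N-1}$ (respectively $\la^{N+R-1}$) coefficients precisely, keeping track of the additive offset $-N$ as well as of the sign convention for $\sigma^\pm_n$, so as to land on exactly $2(R'-R)$ and $(-1)^{R-R'}$ rather than their opposites. The derivation of the symmetries and the two commutator identities are then entirely routine.
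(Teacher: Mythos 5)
Your strategy coincides with the paper's: reduce $\s_n^z$ and $\s_n^+$ to $\s_n^-$ through the symmetry operators $S^x$ and $\Gamma^x$, identify the $S^x$-eigenvalue of $\ket{Q_\tau}$ by matching the $\la\to\infty$ asymptotics of $\mathcal{T}(\la)$ against the leading behaviour of the Baxter equation, and deduce the $\Gamma^x$-eigenvalue from $\Gamma^x=(-i)^N\exp(i\pi S^x/2)$. Your explicit derivation of the symmetries \eqref{x-symmetries} from the invariance of $R=\la I+\eta P$ correctly supplies what the paper's footnote only announces, and your asymptotic result $S^x\ket{Q_\tau}=(2R-N)\ket{Q_\tau}$ is the correct one: it is confirmed by the state $\ket{1}$ of \eqref{1-explicit}, for which $R=0$ and visibly $S^x\ket{1}=-N\ket{1}$ (the paper's proof writes $(N-2R)$ here, which is a sign slip). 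The $\Gamma^x$-eigenvalue $(-1)^{N+R}$ and the whole $\s_n^+$ computation are correct.

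The genuine error is the commutator you use for $\s_n^z$. With the convention imposed by the reconstruction formula \eqref{inverse_problem} --- $\s_n^-$ is the lowering operator, i.e.\ the elementary matrix with a $1$ in row $2$, column $1$ --- one has $[S^x,\s_n^-]=[\s_n^x,\s_n^-]=+\s_n^z$, not $-\s_n^z$ (this is the identity the paper's proof states). Combined with your correct eigenvalue $2R-N$, the correct commutator gives $\bra{Q_\tau}\,\s_n^z\,\ket{Q_{\tau'}}=2(R-R')\,\bra{Q_\tau}\,\s_n^-\,\ket{Q_{\tau'}}$; you recover the displayed $2(R'-R)$ only because your sign error in the commutator compensates, just as the paper's own proof recovers it through the opposite compensation in the $S^x$-eigenvalue. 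A direct check at $N=1$, where $\mathcal{T}(\la)=\eta\,\s_1^x$, with $\bra{Q_\tau}=(1,1)$ for $R=1$ and $\ket{Q_{\tau'}}=\left(\begin{smallmatrix}1\\-1\end{smallmatrix}\right)$ for $R'=0$, gives $\bra{Q_\tau}\,\s^z_1\,\ket{Q_{\tau'}}=2$ and $\bra{Q_\tau}\,\s^-_1\,\ket{Q_{\tau'}}=1$, i.e.\ the ratio $2(R-R')$; so the sign to repair is arguably the one in the theorem's first display rather than the one in your commutator, but as written your derivation of that display does rest on an incorrect elementary identity. The $\s_n^+$ relation is unaffected and your proof of it stands.
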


\begin{proof}
The statement of the theorem follows from the fact that we can simply obtain $\sigma_n^z$ and $\sigma_n^+$ from $\sigma_n^-$ by means of the operators $S^x$ and $\Gamma^x$ \eqref{x-symmetries1}:
\begin{equation}
     \s_n^z=[S^x, \s_n^-],\qquad \s_n^+=\Gamma^x\,\s_n^-\,\Gamma^x.
\end{equation}
Hence, to obtain the matrix elements of $\sigma_n^z$ and $\sigma_n^+$ in terms of those of $\sigma_n^-$, we just have to compute the action of the operators $S^x$ and $\Gamma^x$ on the corresponding eigenstates. Using the following limit
\begin{equation}
   S^x=\lim_{\la\rightarrow\infty}\frac{\mathcal{T}(\la)}{\eta\,\la^{N-1}},
\end{equation}
we obtain
\begin{equation}
S^x \ket{Q_\tau}=(N-2R)\ket{Q_\tau}.
\end{equation}
The action of the operator $\Gamma^x$ on the eigenstates of the transfer matrix can be easily obtained from the relation
\begin{equation}
    \Gamma^x=(-i)^N\exp\left(\frac{i\pi}2 S^x\right).
\end{equation}
\end{proof}

We would like to mention to conclude this section that it is also possible to compute in this framework the matrix elements of more general quasi-local operators, i.e of operators obtained as a combination of local spin operators acting on  a finite number of sites of the lattice.

\section{Correspondence with the results obtained by Algebraic Bethe Ansatz}

The XXX chain is a unique example for which  the results obtained by SoV can be directly compared with the results obtained by ABA.
 The $SU(2)$ symmetry of the XXX monodromy matrix provides indeed a correspondence between the chain with antiperiodic boundary conditions  and the chain with the following twisted boundary conditions:
 \begin{equation}
       \s_{N+1}^a=\s^z_1\,\s^a_1\,\s^z_1,\quad a=x,y,z.
 \end{equation}
 The eigenstates of this twisted XXX chain can be constructed, in the ABA framework, as the eigenstates of the corresponding twisted transfer matrix,
 \begin{equation}\label{twisted-T}
 \mathcal{T}_-(\la)=\tr_0 \big[\sigma_0^z\, T_0(\la)\big]=A(\la)-D(\la),
 \end{equation}
 by using as usual the operators $B(\la)$ (or $C(\la)$) as creation operators on the reference state \eqref{ref-state}. More precisely, the state
 \begin{equation}
 \ket{\Psi(\{\la\})}=\pl_{j=1}^R B(\la_j)\ket{0}
 \end{equation} 
 is an eigenstate of the transfer matrix $ \mathcal{T}_-(\la)$ if and only if  the set of parameters $\{\lambda\}\equiv\{\la_1,\dots,\la_R\}$ satisfies the Bethe equations
 \begin{equation}
\pl_{a=1}^{ {R}}\frac{\lambda _{b}-\lambda _{a}+\eta }{\lambda
_{b}-\lambda _{a}-\eta }=\frac{a(\lambda _{b})}{d(\lambda _{b})},
\label{Bethe-XXX}
\end{equation}
and the corresponding eigenvalue is 
\begin{equation}
\tau_-(\la)=a(\la)\pl_{a=1}^{R}\frac{\lambda -\lambda _{a}-\eta }{\lambda
_{b}-\lambda _{a} }-d(\la)\pl_{a=1}^{R}\frac{\lambda -\lambda _{a}+\eta }{\lambda
_{b}-\lambda _{a} }.
\end{equation}
Let us  mention here that there exists an alternative way to construct the eigenstates of \eqref{twisted-T}, starting instead from the second reference state
\begin{equation}
 \ket{0'}= \mathop{\otimes}\limits_{n=1}^N  \left(\begin{array}{c} 0\\ 1\end{array}\right)_{\! [n]},
\end{equation} 
and using the operators $C(\la)$. Then, if $\{\la_1,\dots,\la_R\}$ satisfies the Bethe equations \eqref{Bethe-XXX}, the state
 \begin{equation}\label{Psi-hat}
 \ket{\widehat\Psi(\{\la\})}=\pl_{j=1}^R C(\la_j)\ket{0'},
 \end{equation} 
 is an eigenstate of the transfer matrix corresponding to  the eigenvalue
 \begin{equation}
     \widehat{\tau}_-(\la)=-\tau_-(\la).
\end{equation}
Note that the  Bethe equations \eqref{Bethe-XXX} issued from the ABA study of the twisted chain coincide with the Bethe equations \eqref{Bethe_anti} issued from the SoV study of the antiperiodic  chain, so that the completeness of the former in the ABA framework can be derived from the completeness of the latter. It is moreover possible to establish an explicit one-to-one correspondence between the eigenstates. 

To this aim we use the $SU(2)$ symmetry of the XXX monodromy matrix $T_0(\lambda)$: for any $U\in SU(2)$,  it is easy to see that
\begin{equation}
    [U_0\,\Gamma_{\!_U}, T_0(\la)]=0,\qquad\text{where}\quad\Gamma_{\!_U}=\mathop{\otimes}_{n=1}^N U_{n}.
\end{equation}
To establish the relation between two transfer matrices, we consider the following unitary matrix:
\begin{equation}\label{choice-U}
U=\frac1{\sqrt{2}}\left(\begin{array}{cc} \!\!\hphantom{-}1&1\\ \!\!-1&1\end{array}\right).
\end{equation} 
Using the property 
$U\s^x=\s^z U$ it is easy to establish  the following similarity transformation between the transfer matrices of two models:
\begin{equation}
\label{similarity}
\mathcal{T}_-(\la)=\Gamma_{\!_U}\,\mathcal{T}(\la)\, \Gamma_{\!_U}^{-1},
\end{equation}
This similarity evidently means that these two transfer matrices are isospectral, and that their eigenvectors are in one-to-one correspondence by means of the similarity matrix $\Gamma_{\!_U}$.
More precisely, we have the following result.

\begin{theorem}\label{th-corresp}
Let $\{\la\}=\{\la_1,\dots\la_R\}$ be a solution of the Bethe equations \eqref{Bethe-XXX}, and let us define the following polynomial of degree $R$ with roots $\la_j$, $1\le j\le R$:
\begin{equation}\label{Q-la}
    Q(\la)=\pl_{j=1}^R(\la-\la_j).
\end{equation}
Then the separate state $\ket{Q}$ is an eigenstate of the antiperiodic transfer matrix \eqref{transfer}
with  eigenvalue 
\begin{equation}\label{tau-la}
   \tau(\lambda)=-a(\lambda)\, \frac{Q(\lambda-\eta)}{Q(\lambda)}+d(\lambda)\, \frac{Q(\lambda+\eta)}{Q(\lambda)},
\end{equation}
whereas the Bethe vector $\ket{\widehat\Psi(\{\la\})}$ \eqref{Psi-hat} is an eigenstate of the twisted transfer matrix \eqref{twisted-T} with the same eigenvalue \eqref{tau-la}.
These two states are related by means of the similarity matrix constructed from \eqref{choice-U} as
\begin{equation}
      \ket{Q}=(-1)^{N(R-1)}\,2^{\frac N2-R}\, \Gamma_{\!_U}^{-1}\ket{\widehat{\Psi}(\{\la\})}.
\end{equation}
\end{theorem}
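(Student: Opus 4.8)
The plan is to establish the three assertions of the theorem in turn: that $\ket{Q}$ is an eigenstate of $\mathcal{T}(\la)$ with eigenvalue \eqref{tau-la}, that $\ket{\widehat\Psi(\{\la\})}$ is an eigenstate of $\mathcal{T}_-(\la)$ with the same eigenvalue, and finally that the two vectors coincide up to the stated scalar. The first two points are essentially already contained in the results recalled above, so the genuine content lies in pinning down the proportionality constant.

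For the first assertion I would use the observation, made in the text, that the Bethe equations \eqref{Bethe-XXX} coincide with \eqref{Bethe_anti}. Writing $Q(\la)=\pl_{j=1}^R(\la-\la_j)$ and defining $\tau(\la)$ by \eqref{tau-la}, the only candidate singularities of the right-hand side are simple poles at $\la=\la_a$, whose residues are proportional to $-a(\la_a)\,Q(\la_a-\eta)+d(\la_a)\,Q(\la_a+\eta)$; this vanishes precisely by the Bethe equations \eqref{Bethe-XXX}. Hence $\tau(\la)$ is entire and, together with $Q(\la)$, solves the functional $T$-$Q$ equation \eqref{hom-Baxter-Eq}. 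Theorem~\ref{th-T-Q} then shows $\tau(\la)$ is an eigenvalue of $\mathcal{T}(\la)$ (of degree $N-1$ by Theorem~\ref{SoV-spectrum}), and Theorem~\ref{th-eigenvector} identifies the associated right eigenstate with the separate state $\ket{Q}$ built from this same $Q$. The second assertion is the standard algebraic Bethe ansatz computation recalled just before the theorem: $\ket{\widehat\Psi(\{\la\})}$ is an eigenstate of $\mathcal{T}_-(\la)$ with eigenvalue $\widehat\tau_-(\la)=-\tau_-(\la)$, and a direct comparison shows $\widehat\tau_-(\la)$ equals $\tau(\la)$ of \eqref{tau-la}.

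For the third assertion I would argue structurally first and compute the constant second. From the similarity relation \eqref{similarity}, namely $\mathcal{T}_-(\la)=\Gamma_{\!_U}\,\mathcal{T}(\la)\,\Gamma_{\!_U}^{-1}$, it follows at once that $\Gamma_{\!_U}^{-1}\ket{\widehat\Psi(\{\la\})}$ is a $\mathcal{T}(\la)$-eigenstate with eigenvalue $\widehat\tau_-(\la)=\tau(\la)$. Since by Theorem~\ref{SoV-spectrum} the antiperiodic transfer matrix has simple spectrum, the eigenspace attached to $\tau(\la)$ is one-dimensional, so $\ket{Q}=c\,\Gamma_{\!_U}^{-1}\ket{\widehat\Psi(\{\la\})}$ for some non-zero scalar $c$. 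It then remains to evaluate $c$ by comparing one conveniently chosen component of the two vectors. A natural choice is the overlap with the ferromagnetic reference $\bra{0}$: on the separate-state side the factorized form \eqref{right_sep-state} together with $\braket{0}{\mathbf{h}}=\delta_{\mathbf{h},\mathbf{0}}/V(\{\xi\})$ gives $\braket{0}{Q}=\pl_{a=1}^N Q(\xi_a)$, whereas on the other side one conjugates the creation operators through $\Gamma_{\!_U}$ via the auxiliary-space rotation $\Gamma_{\!_U}^{-1}C(\la)\,\Gamma_{\!_U}=\tfrac12\big(C+D-A-B\big)(\la)$ and $\Gamma_{\!_U}^{-1}\ket{0'}=2^{-N/2}\bigotimes_{n=1}^N\big(\begin{smallmatrix}-1\\ 1\end{smallmatrix}\big)$, reducing $\bra{0}\Gamma_{\!_U}^{-1}\ket{\widehat\Psi}$ to an explicit product through the triangular action of $A,B,C,D$ on the references. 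Matching the two evaluations should yield $c=(-1)^{N(R-1)}\,2^{\frac N2-R}$.

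The main obstacle is precisely this last step. The structural proportionality is cheap, but the powers of $2$ and the overall sign demand careful bookkeeping: the $2^{-N/2}$ coming from $\Gamma_{\!_U}^{-1}\ket{0'}$, one factor $\tfrac12$ per rotated operator $\Gamma_{\!_U}^{-1}C(\la_j)\Gamma_{\!_U}$, and the Vandermonde normalizations built into $\ket{Q}$ must all be combined consistently, with the sign tracked through the all-down reference and the $R$ insertions. I would therefore treat the constant as the delicate point, and, should the $\bra{0}$-overlap prove awkward because $\bra{0}C(\la)\neq 0$, fall back on extracting $c$ from the leading large-spectral-parameter behaviour of both representations, where the combinatorics simplifies.
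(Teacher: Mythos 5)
Your structural skeleton coincides with the paper's: the similarity \eqref{similarity} together with the simplicity of the spectrum (Theorem~\ref{SoV-spectrum}) gives $\ket{Q}=c\,\Gamma_{\!_U}^{-1}\ket{\widehat\Psi(\{\la\})}$, and the whole content is the constant $c$. It is precisely there that your argument has a gap. The overlap $\braket{0}{Q}=\prod_aQ(\xi_a)$ is fine, but the other side, $\bra{0}\,\Gamma_{\!_U}^{-1}\ket{\widehat\Psi(\{\la\})}=2^{-N/2}\bigl(\otimes_n(1,-1)\bigr)\prod_jC(\la_j)\ket{0'}$, does \emph{not} reduce ``to an explicit product through the triangular action of $A,B,C,D$ on the references'': after the rotation neither the covector $\otimes_n(1,-1)$ nor the vector $\Gamma_{\!_U}^{-1}\ket{0'}\propto\otimes_n(-1,1)^T$ is a reference state for the Yang--Baxter algebra, and since $\bra{0}C(\la)\neq0$ the expansion of $\prod_j\tfrac12(C+D-A-B)(\la_j)$ does not truncate. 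What you are really asking for is the overlap of an on-shell Bethe state with the transverse product state $\otimes_n(1,-1)$, a nontrivial quantity in its own right (essentially the very thing the theorem computes), so this route is circular or at best as hard as the theorem. The fall-back via large-spectral-parameter asymptotics does not obviously help either: the leading term of $C(\la)$ is proportional to the total lowering operator, which annihilates $\ket{0'}$, and the $\la_j$ are fixed Bethe roots rather than free parameters.

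The missing idea --- the paper's --- is to fix the constant in two moves. First anchor the $R=0$ case: $\ket{1}$ is an eigenstate with eigenvalue $d(\la)-a(\la)$, hence proportional to $\Gamma_{\!_U}^{-1}\ket{0'}$, and there the $\bra{0}$-overlap \emph{is} trivial ($\braket{0}{1}=1$ and $\bra{0}\,\Gamma_{\!_U}^{-1}\ket{0'}=(-1/\sqrt2)^N$), giving $\ket{1}=(-\sqrt2)^N\,\Gamma_{\!_U}^{-1}\ket{0'}$. Then use the ABA-type representation $\ket{Q}=(-1)^{RN}\prod_jD(\la_j)\ket{1}$ of Proposition~\ref{prop-eigen-ABA} together with the conjugation $\Gamma_{\!_U}\,D(\la)\,\Gamma_{\!_U}^{-1}=\tfrac12\bigl(A(\la)+B(\la)+C(\la)+D(\la)\bigr)$; since $\Gamma_{\!_U}\ket{Q}$ is already known to be proportional to $\prod_jC(\la_j)\ket{0'}$, which lives in a fixed magnetization sector, only the pure $C(\la_1)\cdots C(\la_R)$ term of the expanded product survives, which produces exactly the factor $2^{-R}$ and the sign in the stated coefficient. (The paper also verifies via the Gaudin determinant that $\ket{\widehat\Psi(\{\la\})}\neq0$, which your proportionality statement tacitly assumes.) Your treatment of the first two assertions of the theorem is correct and consistent with the paper.
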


\begin{proof}
For any solution $\{\la\}$ of the Bethe equations we can construct such states $\ket{Q}$ and $\ket{\widehat{\Psi}(\{\la\})}$. Let us first show that the state   $\ket{\widehat{\Psi}(\{\la\})}$ is non-trivial.
The norm of $\ket{Q}$ is given by (\ref{gaudin}) while for $\ket{\widehat{\Psi}(\{\la\})}$ one can use the usual Algebraic Bethe Ansatz arguments to obtain a representation in terms of the same Gaudin determinant:
 \begin{equation}
\langle \, \widehat{\Psi}(\{\la\})\, \ket{\widehat{\Psi}(\{\la\})}=  \left(\pl_{n=1}^R d(\la_n)\right)^2\frac{\pl_{m,n=1}^R(\la_m-\la_n+\eta)}
{\pl_{m\neq n}(\la_m-\la_n)}\,\det_R \Phi_\tau.
\end{equation}
 Since $\ket{Q}$ is a nontrivial state by construction it follows that the corresponding Gaudin determinant is non zero and therefore that $\ket{\widehat{\Psi}(\{\la\})}$ is non-trivial.
  
  The similarity (\ref{similarity}) implies that 
\begin{equation}
  \mathcal{T}(\la)\, \Gamma_{\!_U}^{-1} \ket{\widehat{\Psi}(\{\la\})}
  =\tau(\lambda)\, \Gamma_{\!_U}^{-1} \ket{\widehat{\Psi}(\{\la\})},
\end{equation}
so that $\Gamma_{\!_U}^{-1} \ket{\widehat{\Psi}(\{\la\})}$ is an eigenstate of the antiperiodic transfer matrix $\mathcal{T}(\la)$ with the same eigenvalue $\tau(\lambda)$ \eqref{tau-la} as $\ket{Q}$.
Hence, since the spectrum of $\mathcal{T}(\la)$ is simple, this state should be proportional to $\ket{Q}$. The proportionality coefficient can be computed up to a phase factor from the Gaudin formula.
  
  It remains to compute this phase factor.
  Let us consider  the simplest separate state $\ket{1}$. It is an eigenstate of $\mathcal{T}(\lambda)$ with eigenvalue $d(\la)-a(\la)$, and  should therefore be proportional  to $\Gamma_{\!_U}^{-1}\ket{0'}$. It is easy to see (directly from the construction) that
\begin{equation}
  \langle\,0\, \ket{1}=1,
  \qquad 
  \bra{0}\, \Gamma_{\!_U}^{-1}\ket{0'}=\left(\frac{-1}{\sqrt{2}}\right)^N.
\end{equation}
 It leads to a useful relation:
 \begin{equation}
   \label{1-explicit}
   \ket{1}=(-\sqrt{2})^N \, \Gamma_{\!_U}^{-1}\ket{0'}
   = \mathop{\otimes}\limits_{n=1}^N \left(\!\!\!\begin{array}{c} \hphantom{-}1\\ -1\end{array}\right)_{[n]}.
\end{equation}
Expressing  now $\ket{Q}$ by means of the ABA-type representation \eqref{Qtau-form1}, we obtain
\begin{equation}
   \Gamma_{\!_U}\ket{Q}
   =(-1)^{RN}\, \Gamma_{\!_U} \pl_{j=1}^R D(\la_j)\, \ket{1}
   =2^{\frac N2}(-1)^{N(R-1)}\pl_{j=1}^R \left(\Gamma_{\!_U}\,D(\la_j)\,\Gamma_{\!_U}^{-1}\right)\ket{0'}.
\end{equation}
It is easy to see that
\begin{equation}
  \Gamma_{\!_U}\,D(\la)\,\Gamma_{\!_U}^{-1}=\frac 12\Big(A(\la)+B(\la)+C(\la)+D(\la)\Big).
\end{equation}
Due to the proportionality between $\Gamma_{\!_U}\ket{Q}$ and $\ket{\widehat{\Psi}(\{\la\})}$ only the term containing $R$ operators $C(\la_j)$ produces a non-zero contribution and we obtain the final result for the phase factor.
\end{proof}

Hence, Theorem~\ref{th-corresp} relates the eigenstates of the antiperiodic transfer matrix, which were constructed as separate states in the SoV framework, with the on-shell Bethe states for the twisted transfer matrix constructed from ABA (note that Bethe equations are crucial here).
On the one hand, it provides an easy way to prove the completeness of the ABA construction. On the other hand, it gives explicit representations valid in the homogeneous limit for the separate states: in particular, the equation   (\ref{1-explicit}) shows that the state $\ket{1}$ does not depend on the inhomogeneities. A similar result can be obtained for the state $\ket{1_\mathrm{alt}}$,
\begin{equation}
\ket{1_\mathrm{alt}}=\mathop{\otimes}\limits_{n=1}^N \left(\begin{array}{c} 1\\ 1\end{array}\right)_{\![n]}.
\end{equation}

This explicit correspondence between these two families of eigenstates can also be used to compare the expressions we have just obtained for the form factors of the antiperiodic model in the SoV framework with those that can be computed for the twisted model in the ABA framework.
Namely, let us fix two given eigenvalues $\tau(\lambda)$ and $\tau'(\lambda)$ of the antiperiodic (or twisted) transfer matrix, corresponding to two sets of solutions $\{\lambda\}\equiv\{\lambda_1,\ldots,\lambda_R\}$ and $\{\la'\}\equiv\{\la'_1,\ldots,\la'_{R'}\}$ of the Bethe equations \eqref{Bethe_anti} with associated Baxter polynomials $Q_\tau(\lambda)$ and $Q_{\tau'}(\lambda)$.
We shall denote as usual the corresponding $\mathcal{T}(\lambda)$-eigenstates by $\ket{Q_\tau}$ and $\ket{Q_{\tau'}}$, and the corresponding $\mathcal{T}_-(\lambda)$-eigenstates of the form \eqref{Psi-hat} by $\ket{\widehat{\Psi}(\{\la\})}$ and $\ket{\widehat{\Psi}(\{\la'\})}$.
Then, Theorem~\ref{th-corresp} implies the following relation between the form factors:
\begin{align}
  \bra{ Q_\tau}\, \sigma _{n}^{-}\,\ket{Q_{\tau^\prime } }
   &=   \bra{\widehat{\Psi}(\{\la\})}\, \sigma _{n}^{z}\, \ket{\widehat{\Psi}(\{\la'\})}
          -\frac{ \bra{\widehat{\Psi}(\{\la\})}\, \sigma _{n}^-\, \ket{\widehat{\Psi}(\{\la'\})} }{2i}
          \nonumber\\
   &
      \hspace{4.35cm}
          +\frac{\bra{\widehat{\Psi}(\{\la\})}\, \sigma _{n}^+\, \ket{\widehat{\Psi}(\{\la'\})} }{2i},
\label{s_-to-s_z-s_pm-1} \\
  &=\left\{ 
       \begin{aligned}
   &- \bra{\widehat{\Psi}(\{\la\})}\, \sigma _{n}^-\, \ket{\widehat{\Psi}(\{\la'\})} /2i
   &\text{ \ for }  &R=R^{\prime }+1,
   \\ 
   &\bra{\widehat{\Psi}(\{\la\})}\, \sigma _{n}^{z}\, \ket{\widehat{\Psi}(\{\la'\})} 
  &\text{ \ for }  &R=R^{\prime },
   \\ 
   &\bra{\widehat{\Psi}(\{\la\})}\, \sigma _{n}^+\, \ket{\widehat{\Psi}(\{\la'\})} /2i
  &\text{ \ for }   &R^{\prime }=R+1.
  \end{aligned}
  \right. 
    \label{s_-to-s_z-s_pm-2}
\end{align}
Since we are now able to compute these matrix elements independently  in the SoV and in the ABA framework, and since the obtained representations may appear slightly different, it is important to check their effective  coincidence. This is explicitly done in Appendix~\ref{ap-check-ff}.

\section*{Acknowledgements}
J. M. M., G. N. and V. T. are supported by CNRS. N. K, J. M. M. and V. T. are supported by ANR grant ``DIADEMS''. N. K. is supported by the BQR program of the Université de Bourgogne and would like to thank LPTHE, University Paris VI, Laboratoire de Physique, ENS-Lyon and the Galileo Galilei Institute in Florence for hospitality.

\appendix

\section{Proof of Identity~\ref{gen_Slavnov_relation_th}}

In this appendix we prove the identity~\ref{gen_Slavnov_relation_th} concerning the equality between the quantity $\mathcal{A}^-_{\{x\}\cup\{y\}}\big[\mu E^+_{\{\xi\}}\big]$ and the generalized Slavnov determinant \eqref{Slavnov-gen} when $\{x\}$ is a solution to the Bethe equations \eqref{Bethe_equation}. We shall proceed in a way similar to the proof of the identity~\ref{Slavnov_relation_th}.

We consider a set of pairwise distinct Bethe roots $\{x\}\equiv\{x_1,\dots , x_M\}$, and a set of arbitrary complex numbers $\{y\}\equiv\{y_1,\dots, y_{M+S}\}$. As previously we introduce  the polynomial $Q(\la)$ with roots $x_1,\ldots,x_M$:
\begin{equation}
     Q(\la)=\pl_{j=1}^N(\la-x_j),
\end{equation}
and the $M$ polynomials $Q_k(\la)$ obtained from $Q(\la)$ as
\begin{equation}
      Q_k(\la)=\frac{Q(\la)}{(\la-x_k)}.
\end{equation}
We also introduce polynomials $Z(\la)$ of degree $S$ with generic roots $z_1,\dots,z_S$ such that $z_a\neq x_j$, $z_a\neq x_j-\eta$,
\begin{equation}
Z(\la)=\pl_{a=1}^S(\la-z_a), \qquad Z_k(\la)=\frac{Z(\la)}{(\la-z_k)}.
\end{equation}
As before we consider the auxiliary $(2M+S)\times (2M+S)$ matrix $\mathcal{C}$ composed of the coefficients of the following polynomials:
\begin{align}
Q_k(\la)\,Q_k(\la+\eta)&=\sul_{j=1}^{2M+S} \mathcal{C}_{k,j}\, \la^{j-1},\nonumber\\
Q_k(\la)\, Q(\la+\eta)&=\sul_{j=1}^{2M+S} \mathcal{C}_{k+M,j}\, \la^{j-1},\nonumber\\
Q(\la)\, Q(\la+\eta)\, Z_k(\la)&=\sul_{j=1}^{2M+S} \mathcal{C}_{k+2M,j}\, \la^{j-1}.
\end{align}
Evidently the matrix elements $ \mathcal{C}_{k,j}$ are zero for $k\le 2M$ and $j>2M$. However the matrix $\mathcal{C}$ is invertible and its determinant can easily be computed using the following identities:
\begin{align*}
 &\sul_{a=1}^{2M+S}\mathcal{C}_{b,a}\, (x_k-\eta)^{a-1}=\delta_{b,k} \, Q_k (x_k)\, Q_k (x_k-\eta),
 \\
 &\sul_{a=1}^{2M+S}\mathcal{C}_{b,a}\, x_k^{a-1}= \delta_{b,k+M}\, Q(x_k+\eta)\, Q_k (x_k)
 +\delta_{b,k} \, Q_k (x_k)\, Q_k (x_k+\eta),
 \\
 &\sul_{a=1}^{2M+S}\mathcal{C}_{b,a}\, z_k^{a-1}= \delta_{b,k+2M}\, Q(z_k+\eta)\,Q (z_k)\,Z_k(z_k)+\mathcal{A}_{b,k},
\end{align*}
with $\mathcal{A}_{b,k}=0$ if $b>2M$. It means that the product of the matrix $\mathcal{C}$ and the corresponding
Vandermonde matrix is triangular, so that the determinant can be easily computed:
\begin{multline}
V(\{x-\eta\}\cup\{x\}\cup\{z\})\ \det_{2M+S} \mathcal{C}\\
 =\pl_{k=1}^M \left[ Q^2_k (x_k)\, Q_k (x_k-\eta)\, Q (x_k+\eta)\right]
    \pl_{j=1}^S \left[ Q(z_j)\, Q(z_j+\eta)\, Z_j(z_j)\right] .
\end{multline}
Hence,
\begin{equation}
\det_{2M+S}  \mathcal{C}= (-1)^{\frac{M(M+1)+S(S-1)}{2}}\, V(\{z\})
 \pl_{k=1}^M \big[ Q^2_k (x_k)Q_k (x_k+\eta) \big]. 
\end{equation}
Let us now  compute the following product of two determinants:
 \begin{equation}
 \mathcal{A}^-_{\{x\}\cup\{y\}}\!\left[\mu E^+_{\{\xi\}}\right]\ 
 \det_{2M+S} \mathcal{C}
 =\frac{\det_{2M+S}\left(
 \begin{array}{ccc}
 \mathcal{G}^{(1,1)}&\mathcal{G}^{(1,2)}\\
 \mathcal{G}^{(2,1)}&\mathcal{G}^{(2,2)}\\
 \mathcal{G}^{(3,1)}&\mathcal{G}^{(3,2)}
 \end{array}\right)}{V(\{x\}\cup\{y\})}.
 \end{equation}
 As in the case $S=0$, the $M\times M$ block $ \mathcal{G}^{(1,1)}$ is given by
\begin{align}
  \mathcal{G}^{(1,1)}_{j,k}
  &= \sul_{a=1}^{2M+S}\mathcal{C}_{j,a}\left(x_k^{a-1}-\mu  E^+_{\{\xi\}}(x_k)\, (x_k-\eta)^{a-1}\right)
        \nonumber\\
  &= Q_j(x_k)\, Q_j(x_k+\eta)-\mu E^+_{\{\xi\}}(x_k)\, Q_j(x_k)\, Q_j(x_k-\eta)
  \nonumber\\
  &=\delta_{jk}\,Q_k(x_k)\left(Q_k(x_k+\eta)-\mu E^+_{\{\xi\}}(x_k)\, Q_k(x_k-\eta)\right).
\end{align}
Due to the Bethe equations $ \mathcal{G}^{(1,1)}_{j,k}=0$.
Another block which turns out to be trivial is the  $S\times M$ matrix $ \mathcal{G}^{(3,1)}$:
\begin{align}
  \mathcal{G}^{(3,1)}_{j,k}
  &=\sul_{a=1}^{2M+S}\mathcal{C}_{j+2M,a}\left(x_k^{a-1}-\mu  E^+_{\{\xi\}}(x_k)\, (x_k-\eta)^{a-1}\right)
       \nonumber \\
   &= Q(x_k)\, Q(x_k+\eta)\, Z_j(x_k)-\mu E^+_{\{\xi\}}(x_k)\, Q(x_k)\, Q(x_k-\eta)\, Z_j(x_k-\eta)
       \nonumber\\
    &=0,
\end{align}
due to the zero factor $Q(x_k)$ in both terms.
It means that there is no need to compute the block   $ \mathcal{G}^{(2,2)}$ as it does not contribute to the determinant. The $M\times M$ block   $ \mathcal{G}^{(2,1)}$ is diagonal:
\begin{align}
   \mathcal{G}^{(2,1)}_{j,k}
   &=\sul_{a=1}^{2M+S}\mathcal{C}_{j+M,a}\left(x_k^{a-1}-\mu  E^+_{\{\xi\}}(x_k) (x_k-\eta)^{a-1}\right)\nonumber\\
   &=Q_j(x_k)\, Q(x_k+\eta)-\mu E^+_{\{\xi\}}(x_k)\, Q(x_k)\, Q_j(x_k-\eta)\nonumber\\
   &=\delta_{jk}\, Q_k(x_k)\,Q(x_k+\eta).
\end{align}
The non-trivial part of this product is contained in the two remaining blocks. The $(M+S)\times M$ block $\mathcal{G}^{(1,2)}$ has a usual form of a Slavnov matrix:
\begin{align}
     \mathcal{G}^{(1,2)}_{j,k}
     &=\sul_{a=1}^{2M+S}\mathcal{C}_{j,a}\left(y_k^{a-1}-\mu  E^+_{\{\xi\}}(y_k)\, (y_k-\eta)^{a-1}\right)\nonumber\\ 
    &=Q_j(y_k)\, Q_j(y_k+\eta)-\mu E^+_{\{\xi\}}(y_k)\, Q_j(y_k)\, Q_j(y_k-\eta)\nonumber\\
    &=-\frac 1\eta \, Q(y_k)\, Q(y_k-\eta)\,  \widetilde{\mathcal {H}}_{j k},
\end{align}
while the $(M+S)\times S$ block  $\mathcal{G}^{(3,2)}$ is
\begin{align}
     \mathcal{G}^{(3,2)}_{j,k}
     &=\sul_{a=1}^{2M+S}\mathcal{C}_{j+2M,a}\left(y_k^{a-1}-\mu  E^+_{\{\xi\}}(y_k) \, (y_k-\eta)^{a-1}\right)\nonumber\\ 
    &=Q(y_k)\left(Q(y_k+\eta)\, Z_j(y_k)-\mu E^+_{\{\xi\}}(y_k)\, Q(y_k-\eta)\, Z_j(y_k-\eta)\right).
\end{align}
Now to get rid of the arbitrary polynomials $Z(\la)$ we introduce the following $S\times S$ matrix
$\widetilde{\mathcal{C}}$ defined as
\begin{equation}
Z_j(\la)=\sul_{a=1}^S \widetilde{\mathcal{C}}_{j,a}\, \la^{a-1}.
\end{equation}
It is easy to see that 
\begin{equation}
\det_S \widetilde{\mathcal{C}}=(-1)^{\frac{S(S-1)}2}\, V(\{z\}).
\end{equation}
Together with the following representation for the block $\mathcal{G}^{(3,2)}$,
\begin{multline}
Q(y_k+\eta)\, Z_j(y_k)-\mu E^+_{\{\xi\}}(y_k)\, Q(y_k-\eta)\, Z_j(y_k-\eta)\\
 =\sul_{a=1}^S \widetilde{\mathcal{C}}_{j,a}\left(  Q(y_k+\eta)\, y_k^{a-1}-\mu E^+_{\{\xi\}}(y_k)\, Q(y_k-\eta)(y_k-\eta)^{a-1}\right),
\end{multline}
it leads to the  expression  (\ref{gen_Slavnov_relation}).

\section{Explicit comparison of the form factor representations issued from SoV and from ABA}
\label{ap-check-ff}

In this appendix we explicitly check that the expressions of the form factors we have obtained from our SoV study of the XXX antiperiodic spin chain are consistent with the one issued from the ABA study \cite{KitMT99} of the twisted spin chain, i.e. that the relation \eqref{s_-to-s_z-s_pm-2} is effectively satisfied.

The fact that the SoV and the ABA expressions for the form factors coincide if $|R-R^{\prime }|=1$ is quite simple to prove. We shall therefore provide here the details for the verification of the last case $R=R^{\prime }$ only.

The direct computation of the form factor $\bra{\widehat{\Psi}(\{\la\})}\, \sigma _{n}^{z}\, \ket{\widehat{\Psi}(\{\la'\})} $ by ABA leads to the following expression:
\begin{multline}\label{expr-ABA}
   \bra{\widehat{\Psi}(\{\la\})}\, \sigma _{n}^{z}\, \ket{\widehat{\Psi}(\{\la'\})} 
    =\mathcal{S}_{R}^{(-1)}(\{\lambda \} ,\{ \lambda ^{\prime }\} |\{ \xi \})
    \\
    +2\sum_{m=1}^{R}
    \frac{Q_{\tau^\prime }(\lambda _{m}^{\prime }-\eta )}{Q_\tau(\lambda_{m}^{\prime }-\eta )}\,
    \mathcal{S}_{R}^{(-1,m)}(\{ \lambda \},\{ \lambda ^{\prime }\} |\{ \xi \} |\xi_n),
\end{multline}
where $\mathcal{S}_{R}^{(-1,m)}(\{ \lambda \},\{ \lambda ^{\prime }\} |\{ \xi \} |\xi_n)$ is obtained from $\mathcal{S}_{R}^{(-1)}(\{\lambda \} ,\{ \lambda ^{\prime }\} |\{ \xi \})$ \eqref{Slavnov-det} by
substituting $\xi _{n}$ to $\lambda _{m}^{\prime }$ in the $m$-th column of the
matrix $\mathcal{H}^{(-1)}(\{\la\},\{\la'\}|\{\xi\})$.
The SoV computation gives instead:
\begin{multline}\label{expr-SoV}
    \bra{ Q_\tau}\, \sigma _{n}^{-}\,\ket{Q_{\tau^\prime } }
   = \mathcal{S}_{R}^{(-1)}(\{\lambda \} ,\{ \lambda ^{\prime }\} |\{ \xi \})
    \\
+\sum_{m=1}^{R}
  \frac{a(\lambda _{m}^{\prime})\, Q_\tau(\lambda _{m}^{\prime }-\eta )
         +d(\lambda _{m}^{\prime })\, Q_\tau (\lambda_{m}^{\prime }+\eta )}
         {a(\lambda _{m}^{\prime })\,
         Q_\tau(\lambda _{m}^{\prime}-\eta )}\,
         \mathcal{S}_{R}^{(-1,m)}(\{ \lambda \},\{ \lambda ^{\prime }\} |\{ \xi \}|\xi_n ).
\end{multline}

In the case $\tau (\lambda )=\tau ^{\prime }(\lambda )$, the equality between \eqref{expr-ABA} and \eqref{expr-SoV} is a consequence of the following identity:
\begin{align}
  \frac{a(\lambda _{m})\,Q_\tau(\lambda _{m}-\eta )+d(\lambda _{m})\,Q_\tau(\lambda_{m}+\eta )}{a(\lambda _{m})\,Q_\tau(\lambda _{m}-\eta )}
     &=\frac{2a(\lambda_{m})\, Q_\tau(\lambda _{m}-\eta )+\tau (\lambda _{m})\, Q_\tau(\lambda _{m})}{a(\lambda_{m})\,Q_\tau(\lambda _{m}-\eta )}\nonumber\\
     &=2.
\end{align}

Let us now consider the case $R=R^{\prime }$ for $\tau (\lambda )\neq \tau ^{\prime }(\lambda)$.
We therefore want to compute the difference of the two expressions \eqref{expr-ABA} and \eqref{expr-SoV},
\begin{multline}\label{diff-1}
 \bra{ Q_\tau}\, \sigma _{n}^{-}\,\ket{Q_{\tau^\prime } } 
-\bra{\widehat{\Psi}(\{\la\})}\, \sigma _{n}^{z}\, \ket{\widehat{\Psi}(\{\la'\})}   \\
 =\sum_{m=1}^{R}
 \frac{a(\lambda _{m}^{\prime })
          \big[2Q_{\tau^\prime }(\lambda_{m}^{\prime }-\eta )-Q_\tau(\lambda _{m}^{\prime }-\eta )\big]
         -d(\lambda_{m}^{\prime })\,Q_\tau(\lambda _{m}^{\prime }+\eta )}
        {a(\lambda _{m}^{\prime})\, Q_\tau(\lambda _{m}^{\prime }-\eta )}\\
        \times
 \mathcal{S}_{R}^{(-1,m)}(\{ \lambda \},\{ \lambda ^{\prime }\} |\{ \xi \}|\xi_n ),
\end{multline}
and show that it vanishes.

One can first notice that the right hand side of \eqref{diff-1} can be seen as the development of the  determinant of a larger matrix $\widehat{\mathcal{S}}^{(n)}$ with one more line and column:
\begin{multline}
  \bra{ Q_\tau}\, \sigma _{n}^{-}\,\ket{Q_{\tau^\prime } } 
  -\bra{\widehat{\Psi}(\{\la\})}\, \sigma _{n}^{z}\, \ket{\widehat{\Psi}(\{\la'\})}  \\
  =\frac{\det_{R+1} \widehat{\mathcal{S}}^{(n)}}
    {V(\lambda _{1},\ldots ,\lambda _{R})\, 
     V(\lambda _{1}^{\prime},\ldots,\lambda _{R}^{\prime })
     \prod_{a=1}^{R}Q_\tau(\lambda _{a}^{\prime })},
\end{multline}
where $\widehat{\mathcal{S}}^{(n)}$ is the $(R+1)\times(R+1)$ matrix of elements, for $1\le j,k\le R$:
\begin{align*}
   &\big[ \widehat{\mathcal{S}}^{(n)} \big]_{j,k}
   =Q_{\tau,j}(\lambda'_k)\left( Q_{\tau,j}(\lambda'_k-\eta) +\frac{d(\lambda'_k)}{a(\lambda'_k)}\, Q_{\tau,j}(\lambda'_k+\eta)\right),
   \\
   &\big[ \widehat{\mathcal{S}}^{(n)} \big]_{j,R+1}=Q_{\tau,j}(\xi_n)\, Q_{\tau,j}(\xi_n-\eta),
   \\
   &\big[ \widehat{\mathcal{S}}^{(n)} \big]_{R+1,k}
   =Q_\tau(\lambda_k^{\prime })
     \left( 2Q_{\tau^\prime }(\lambda _k^{\prime }-\eta)-Q_\tau(\lambda _k^{\prime }-\eta )
            -\frac{d(\lambda _k^{\prime })}{a(\lambda_k^{\prime })}\, Q_\tau(\lambda_k^{\prime }+\eta )\right),
    \\
   &\big[ \widehat{S}^{(n)} \big]_{R+1,R+1}=\gamma,
\end{align*}
with $\gamma$ being an arbitrary complex number. Here we have used the shorthand notation
\begin{equation}
  Q_{\tau,j}(\lambda)=\frac{Q_\tau(\lambda)}{\lambda-\lambda_j}.
\end{equation}

We now want to rewrite this determinant in terms of the determinant of a $(2R+1)\times (2R+1)$ matrix $\mathcal{G}=\mathcal{C}\cdot\mathcal{X}$.
To this aim we introduce the matrices $\mathcal{C}$ and $\mathcal{X}$ as follows.

The $(2R+1)\times (2R+1)$ matrix $\mathcal{C}$ is defined by its elements $\mathcal{C}_{j,k}$, $1\le j,k \le  2R+1$ such that
\begin{align}
  &\sum_{j=1}^{2R+1}\lambda ^{j-1}\, \mathcal{C}_{m,j}
    =Q_{\tau, m}(\lambda )\, Q_{\tau, m}(\lambda -\eta ),
    \qquad
    \forall m\in \{ 1,\ldots,R\} ,
     \\ 
  &\sum_{j=1}^{2R+1}\lambda ^{j-1}\, \mathcal{C}_{m+R,j}
    =g_{m}(\lambda ),
    \qquad\forall m\in \{ 1,\ldots,R\} ,
     \\ 
  &\sum_{j=1}^{2R+1}\lambda ^{j-1}\, \mathcal{C}_{2R+1,j}
    =h(\lambda ),
\end{align}
where the functions $h(\lambda)$ and $g_m(\lambda)$, $1\le m\le R$, are given as
\begin{align}
    &h(\lambda ) =\big( Q_\tau(\lambda )-Q_{\tau^\prime }(\lambda )\big)
                              \big( Q_{\tau^\prime }(\lambda -\eta )-Q_\tau(\lambda -\eta )\big) 
                              +Q_{\tau^\prime }(\lambda)\, Q_{\tau^\prime }(\lambda -\eta ), \\
    &g_{m}(\lambda ) =\prod_{j\neq m,j=1}^{2R}(\lambda -g_{j}).
\end{align}
Here $g_{j}$ are general complex numbers which have only to satisfy the
condition:
\begin{equation}
\det_{1\le j,k\le R} \big[ G_{j}(\lambda _{k})\big]\neq 0,
\quad\text{where}\quad 
G_{k}(\lambda)=g_{k}(\lambda )+\frac{d(\lambda )}{a(\lambda )}\, g_{k}(\lambda +\eta ).
\end{equation}
In its turn, the $(2R+1)\times (2R+1)$ matrix $\mathcal{X}$ has for elements
\begin{align}
   &\mathcal{X}_{j,k}=T_{j}(\lambda _{k}^{\prime }), \\
   &\mathcal{X}_{j,R+1+k}=T_{j}(\lambda _{k}{})-\delta _{j,2R+1}\, H( \lambda_{k}) ,\\
   &\mathcal{X}_{j,R+1}=T_{j}(\xi _{n}{})+\delta _{j,2R+1}(\gamma-H(\xi _{n})),
\end{align}
for $k\in \{ 1,\dots,R \} $\ and $j\in \{ 1,\ldots,2R+1\} $,
where we have set
\begin{align}
   &T_{j}(\lambda )=\lambda ^{j-1}+\frac{d( \lambda ) }{a(\lambda ) }\, (\lambda +\eta )^{j-1},\\
   &H( \lambda )=h(\lambda )+\frac{d( \lambda ) }{a( \lambda ) }\, h(\lambda +\eta ).
\end{align}
We can write the matrix product of $\mathcal{C}$ and $\mathcal{X}$ in a block form
\begin{equation}
     \mathcal{C}\cdot \mathcal{X} =\mathcal{G}
     \equiv \begin{pmatrix}
     \mathcal{G}_{11} & \mathcal{G}_{12} & \mathcal{G}_{13}=0 \\ 
     \mathcal{G}_{21} & \mathcal{G}_{22} & \mathcal{G}_{23} \\ 
     \mathcal{G}_{31} & \mathcal{G}_{32} & \mathcal{G}_{33}=0
\end{pmatrix},
\end{equation}
where we have used that $\mathcal{C}_{m,2R+1}=0,$ for any $m\in \{ 1,\ldots,2R \}$.
Here $\mathcal{G}_{11}=\big(F_{j}(\lambda _{k}^{\prime })\big)$, $\mathcal{G}_{13}=\big(F_{j}(\lambda_{k})\big)$, $\mathcal{G}_{21}=\big(G_{j}(\lambda _{k}^{\prime })\big)$ and $\mathcal{G}_{23}=\big(G_{j}(\lambda _{k})\big)$ are $R\times R$ matrices;
$\mathcal{G}_{22}=\big(G_{j}(\xi _{n})\big)$ and  $\mathcal{G}_{12}=\big(F_{j}(\xi _{n})\big)$ are $R\times 1$ columns;
$\mathcal{G}_{31}=\big(K(\lambda _{k}^{\prime })\big)$ and $\mathcal{G}_{33}=\big(K(\lambda _{k}^{\prime })\big)$
are $1\times R$ rows; finally $\mathcal{G}_{32}=\gamma $. Here we have defined
\begin{align*}
    &F_{j}(\lambda ) 
    =Q_{\tau,j}(\lambda )\left( Q_{\tau,j}(\lambda -\eta )
        +\frac{d(\lambda )}{a(\lambda )}\,Q_{\tau,j}(\lambda +\eta )\right) , \\
     &K(\lambda )  = Q_\tau(\lambda )
     \left( Q_{\tau^\prime }(\lambda -\eta )-Q_\tau(\lambda-\eta )
     +\frac{d(\lambda )}{a(\lambda )}\, \big[Q_{\tau^\prime }(\lambda +\eta)-Q_\tau(\lambda +\eta )\big]\right) .
\end{align*}
Note that we have
\begin{equation}
H( \lambda _{m}^{\prime } ) =K(\lambda _{m}^{\prime })\quad\forall m\in \{ 1,\ldots,R \} .
\end{equation}
If we now use the block determinant formula to compute the determinant of
the matrix $\mathcal{G}$ we obtain
\begin{multline}
     \frac{\det_{2R+1}\mathcal{G}}{V(\lambda _{1},\ldots,\lambda_{R})\, V(\lambda _{1}^{\prime },\ldots,\lambda _{R}^{\prime})\prod_{a=1}^{R}Q_\tau(\lambda _{a}^{\prime })}
     \\
     =\left( \bra{ Q_\tau}\, \sigma _{n}^{-}\,\ket{Q_{\tau^\prime } } 
-\bra{\widehat{\Psi}(\{\la\})}\, \sigma _{n}^{z}\, \ket{\widehat{\Psi}(\{\la'\})} 
\right) \det_{1\le j,k\le R}\big[ G_{j}(\lambda _{k})\big],
\end{multline}
once we observe that,  thanks to the Bethe equations 
$$a(\lambda _{m}^{\prime })\,Q_{\tau^\prime }(\lambda_{m}^{\prime }-\eta )=d(\lambda _{m}^{\prime })\,Q_{\tau^\prime }(\lambda_{m}^{\prime }+\eta ),$$
 one has
\begin{equation}
     K(\lambda _{m}^{\prime })
     =Q_\tau(\lambda _{m}^{\prime })\left( 2Q_{\tau^\prime}(\lambda _{m}^{\prime }-\eta )
      -Q_\tau(\lambda _{m}^{\prime }-\eta )
      -\frac{d(\lambda _{m}^{\prime })}{a(\lambda_{m}^{\prime })}\, Q_\tau(\lambda _{m}^{\prime }+\eta )\right) .
\end{equation}

This leads to a representation of the quantity \eqref{diff-1} that we want to compute in terms of the product of the two determinants of $\mathcal{C}$ and $\mathcal{X}$:
\begin{multline}\label{blabla}
   \bra{ Q_\tau}\, \sigma _{n}^{-}\,\ket{Q_{\tau^\prime } } 
-\bra{\widehat{\Psi}(\{\la\})}\, \sigma _{n}^{z}\, \ket{\widehat{\Psi}(\{\la'\})} \\
 =\frac{\det_{2R+1}\mathcal{C} \ \det_{2R+1}\mathcal{X} }{V(\lambda _{1},\ldots,\lambda
_{R})\, V(\lambda _{1}^{\prime },\ldots,\lambda _{R}^{\prime})
  \prod_{a=1}^{R}Q_\tau(\lambda _{a}^{\prime })\ \det_{1\le j,k\le R}\big[ G_{j}(\lambda _{k})\big]}.
\end{multline}

We shall now show that
\begin{equation}
   \det_{2R+1}\mathcal{X}=0,
\end{equation}
which will conclude our proof.
In order to do so, let us introduce the $(2R+1)\times(2R+1)$ matrix $\mathcal{B}$ with elements given by the coefficients of the following polynomials:
\begin{align}
  &\sum_{j=1}^{2R+1}\lambda ^{j-1}\, \mathcal{B}_{m,j}
    =Q_{\tau, m}(\lambda )\, Q_{\tau, m}(\lambda -\eta ),
    \qquad
    \forall m\in \{ 1,\ldots,R\} ,
     \\ 
  &\sum_{j=1}^{2R+1}\lambda ^{j-1}\, \mathcal{B}_{m+R,j}
    =Q_{\tau, m}(\lambda )\, Q_{\tau}(\lambda -\eta ),
    \qquad\forall m\in \{ 1,\ldots,R\} ,
     \\ 
  &\sum_{j=1}^{2R+1}\lambda ^{j-1}\, \mathcal{B}_{2R,j} =Q_\tau(\lambda)\, Q_\tau(\lambda-\eta)+h(\lambda),
  \\
  &\sum_{j=1}^{2R+1}\lambda ^{j-1}\, \mathcal{B}_{2R+1,j} = h(\lambda).
\end{align}
Note that this matrix has a non zero determinant:
\begin{equation}
\det_{2R+1}\mathcal{B}
 =\frac{\left( -1\right) ^{R}}{\eta }\, h(\lambda _{R})\, V(\{\lambda _{i}\})\, V(\{\lambda _{i}+\eta\})\,
    \prod_{m=1}^{R-1}Q_{\tau, m}(\lambda _{m}-\eta )\neq 0.
\end{equation}
Now we can compute the matrix product
\begin{equation}
   \widetilde{\mathcal{G}}\equiv\mathcal{B}\cdot \mathcal{X}
   =\begin{pmatrix}
\mathcal{G}_{11} & \mathcal{G}_{12} & \mathcal{G}_{13}=0 \\ 
 \widetilde{\mathcal{G}}_{21} &  \widetilde{\mathcal{G}}_{22} &  \widetilde{\mathcal{G}}_{23} \\ 
\mathcal{G}_{31} & \mathcal{G}_{32} & \mathcal{G}_{33}=0
\end{pmatrix},
\end{equation}
where we need to precise only the $R\times R$ matrices $ \widetilde{\mathcal{G}}_{23}$:
\begin{equation*}
\left(  \widetilde{\mathcal{G}}_{23}\right) _{j,k}
=(1-\delta _{j,R})\, \delta _{j,k}\, Q_{\tau,j}(\lambda_{j})\, Q_{\tau, j}(\lambda _{j}-\eta ),
\quad\forall j,k\in \{1,\ldots,R\} .
\end{equation*}
Indeed, the $\mathcal{G}$ blocks are defined above, whereas the $R\times R$ matrix $ \widetilde{\mathcal{G}}_{21}$ and the $R\times 1$ column $ \widetilde{\mathcal{G}}_{22}$ have no influence on the computation of the
determinant of $ \widetilde{\mathcal{G}}$. It is then simple to show that, after some row and
column exchange, we can write a new $(2R+1)\times (2R+1)$ matrix such that
\begin{equation}
\det_{2R+1}  \widetilde{\mathcal{G}} =\det_{2R+1} \widehat{\mathcal{G}}
=\det_{R+2}\mathcal{A}\ \det_{R-1} \widehat{\mathcal{G}}_{23},
\quad \text{with} \quad 
\widehat{\mathcal{G}}= 
\begin{pmatrix}
\mathcal{A} & \mathcal{A}'=0 \\ 
\mathcal{A}'' & \widehat{\mathcal{G}}_{23}
\end{pmatrix}
\end{equation}
where $\mathcal{A}$ is a $(R+2)\times (R+2)$ matrix, $\mathcal{A}'$ is a $(R+2)\times (R-1)$ matrix, $\mathcal{A}''$
is a $(R-1)\times (R+2)$ matrix, and $\widehat{\mathcal{G}}_{23}$ is a $(R-1)\times (R-1)$ diagonal
invertible matrix. In particular, we have defined
\begin{align}
   &\big[ \widehat{\mathcal{G}}_{23}\big] _{j,k} 
   =\delta _{j,k}\, Q_{\tau, j}(\lambda_{j})\, Q_{\tau, j}(\lambda _{j}-\eta ),
   \quad\forall j,k\in \{1,\ldots,R-1\} , \\
  &\mathcal{A} 
  =
\begin{pmatrix}
   \mathcal{G}_{11} & \mathcal{G}_{12} & 0_{R\times 1} \\ 
   \mathcal{G}_{31} & \gamma  & 0 \\ 
\left( \widetilde{\mathcal{G}}_{21}\right) _{R} & \left( \widetilde{\mathcal{G}}_{22}\right) _{R} & 0
\end{pmatrix} ,
\end{align}
where $\left( \widetilde{\mathcal{G}}_{21}\right) _{R}$ and $\left( \widetilde{\mathcal{G}}_{22}\right) _{R}$ are
respectively the last row and the last element of $\widetilde{\mathcal{G}}_{21}$ and $\widetilde{\mathcal{G}}_{22}$.
This proves our statement as $\det_{R+2}\mathcal{A}=0$.



\begin{thebibliography}{10}

\bibitem{Bet31}
H.~Bethe.
\newblock Z{\"u}r {T}heorie der {M}etalle {I}. {E}igenwerte und
  {E}igenfunktionen {A}tomkete.
\newblock {\em Zeitschrift f{\"u}r Physik}, 71:205--226, 1931.

\bibitem{McCW73L}
B.~M. McCoy and T.~T. Wu.
\newblock {\em The Two-Dimensional {l}sing Model}.
\newblock Harvard Univ. Press, Cambridge, Massachusetts, 1973.

\bibitem{Bax82L}
R.~J. Baxter.
\newblock {\em Exactly solved models in statistical mechanics}.
\newblock Academic Press, London, 1982.

\bibitem{Gau83L}
M.~Gaudin.
\newblock {\em La fonction d'onde de {B}ethe}.
\newblock Masson, 1983.

\bibitem{FadT87L}
L.~D. Faddeev and L.~A. Takhtajan.
\newblock {\em Hamiltonian methods in the theory of solitons}.
\newblock Springer-Verlag, 1987.

\bibitem{Mat93L}
D.~C. Mattis.
\newblock {\em The Many Body problem: An Encyclopedia of Exactly Solved Models
  in One Dimension}.
\newblock World Scientific, Singapore, 1993.

\bibitem{Sut04L}
B.~Sutherland.
\newblock {\em Beautiful models}.
\newblock World Scientific, 2004.

\bibitem{Fad82}
L.~D. Faddeev.
\newblock Integrable models in $(1 + 1)$-dimensional quantum field theory.
\newblock In J.~B. Zuber and R.~Stora, editors, {\em Les Houches 1982, Recent
  advances in field theory and statistical mechanics}, pages 561--608. Elsevier
  Science Publ., 1984.

\bibitem{Fad96}
L.~D. Faddeev.
\newblock How algebraic {B}ethe ansatz works for integrable model.
\newblock hep-th/9605187, 1996.

\bibitem{Bei12}
Beisert~N. et~al.
\newblock {Review of AdS/CFT Integrability: An Overview}.
\newblock {\em Lett. Math. Phys.}, 3:99, 2012.

\bibitem{SklF78}
E.~K. Sklyanin and L.~D. Faddeev.
\newblock Quantum mechanical approach to completely integrable field theory
  models.
\newblock {\em Sov. Phys. Dokl.}, 23:902--904, 1978.

\bibitem{FadST79}
L.~D. Faddeev, E.~K. Sklyanin, and L.~A. Takhtajan.
\newblock Quantum inverse problem method {I}.
\newblock {\em Theor. Math. Phys.}, 40:688--706, 1979.
\newblock Translated from Teor. Mat. Fiz. 40 (1979) 194-220.

\bibitem{Skl79}
E.~K. Sklyanin.
\newblock Method of the inverse scattering problem and the non-linear quantum
  {S}chr\"odinger equation.
\newblock {\em Sov. Phys. Dokl.}, 24:107--109, 1979.

\bibitem{KulS82}
P.~P. Kulish and E.~K. Sklyanin.
\newblock Quantum spectral transform method. {R}ecent developments.
\newblock {\em Lectures Notes in Physics}, 151:61--119, 1982.

\bibitem{Yan67}
C.~N. Yang.
\newblock Some exact results for the many-body problem in one dimension with
  repulsive delta-function interaction.
\newblock {\em Phys. Rev. Lett.}, 19:1312--1314, 1967.

\bibitem{Bax71b}
R.~J. Baxter.
\newblock One-dimensional anisotropic {H}eisenberg chain.
\newblock {\em Phys. Rev. Lett.}, 26:834, 1971.

\bibitem{Bax71a}
R.~J. Baxter.
\newblock Eight-vertex model in lattice statistics.
\newblock {\em Phys. Rev. Lett.}, 26:832--833, 1971.

\bibitem{Jim85}
M.~Jimbo.
\newblock A q-difference analogue of ${U}(g)$ and the {Y}ang-{B}axter equation.
\newblock {\em Lett. Math. Phys.}, 10:63--69, 1985.

\bibitem{Jim86}
M.~Jimbo.
\newblock A q-analogue of ${U}(gl({N}+1))$, {H}ecke algebra, and the
  {Y}ang-{B}axter equation.
\newblock {\em Lett. Math. Phys.}, 11:247--252, 1986.

\bibitem{KulR81}
P.~P. Kulish and N.~Yu. Reshetikhin.
\newblock Quantum linear problem for the sine-{G}ordon equation and higher
  representations.
\newblock {\em Zap. Nauch. Sem. LOMI}, 101:101--110, 1981.
\newblock Translation in J. Sov. Math. 23 (1983) 2435-41.

\bibitem{Dri85}
V.~G. Drinfel'd.
\newblock {H}opf algebras and the quantum {Y}ang-{B}axter equation.
\newblock {\em Soviet Math. Dokl.}, 32(1):254--258, 1985.

\bibitem{Dri87}
V.~G. Drinfel'd.
\newblock Quantum groups.
\newblock In {\em Proc. Internat. Congress of Math., Berkeley, USA, 1986},
  pages 798--820. AMS, 1987.

\bibitem{Skl85}
E.~K. Sklyanin.
\newblock The quantum {T}oda chain.
\newblock {\em Lectures Notes in Physics}, 226:196--233, 1985.

\bibitem{Skl95}
E.~K. Sklyanin.
\newblock Separation of variables. {N}ew trends.
\newblock {\em Prog. Theor. Phys.}, 118:35--60, 1995.
\newblock arXiv:solv-int/9504001.

\bibitem{Skl90}
E.~K. Sklyanin.
\newblock Functional {B}ethe {A}nsatz.
\newblock In B.A. Kupershmidt, editor, {\em Integrable and Superintegrable
  Systems}, pages 8--33. World Scientific, Singapore, 1990.

\bibitem{Skl92}
E.~K. Sklyanin.
\newblock Quantum inverse scattering method. {S}elected topics.
\newblock In Mo-Lin Ge, editor, {\em Quantum Group and Quantum Integrable
  Systems}, pages 63--97. Nankai Lectures in Mathematical Physics, World
  Scientific, 1992.
\newblock arXiv:hep-th/9211111.

\bibitem{NieWF09}
S.~Niekamp, T.~Wirth, and H.~Frahm.
\newblock The {XXZ} model with anti-periodic twisted boundary conditions.
\newblock {\em J. Phys. A: Math. Theor.}, 42:195008, 2009.

\bibitem{FraSW08}
H.~Frahm, A.~Seel, and T.~Wirth.
\newblock Separation of variables in the open {XXX} chain.
\newblock {\em Nucl. Phys. B}, 802:351, 2008.

\bibitem{BabBS96}
O.~Babelon, D.~Bernard, and F.~A. Smirnov.
\newblock Quantization of solitons and the restricted sine-{G}ordon model.
\newblock {\em Commun.Math.Phys.}, 182:319--354, 1996.

\bibitem{BabBS97}
O.~Babelon, D.~Bernard, and F.~A. Smirnov.
\newblock Null-vectors in integrable field theory.
\newblock {\em Commun.Math.Phys.}, 186:601, 1997.

\bibitem{Smi98}
F.~A. Smirnov.
\newblock Quasi-classical study of form factors in finite volume.
\newblock hep-th/9802132, 1998.

\bibitem{Bab04}
O.~Babelon.
\newblock Universal exchange algebra for {B}loch waves and {L}iouville theory.
\newblock {\em J. Phys. A}, 37:303, 2004.

\bibitem{vonGIPS06}
G.~von Gehlen, N.~Iorgov, S.~Pakuliak, and V.~Shadura.
\newblock {The Baxter--Bazhanov--Stroganov model: separation of variables and
  the Baxter equation}.
\newblock {\em J. Phys. A: Math. Theor.}, 39:7257, 2006.

\bibitem{vonGIPST07}
G.~von Gehlen, N.~Iorgov, S.~Pakuliak, V.~Shadura, and Yu. Tykhyy.
\newblock {Form-factors in the Baxter--Bazhanov--Stroganov model I: norms and
  matrix elements}.
\newblock {\em J. Phys. A: Math. Theor.}, 40:14117, 2007.

\bibitem{vonGIPST08}
G.~von Gehlen, N.~Iorgov, S.~Pakuliak, V.~Shadura, and Yu. Tykhyy.
\newblock {Form-factors in the Baxter--Bazhanov--Stroganov model II: Ising
  model on the finite lattice}.
\newblock {\em J. Phys. A: Math. Theor.}, 41:095003, 2008.

\bibitem{vonGIPS09}
G.~von Gehlen, N.~Iorgov, S.~Pakuliak, and V.~Shadura.
\newblock {Factorized finite-size Ising model spin matrix elements from
  separation of variables}.
\newblock {\em J. Phys. A: Math. Theor.}, 42:304026, 2009.

\bibitem{DerKM03b}
S.~E. Derkachov, G.~P. Korchemsky, and A.~N. Manashov.
\newblock Baxter {Q}-operator and separation of variables for the open
  {SL(2,$\mathbb{R}$)} spin chain.
\newblock {\em JHEP}, 10:053, 2003.

\bibitem{DerKM03}
S.~E. Derkachov, G.~P. Korchemsky, , and A.~N. Manashov.
\newblock Separation of variables for the quantum {SL(2,$\mathbb{ R}$)} spin
  chain.
\newblock {\em JHEP}, 07:047, 2003.

\bibitem{DerKM01}
S.~E. Derkachov, G.P. Korchemsky, and A.~N. Manashov.
\newblock {Noncompact Heisenberg spin magnets from high--energy QCD. I. Baxter
  Q--operator and separation of variables}.
\newblock {\em Nucl. Phys. B}, 617:375--440, 2001.

\bibitem{BytT06}
A.~Bytsko and J.~Teschner.
\newblock {Quantization of models with non--compact quantum group symmetry.
  Modular XXZ magnet and lattice sinh--Gordon model}.
\newblock {\em J. Phys. A}, 39:12927, 2006.

\bibitem{NicT10}
G.~Niccoli and J.~Teschner.
\newblock {The Sine-Gordon model revisited I}.
\newblock {\em J. Stat. Mech.}, page P09014, 2010.

\bibitem{Nic10a}
G.~Niccoli.
\newblock {Reconstruction of Baxter Q-operator from Sklyanin SOV for cyclic
  representations of integrable quantum models}.
\newblock {\em Nucl. Phys. B}, 835:263--283, 2010.

\bibitem{Nic11}
G.~Niccoli.
\newblock Completeness of {B}ethe {A}nsatz by sklyanin {SOV} for cyclic
  representations of integrable quantum models.
\newblock {\em JHEP}, 03:123, 2011.

\bibitem{Nic12}
G.~Niccoli.
\newblock Non-diagonal open spin-1/2 {XXZ} quantum chains by separation of
  variables: Complete spectrum and matrix elements of some quasi-local
  operators.
\newblock {\em J. Stat. Mech.}, page P10025, 2012.

\bibitem{GroN12}
N.~Grosjean and G.~Niccoli.
\newblock {The $\tau_2$-model and the chiral Potts model revisited:
  completeness of Bethe equations from Sklyanin's SOV method}.
\newblock {\em J. Stat. Mech.}, page P11005, 2012.

\bibitem{Nic13}
G.~Niccoli.
\newblock Antiperiodic spin-1/2 {XXZ} quantum chains by separation of
  variables: Complete spectrum and form factors.
\newblock {\em Nucl. Phys. B}, 870:397--420, 2013.
\newblock arXiv:1205.4537.

\bibitem{Nic13a}
G.~Niccoli.
\newblock An antiperiodic dynamical six-vertex model: {I}. {C}omplete spectrum
  by {SOV}, matrix elements of the identity on separate states and connections
  to the periodic eight-vertex model.
\newblock {\em J. Phys. A: Math. Theor.}, 46:075003, 2013.
\newblock arXiv:1207.1928.

\bibitem{Nic13b}
G.~Niccoli.
\newblock Form factors and complete spectrum of {XXX} antiperiodic higher spin
  chains by quantum separation of variables.
\newblock {\em J. Math. Phys.}, page 053516, 2013.

\bibitem{FalN14}
S.~Faldella and G.~Niccoli.
\newblock {SOV} approach for integrable quantum models associated with general
  representations on spin-1/2 chains of the 8-vertex reflection algebra.
\newblock {\em J. Phys. A: Math. Theor.}, 47:115202, 2014.

\bibitem{FalKN14}
S.~Faldella, N.~Kitanine, and G.~Niccoli.
\newblock Complete spectrum and scalar products for the open spin-1/2 {XXZ}
  quantum chains with non-diagonal boundary terms.
\newblock {\em J. Stat. Mech.}, page P01011, 2014.

\bibitem{NicT15}
G.~Niccoli and V.~Terras.
\newblock Antiperiodic {XXZ} chains with arbitrary spins: Complete eigenstate
  construction by functional equations in separation of variables.
\newblock {\em Lett. Math. Phys.}, pages DOI 10.1007/s11005--015--0759--9,
  2015.
\newblock arXiv:1411.6488.

\bibitem{TarV95}
V.~Tarasov and A.~Varchenko.
\newblock Completeness of {B}ethe vectors and difference equations with regular
  singular points.
\newblock {\em Int. Math. Research Notices}, 1995(13):637--669, 1995.

\bibitem{MukTV09}
E.~Mukhin, V.~O. Tarasov, and A.~Varchenko.
\newblock Bethe algebra of homogeneous {XXX} {H}eisenberg model has simple
  spectrum.
\newblock {\em Comm. Math. Phys.}, 288(1):1--42, 2009.

\bibitem{Smi92L}
F.~A. Smirnov.
\newblock {\em Form factors in completely integrable models of quantum field
  theory}.
\newblock World Scientific, Singapore, 1992.

\bibitem{BogIK93L}
V.~E. Korepin, N.~M. Bogoliubov, and A.~G. Izergin.
\newblock {\em Quantum inverse scattering method and correlation functions}.
\newblock Cambridge University Press, 1993.

\bibitem{JimM95L}
M.~Jimbo and T.~Miwa.
\newblock {\em Algebraic analysis of solvable lattice models}.
\newblock Number~85 in CBMS Regional Conference Series in Mathematics. AMS,
  Providence, RI, 1995.

\bibitem{KitMT99}
N.~Kitanine, J.~M. Maillet, and V.~Terras.
\newblock Form factors of the {XXZ} {H}eisenberg spin-1/2 finite chain.
\newblock {\em Nucl. Phys. B}, 554:647--678, 1999.

\bibitem{KitMT00}
N.~Kitanine, J.~M. Maillet, and V.~Terras.
\newblock Correlation functions of the {XXZ} {H}eisenberg spin-1/2 chain in a
  magnetic field.
\newblock {\em Nucl. Phys. B}, 567:554--582, 2000.

\bibitem{KitMST02a}
N.~Kitanine, J.~M. Maillet, N.~A. Slavnov, and V.~Terras.
\newblock Spin-spin correlation functions of the {XXZ}-1/2 {H}eisenberg chain
  in a magnetic field.
\newblock {\em Nucl. Phys. B}, 641:487--518, 2002.

\bibitem{KitMST05b}
N.~Kitanine, J.~M. Maillet, N.~A. Slavnov, and V.~Terras.
\newblock Dynamical correlation functions of the {XXZ} spin-1/2 chain.
\newblock {\em Nucl. Phys. B}, 729:558--580, 2005.
\newblock hep-th/0407108.

\bibitem{BooJMST07}
H.~Boos, M.~Jimbo, T.~Miwa, F.~Smirnov, and Y.~Takeyama.
\newblock {Hidden Grassmann structure in the XXZ model}.
\newblock {\em Comm. Math. Phys.}, 272:263--281, 2007.

\bibitem{KitKMNST07}
N.~Kitanine, K.~K. Kozlowski, J.~M. Maillet, G.~Niccoli, N.~A. Slavnov, and
  V.~Terras.
\newblock {Correlation functions of the open XXZ chain: I}.
\newblock {\em J. Stat. Mech. Theory Exp.}, page P10009, 2007.

\bibitem{KitKMNST08}
N.~Kitanine, K.~K. Kozlowski, J.~M. Maillet, G.~Niccoli, N.~A. Slavnov, and
  V.~Terras.
\newblock {Correlation functions of the open XXZ chain: II}.
\newblock {\em J. Stat. Mech. Theory Exp.}, page P07010, 2008.

\bibitem{BooJMST09}
H.~Boos, M.~Jimbo, T.~Miwa, F.~Smirnov, and Y.~Takeyama.
\newblock {Hidden Grassmann Structure in the XXZ Model II: Creation Operators}.
\newblock {\em Comm. Math. Phys.}, 286:875--932, 2009.

\bibitem{MaiT00}
J.~M. Maillet and V.~Terras.
\newblock On the quantum inverse scattering problem.
\newblock {\em Nucl. Phys. B}, 575:627--644, 2000.
\newblock hep-th/9911030.

\bibitem{IzeKMT99}
A.~G. Izergin, N.~Kitanine, J.~M. Maillet, and V.~Terras.
\newblock Spontaneous magnetization of the {XXZ} {H}eisenberg spin-1/2 chain.
\newblock {\em Nucl. Phys. B}, 554:679--696, 1999.
\newblock solv-int/9812021.

\bibitem{KitMST05a}
N.~Kitanine, J.~M. Maillet, N.~A. Slavnov, and V.~Terras.
\newblock Master equation for spin-spin correlation functions of the {XXZ}
  chain.
\newblock {\em Nucl. Phys. B}, 712:600--622, 2005.
\newblock hep-th/0406190.

\bibitem{KitKMST09b}
N.~Kitanine, K.~K. Kozlowski, J.~M. Maillet, N.~A. Slavnov, and V.~Terras.
\newblock Algebraic {B}ethe ansatz approach to the asymptotic behavior of
  correlation functions.
\newblock {\em J. Stat. Mech. Theory Exp.}, page P04003, 2009.

\bibitem{KitKMST09c}
N.~Kitanine, K.~K. Kozlowski, J.~M. Maillet, N.~A. Slavnov, and V.~Terras.
\newblock On the thermodynamic limit of form factors in the massless {XXZ}
  {H}eisenberg chain.
\newblock {\em J. Math. Phys.}, 50:095209, 2009.

\bibitem{KitKMST11b}
N.~Kitanine, K.~K. Kozlowski, J.~M. Maillet, N.~A. Slavnov, and V.~Terras.
\newblock A form factor approach to the asymptotic behavior of correlation
  functions in critical models.
\newblock {\em J. Stat. Mech. Theory Exp.}, page P12010, 2011.
\newblock arXiv:1110.0803.

\bibitem{KitKMST11a}
N.~Kitanine, K.~K. Kozlowski, J.~M. Maillet, N.~A. Slavnov, and V.~Terras.
\newblock {The thermodynamic limit of particle-hole form factors in the
  massless XXZ Heisenberg chain}.
\newblock {\em J. Stat. Mech.: Theory Exp.}, page P05028, 2011.

\bibitem{KitKMT14}
N.~Kitanine, K.~K. Kozlowski, J.~M. Maillet, and V.~Terras.
\newblock Large-distance asymptotic behaviour of multi-point correlation
  functions in massless quantum models.
\newblock {\em J. Stat. Mech.}, page P05011, 2014.

\bibitem{KozM15}
K.~K. Kozlowski and J.~M. Maillet.
\newblock Microscopic approach to a class of 1d quantum critical models.
\newblock arXiv:1501.07711, 2015.

\bibitem{KozMS10u}
K.~K. Kozlowski, J.~M. Maillet, and N.~A. Slavnov.
\newblock Long-distance behavior of temperature correlation functions of the
  quantum one-dimensional {B}ose gas.
\newblock {\em J. Stat. Mech.: Theory Exp.}, page P03018, 2011.
\newblock arXiv:1011.3149.

\bibitem{KozMS11a}
K.~K. Kozlowski, J.~M. Maillet, and N.~A. Slavnov.
\newblock Correlation functions for one-dimensional bosons at low temperature.
\newblock {\em J. Stat. Mech.: Theory Exp.}, page P03019, 2011.

\bibitem{KozT11}
K.~K. Kozlowski and V.~Terras.
\newblock Long-time and large-distance asymptotic behavior of the
  current-current correlators in the non-linear {S}chr\"odinger model.
\newblock {\em J. Stat. Mech.: Theory Exp.}, page P09013, 2011.
\newblock arXiv:1101.0844.

\bibitem{KitKMST12}
N.~Kitanine, K.~K. Kozlowski, J.~M. Maillet, N.~A. Slavnov, and V.~Terras.
\newblock Form factor approach to dynamical correlation functions in critical
  models.
\newblock {\em J. Stat. Mech. Theory Exp.}, page P09001, 2012.
\newblock arXiv:1206.2630.

\bibitem{DugGK13}
M.~Dugave, F.~G\"ohmann, and K.~K. Kozlowski.
\newblock Thermal form factors of the {XXZ} chain and the large-distance
  asymptotics of its temperature dependent correlation functions.
\newblock {\em J. Stat. Mech.}, page P07010, 2013.

\bibitem{DugGK14}
M.~Dugave, F.~G\"ohmann, and K.~K. Kozlowski.
\newblock Low-temperature large-distance asymptotics of the transversal
  two-point functions of the {XXZ} chain.
\newblock {\em J. Stat. Mech.}, page P04012, 2014.

\bibitem{CauHM05}
J.~S. Caux, R.~Hagemans, and J.~M. Maillet.
\newblock Computation of dynamical correlation functions of {H}eisenberg
  chains: the gapless anisotropic regime.
\newblock {\em J. Stat. Mech. Theory Exp.}, page P09003, 2005.

\bibitem{CauM05}
J.~S. Caux and J.~M. Maillet.
\newblock Computation of dynamical correlation functions of {H}eisenberg chains
  in a magnetic field.
\newblock {\em Phys. Rev. Lett.}, 95:077201, 2005.

\bibitem{Ize87}
A.~G. Izergin.
\newblock Partition function of the six-vertex model in a finite volume.
\newblock {\em Sov. Phys. Dokl.}, 32:878--879, 1987.

\bibitem{Sla89}
N.~A. Slavnov.
\newblock Calculation of scalar products of wave functions and form factors in
  the framework of the algebraic {B}ethe {A}nsatz.
\newblock {\em Theor. Math. Phys.}, 79:502--508, 1989.

\bibitem{KitKMST09a}
N.~Kitanine, K.~K. Kozlowski, J.~M. Maillet, N.~A. Slavnov, and V.~Terras.
\newblock Riemann-{H}ilbert approach to a generalized sine kernel and
  applications.
\newblock {\em Comm. Math. Phys.}, 291:691--761, 2009.

\bibitem{KitKMST10u}
N.~Kitanine, K.~K. Kozlowski, J.~M. Maillet, N.~A. Slavnov, and V.~Terras.
\newblock {Thermodynamic limit of particle-hole form factors in the massless
  XXZ Heisenberg chain}.
\newblock arXiv:1003.4557.

\bibitem{GroMN12}
N.~Grosjean, J.~M. Maillet, and G.~Niccoli.
\newblock On the form factors of local operators in the lattice sine-{G}ordon
  model.
\newblock {\em J. Stat. Mech.: Theory Exp.}, page P10006, 2012.

\bibitem{GroMN14}
N.~Grosjean, J.~M. Maillet, and G.~Niccoli.
\newblock {On the form factors of local operators in the Bazhanov-Stroganov and
  chiral Potts models}.
\newblock {\em Annales Henri Poincar{\'e}}, 16:1103, 2015.

\bibitem{LevNT15}
D.~Levy-Bencheton, G.~Niccoli, and V.~Terras.
\newblock Antiperiodic dynamical 6-vertex model by separation of variables
  {II}: Functional equation and form factors.
\newblock in preparation.

\bibitem{Kos120}
I.~Kostov.
\newblock Classical limit of the three-point function of n= 4 supersymmetric
  yang-mills theory from integrability.
\newblock {\em Phys. Rev. Letters}, 26:261604, 2012.

\bibitem{Kos12}
I.~Kostov.
\newblock Three-point function of semiclassical states at weak coupling.
\newblock {\em J. Phys. A: Math. Theor.}, 45:494018, 2012.

\bibitem{KosM12}
I.~Kostov and Y.~Matsuo.
\newblock Inner products of {B}ethe states as partial domain wall partition
  function.
\newblock {\em JHEP}, 10:168, 2012.

\bibitem{FodW12}
O.~Foda and M.~Wheeler.
\newblock Variations on {S}lavnov's scalar product.
\newblock {\em JHEP}, 10:096, 2012.

\bibitem{Garb14}
A.~Garbali.
\newblock {The scalar product of XXZ spin chain revisited. Application to the
  ground state at $\Delta=1/2$}.
\newblock arXiv:1411.2938.

\bibitem{Ter99}
V.~Terras.
\newblock Drinfel'd twists and functional {B}ethe {A}nsatz.
\newblock {\em Lett. Math. Phys.}, 48:263--276, 1999.
\newblock math-ph/9902009.

\bibitem{MaiS00}
J.~M. Maillet and J.~Sanchez~de Santos.
\newblock Drinfel'd twists and algebraic {B}ethe {A}nsatz.
\newblock In {\em L. D. Faddeev's Seminar on Mathematical Physics}, pages
  137--178. Amer. Math. Soc. Transl. Ser. 2, 201, Amer. Math. Soc., Providence,
  RI, 2000.
\newblock q-alg/9612012.

\bibitem{KitMN14}
N.~Kitanine, J.~M. Maillet, and G.~Niccoli.
\newblock Open spin chains with generic integrable boundaries: Baxter equation
  and {B}ethe ansatz completeness from separation of variables.
\newblock {\em J. Stat. Mech.}, page P05015, 2014.

\bibitem{BetK14}
E.~Bettelheim and I.~Kostov.
\newblock Semi-classical analysis of the inner product of {B}ethe states.
\newblock {\em J. Phys A: Math. Theor.}, 24:245401, 2014.

\end{thebibliography}

\end{document}